\newtheorem{theorem}{{\bf Theorem}}
\newtheorem{lemma}{{\bf Lemma}}
\newtheorem{corollary}{{\bf Corollary}}
\newtheorem{example}{{\bf Example}}
\newtheorem{definition}{\bf Definition}
\newtheorem{remark}{Remark}
\newcommand{\net}{\mathcal{N}}
\newcommand{\netnodes}{V}
\newcommand{\netedges}{E}
\newcommand{\compgraph}{\mathcal{G}}
\newcommand{\compnodes}{\Omega}
\newcommand{\compedges}{\Gamma}
\newcommand{\preedges}[1]{\Phi_{\uparrow}(#1)}
\newcommand{\sucedges}[1]{\Phi_{\downarrow}(#1)}
\newcommand{\prefunction}[1]{\Lambda_{\uparrow}(#1)}
\newcommand{\sucfunction}[1]{\Lambda_{\downarrow}(#1)}
\newcommand{\head}[1]{\mathsf{head}(#1)}
\newcommand{\tail}[1]{\mathsf{tail}(#1)}
\newcommand{\embedding}{\mathcal{E}}
\newcommand{\setofembeddings}{\mathbb{E}}
\newcommand{\rembedding}{\textit{R-Embedding}}
\newcommand{\setofpaths}{\Sigma}
\newcommand{\pathstart}[1]{\mathsf{start}(#1)}
\newcommand{\pathend}[1]{\mathsf{end}(#1)}
\newcommand{\maxcut}{\textit{SIMPLE MAX CUT}}
\newcommand{\mincut}{\textit{2-Cut}}
\newcommand{\mincostold}{\textsf{MinCost}$(\complement)$}
\newcommand{\mincostnew}{\textsf{MinCost}$(C)$}
\DeclareMathOperator*{\argmin}{arg\,min}
\begin{document}

\title{On the Maximum Rate of Networked Computation in a Capacitated Network}

\author{ \authorblockN{Pooja Vyavahare\authorrefmark{1}, Nutan
    Limaye\authorrefmark{2}, Ajit A. Diwan\authorrefmark{2}, D. Manjunath\authorrefmark{1}
    } \\
     \authorblockA{\authorrefmark{1}  Department of Electrical
    Engineering, IIT Bombay \\ 
    \{vpooja,dmanju\}@ee.iitb.ac.in} \\
    \authorblockA{\authorrefmark{2}  Department of Computer Science and 
    Engineering, IIT Bombay \\ 
    \{nutan,aad\}@cse.iitb.ac.in}
    \thanks{Pooja Vyavahare and D. Manjunath are affiliated with the Bharti Center for
Communications. Their work has been partially supported by grants from DST
and CEFIPRA. Pooja Vyavahare also received support from ITRA. Nutan Limaye is supported by grants from DST, DAAD and CEFIPRA.}

}
\maketitle
\begin{abstract}
Given a capacitated communication network $\net$ and a function $f$ that needs to be computed on $\net,$ we study the problem of generating a computation and communication schedule in $\net$ to maximize the rate of computation of $f$. Shah et. al.[IEEE Journal of Selected Areas in Communication, 2013] studied this problem when the computation schema $\compgraph$ for $f$ is a tree graph. We define the notion of a schedule when $\compgraph$ is a general DAG and show that finding an optimal schedule is equivalent to
finding the solution of a packing linear program. 
 
We prove that approximating the maximum rate is MAX SNP-hard by looking at the packing LP. For this packing LP we prove that solving the separation oracle of its dual is equivalent to solving the LP. The separation oracle of the dual reduces to the problem of finding \emph{minimum cost embedding} given $\net,\compgraph,$ which we prove to be MAX SNP-hard even when $\compgraph$ has bounded degree and bounded edge weights and $\net$ has just three vertices. We present a polynomial time algorithm to compute the maximum rate of function computation when $\net$ has two vertices by reducing the problem to a version of submodular function minimization problem.

For the general $\net$ we study restricted class of schedules and its equivalent packing LP. We observe that for this packing LP also the separation oracle of its dual reduces to finding minimum cost embedding. A version of this minimum cost embedding problem has been studied in literature and we relate our cost model with the one present in literature. We present a quadratic integer program for the minimum cost embedding problem and its linear programming relaxation based on earthmover  metric. Applying the randomized rounding techniques to the optimal solution of this LP we give approximate algorithms for some special class of graphs. We present constant factor approximation algorithms for maximum rate when $\compgraph$ is a bounded width layered graph and when it is a planar graph with bounded out-degree. We also present $O(D\log n)$-approximation algorithm for arbitrary DAG $\compgraph$ where $D$ is the maximum out-degree of a vertex in $\compgraph$ and $n$ is the number of vertices in $\net.$ We also prove that if a DAG has a spanning tree in which every edge is a part of $O(F)$ fundamental cycles then there is a $O(FD)$-approximation algorithm.

\begin{keywords}
In-network computation, 
maximum computation rate, 
minimum cost of computation, 
MAX-SNP hardness,
packing linear program.
\end{keywords}

\end{abstract}

\section{Introduction}
\label{sec:introduction}
Consider a classical network application, like search, which requires the assimilation of \textit{source} data available at various servers in order to generate the desired output at a particular server, called the \textit{sink}. This requires the data to be transmitted over the network of communication links connecting the servers and computation of a function of this data. \textit{In-network computation} enables the computation of partial functions of the data on the intermediate servers which may reduce the time (or cost, the number of transmissions) to get the final function value at the \textit{sink}. This situation arises in various other network applications like query processing on a network, and information processing in sensor network, and has been studied extensively, e.g., \cite{Giridhar05,Ying08,Liu13}. In this paper we consider the problem of finding the communication and in-network computation \textit{schedule} of a given arbitrary function of distributed data so as to maximize the \textit{rate} of computation. We give an example to explain our problem below.

\begin{example}
 \label{ex:intro}
Consider a network $\net$ shown in Fig.~\ref{fig:ex_rate}a with capacity of each edge being $1$ bit/second. Each source vertex $s_i$ has an infinite sequence of one bit data $\{x_i(k)\}_{k\geq0}.$ A sink vertex $t$ wants to compute a function $f_t(k)$ of this data where the sequence of computation $(\compgraph)$ is shown by Fig.~\ref{fig:ex_rate}b. Figs.~\ref{fig:ex_rate}c and d show two ways of computing $f_t$ on $net.$ In Fig.~\ref{fig:ex_rate}c all intermediate functions are computed inside $\net$ and $f_t$ is received at $1$ bit/second by $t.$ In Fig.~\ref{fig:ex_rate}d only $\omega_5$ is computed inside $\net$ and  $f_t$ is computed at $0.5$ bits/second rate.\footnote{As the communication link $(a,t)$ is used to transmit both $x_1(k),x_4(k),$ each of them are received at rate $0.5$ bits/second at $t.$} Using both the implementations \footnote{called as \textit{embeddings} in this paper} together, $f_t$ can be computed at $1.5$ bits/second. 
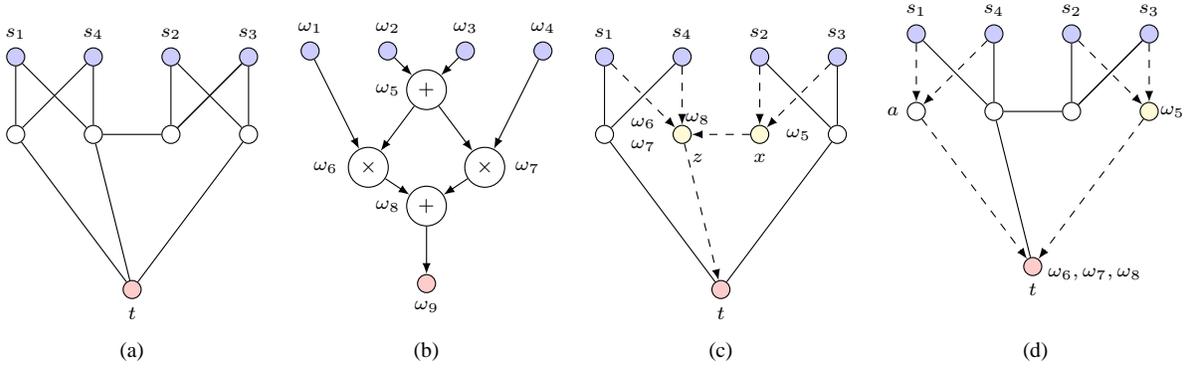
\begin{figure}
\begin{center}
    \resizebox{0.9\linewidth}{!}{
     \subfloat[]{
\begin{tikzpicture}[>=latex]
 \scriptsize
  \tikzstyle{every node} = [circle,draw=black]
 \node (a) at (-1.5,2) [fill=blue!20] {};
 \node (b) at (-0.5,2) [fill=blue!20] {};
 \node (c) at (0.5,2) [fill=blue!20] {};
 \node (d) at (1.5,2) [fill=blue!20] {}; 
 \node (e) at (-1.5,1)  {};
 \node (f) at (-0.5,1)  {};
 \node (g) at (0.5,1)  {};
 \node (h) at (1.5,1) {}; 
 \node (i) at (0,-1) [fill=red!20] {};  
  \node at (-1.5,2.3) [draw=none] {$s_1$};
  \node at (-0.5,2.3) [draw=none] {$s_4$};
   \node at (0.5,2.3) [draw=none] {$s_2$};
    \node at (1.5,2.3) [draw=none] {$s_3$};
    \node at (0,-1.3) [draw=none] {$t$}; 
  \draw[-] (a)-- (e) --(i);
  \draw[-] (a) -- (f) -- (i);
  \draw [-] (b) -- (e);
  \draw [-] (b) -- (f) --(g);
  \draw [-] (c)-- (g) --(d) --(h) --(i);
  \draw [-] (d) --(g);
  \draw[-] (c) -- (h);

 \end{tikzpicture}
}
\subfloat[]{
\begin{tikzpicture}[>=latex]
  \scriptsize
  \tikzstyle{every node} = [circle,draw=black]
 \node (a) at (-1.5,2) [fill=blue!20] {};
 \node (b) at (-0.5,2) [fill=blue!20] {};
 \node (c) at (0.5,2) [fill=blue!20] {};
 \node (d) at (1.5,2) [fill=blue!20] {}; 
 \node (e) at (0,1.5) {$+$};
 \node (f) at (-0.75,0.5) {$\times$};
 \node (g) at (0.75,0.5) {$\times$};
 \node (h) at (0,0) {$+$};
 \node (i) at (0,-1) [fill=red!20] {};  
  \node at (-1.5,2.3) [draw=none] {$\omega_1$};
  \node at (-0.5,2.3) [draw=none] {$\omega_2$};
   \node at (0.5,2.3) [draw=none] {$\omega_3$};
    \node at (1.5,2.3) [draw=none] {$\omega_4$};
    \node at (0,-1.3) [draw=none] {$\omega_9$}; 
    \node at (-0.5,1.5) [draw=none] {$\omega_5$};
     \node at (-1.3,0.5) [draw=none] {$\omega_6$};
      \node at (1.3,0.5) [draw=none] {$\omega_7$};
       \node at (-0.5,0) [draw=none] {$\omega_8$};
 \draw [->] (a) -- (f) ;
 \draw [->] (b) -- (e);
 \draw [->] (c) -- (e);  
 \draw [->] (d) -- (g) ;
 \draw [->] (e) -- (f) ;
 \draw [->] (e) -- (g) ;
 \draw [->] (f) -- (h);
 \draw [->] (g) -- (h);  
 \draw [->] (h) -- (i);

\end{tikzpicture}
} 
\subfloat[]{
\begin{tikzpicture}[>=latex]
 \scriptsize
  \tikzstyle{every node} = [circle,draw=black]
 \node (a) at (-1.5,2) [fill=blue!20] {};
 \node (b) at (-0.5,2) [fill=blue!20] {};
 \node (c) at (0.5,2) [fill=blue!20] {};
 \node (d) at (1.5,2) [fill=blue!20] {}; 
 \node (e) at (-1.5,1)  {};
 \node (f) at (-0.5,1) [fill=yellow!20]  {};
 \node (g) at (0.5,1) [fill=yellow!20] {};
 \node (h) at (1.5,1) {}; 
 \node (i) at (0,-1) [fill=red!20] {};  
  \node at (-1.5,2.3) [draw=none] {$s_1$};
  \node at (-0.5,2.3) [draw=none] {$s_4$};
   \node at (0.5,2.3) [draw=none] {$s_2$};
    \node at (1.5,2.3) [draw=none] {$s_3$};
    \node at (0,-1.3) [draw=none] {$t$}; 
    \node at (-1,1) [draw=none,align=left] {$\omega_6$\\$\omega_7$};
    \node at (-0.3,1.2) [draw=none] {$\omega_8$};
    \node at (1,1) [draw=none] {$\omega_5$};
     \node at (0.5,0.7) [draw=none]{$x$};
    \node at (-0.3,0.7) [draw=none] {$z$};
  \draw[-] (a)-- (e) --(i);
  \draw [-] (b) -- (e);
   \draw [-] (d) -- (h) --(i);
   \draw[-] (c) -- (h);
 
 \draw [->,dashed] (a) -- (f);
 \draw [->,dashed] (f) -- (i);
  \draw [->,dashed] (b) -- (f);
  \draw [->,dashed] (g) -- (f);
  \draw [->,dashed] (c) -- (g);
  \draw [->,dashed] (d) -- (g);
 \end{tikzpicture}
}
 \subfloat[]{
\begin{tikzpicture}[>=latex]
 \scriptsize
  \tikzstyle{every node} = [circle,draw=black]
 \node (a) at (-1.5,2) [fill=blue!20] {};
 \node (b) at (-0.5,2) [fill=blue!20] {};
 \node (c) at (0.5,2) [fill=blue!20] {};
 \node (d) at (1.5,2) [fill=blue!20] {}; 
 \node (e) at (-1.5,1)   {};
 \node (f) at (-0.5,1)  {};
 \node (g) at (0.5,1)  {};
 \node (h) at (1.5,1) [fill=yellow!20] {}; 
 \node (i) at (0,-1) [fill=red!20] {};  
  \node at (-1.5,2.3) [draw=none] {$s_1$};
  \node at (-0.5,2.3) [draw=none] {$s_4$};
   \node at (0.5,2.3) [draw=none] {$s_2$};
    \node at (1.5,2.3) [draw=none] {$s_3$};
    \node at (0,-1.3) [draw=none] {$t$}; 
    \node at (1.8,1) [draw=none] {$\omega_5$};
    \node at (0.8,-1.1) [draw=none,align=left] {$\omega_6,\omega_7,\omega_8$};
    \node at (-1.8,1) [draw=none] {$a$};
  
    \draw[-] (a) -- (f) -- (i);

  \draw [-] (b) -- (f) --(g);
  \draw [-] (c)-- (g) --(d);
  \draw [-] (d) --(g);
  \draw[->,dashed] (c) -- (h);

 \draw[->,dashed] (a) -- (e);
 \draw[->,dashed] (e) -- (i);
 \draw [->,dashed] (b) -- (e);
 \draw[->,dashed] (d) -- (h);
 \draw [->,dashed] (h) --(i);
 \end{tikzpicture}
}
    }
    \caption{(a) Network graph $(\net)$ (b) Computation schema $(\compgraph)$ for $f_t = x_1(x_2+x_3) + x_4(x_2+x_3)$ (c) Implementation 1 computing $f_t$ at $1$ bits/second rate (d) Implementation 2 computing $f_t$ at $0.5$ bits/second rate}
    \label{fig:ex_rate}
\end{center}    
  \end{figure} 
\end{example}

A natural question to ask in this case is that given $\net,\compgraph$ which of all the possible embeddings to compute $f_t$ should one use to get the function at the maximum possible rate and how to schedule the data transfer over the communication links?

\subsection{Maximum Rate Computation Schedule}
\label{sec:rate_relatedwork}

 Recent interest in finding the maximum rate computation schedule is in the context of sensor networks and
distributed computation schemes like MapReduce and Dryad. Computation
of symmetric functions over multihop wireless sensor networks was
introduced in \cite{Giridhar05} and studied in several follow-up works, e.g.,
\cite{Dutta08,Khude05}. More recently, \cite{Kannan13} considered the
computation of such symmetric functions over arbitrary wireline
networks. The objective in the preceding works is, like in this paper, 
maximizing the computation rate. However, they restrict their
attention to symmetric functions which allows them to perform the
computation in an arbitrary order. Further, in
\cite{Giridhar05,Dutta08} the communication network is a random
multihop wireless network and the results are for the asymptotic
regime in the number of sources. While \cite{Kannan13} considers
wireline networks, they obtain outer bound on rate of computation. Authors in \cite{Kannan13} also describe Steiner tree
packing schemes that achieve rates which are are close to this outer bound by showing the approximation factor to be logarithmic
in the number of source nodes. Another line of work, e.g., \cite{Appusamy11,Rai12}, uses network coding techniques to maximize the rate of computation. We do not use network coding in our solution techniques.

The closest to the work in this paper is that of \cite{Shah13,Liu13}
both of which are interested in maximizing the computation rate of
general functions over capacitated networks. In \cite{Shah13},
the computation schema ($\compgraph$) for computing the function $f$ is assumed to be a tree. 
Tree structured $\compgraph$ allows the authors in \cite{Shah13} to obtain the optimum schedule via linear programs that preserve ``functional flow
conservation.''
The functional flow conservation concept of \cite{Shah13} is also used in \cite{Liu13}
when $\compgraph$ is a DAG to find the maximum rate of computation. They give a linear program to find maximum rate of computation and present a distributed algorithm to solve it using Lagrangian dual formulation but do not find the corresponding schedule. The functional flow conservation forces two restrictions on the computation schedule. Firstly, any function can be computed only once in $\net,$ and secondly, every edge of $\compgraph$ should be treated as unique function flow. \footnote{The outgoing edges of vertex $\omega_5$ in Fig.~\ref{fig:ex_rate}b are treated as different flows though they both represent the same function.} These restrictions limit the class of allowable schedules which makes the rate achieved in \cite{Liu13} sub-optimal. 

The problem of \textit{collecting} data at the sink from various sources can be represented by a tree structured computation schema $\compgraph$ where all the source nodes are at the leaves and are connected to the root (acting as sink) directly. Thus an optimal schedule to collect the data at sink can be obtained by using the techniques of \cite{Shah13} which runs in polynomial time in the size of input graphs. This implies that the problem of \textit{optimal data collection} at a single sink is easy to solve. On the other hand, the problem of \textit{distribution} of data from one source to multiple sinks has been studied earlier, e.g., \cite{Jain03} under the name of fractional Steiner tree packing problem. This problem is proved to be MAX SNP-hard \cite{Jain03}. 

In this paper we consider the problem of finding optimal schedule when $\compgraph$ is a general DAG and there is only one sink node in the network.
We first formalize the notion of a schedule to compute a function $f$ over network $\net$ when $\compgraph$ is a DAG which does not have above mentioned restrictions. We define a \emph{routing-computing} scheme (and the rate achieved by it) that computes $f$ in a network (Section~\ref{sec:routingscheme}). 
We show that finding an \textit{optimal routing-computing} scheme is equivalent to finding the solution of a packing linear program of embeddings, which we call capacity achieving linear program (CALP) (Theorem~\ref{thm:equivalnce} in Section~\ref{sec:CALP}).

\subsection{Relating Max Rate to Min Cost Problem}
\label{rate_cost_relation}
Several measures of efficiency of in-network computation like the cost or delay in computation have been studied in the literature \cite{Ying08,Vyavahare14}. These measures may be used when there is only one data value available with each source and the function is computed only once. This is also known as \textit{one shot computation} of the function. In this case the edges of the network graph $\net$ do not represent capacities but have weights associated with them. The weight of an edge corresponds either to the delay incurred or the cost of transmission of a bit between two end points of the edge. 
 The authors in \cite{Vyavahare14} prove that finding \emph{minimum delay embedding} is NP-hard when $\compgraph$ is a DAG and present a polynomial time algorithm when $\compgraph$ is a tree. The problem of finding an \textit{embedding} for one-shot in-network computation which minimizes the cost has been studied under various names in the literature, e.g.,  \cite{Ying08,Bokhari81,Vyavahare14}. 

In this work we relate the complexity of finding the maximum rate schedule to that of finding the \emph{minimum cost embedding}. Specifically, we prove that approximating CALP below a constant factor is NP-hard unless P=NP and even when the degree of each vertex and weights on edges of $\compgraph$ are bounded and $\net$ has just three vertices (Theorem~\ref{thm:main}). This is proved by considering the dual of this LP (Section~\ref{sec:hardness}). We prove that
solving CALP is as hard as solving the separation oracle of its dual (Theorem~\ref{thm:approx_separation}). The separation oracle is a decision problem which reduces to a version of the \emph{minimum cost embedding} problem studied earlier for a different cost model in \cite{Vyavahare14} (defined in Section~\ref{sec:oldcost}). Our cost model comes naturally from the definition of routing-computing scheme for finding the maximum rate (Example~\ref{ex:newcost}).
We prove that our version of \emph{minimum cost embedding} problem is MAX SNP-hard even when $\compgraph$ has bounded degree, bounded edge weights, all outgoing edges of a vertex have the same weight and $\net$ has just three vertices (Corollary~\ref{cor:ratecost_snphardness}). 
We compare our cost model with the one studied in literature \cite{Vyavahare14} and prove that any algorithm which solves the \emph{minimum cost embedding} problem of \cite{Vyavahare14} gives a $D$-approximation for our version of \emph{minimum cost embedding} problem (Theorem~\ref{thm:old_new_relation}) where $D$ is the maximum out-degree of a vertex in $\compgraph.$
 
\subsection{Approximation Algorithms}
\label{sec:results}

As mentioned above, in Theorem~\ref{thm:main} we prove that solving CALP is MAX SNP-hard even when there are only three vertices in $\net.$ Hardness for solving CALP for any network $\net$ with less than three vertices is of theoretical interest. Thus, we first present a polynomial time procedure to solve CALP on $\net$ with two vertices for an arbitrary DAG $\compgraph$ (Section~\ref{sec:2nodenet_algo}) thus proving the dichotomy of hardness of CALP.

In Section~\ref{sec:app_algo} we present a restricted class of schedules by studying a restricted class of embeddings, called \rembedding. We present the equivalent packing LP for these embeddings called R-CALP and observed that our hardness results (Theorem~\ref{thm:approx_separation} and Theorem~\ref{thm:main}) also hold for this class of schedules. We use the procedure of Theorem~\ref{thm:approx_separation} in Section~\ref{sec:app_algo} to present approximation algorithms for R-CALP. Using the relation derived in Theorem~\ref{thm:old_new_relation} between different cost models and the result of \cite{Karloff06} we show that there is no polynomial time constant factor approximation for R-CALP (Corollary~\ref{cor:oldcost_hardness}) unless $NP \subseteq DTIME(p^{poly(\log p)})$ when $\compgraph$ has unbounded degree and edge weights. Here $p$ is the number of vertices in $\compgraph.$ 

Since the problem for general $\compgraph$ is NP-hard, we consider some specific structures of $\compgraph$ to get approximate algorithms. Many of the well known functions like fast Fourier transform (FFT), sorting or any polynomial function of input data can be represented by a layered computation graph. We present a constant factor approximate algorithm for R-CALP when the width of each layer of the layered computation graph is bounded (Corollary~\ref{cor:boundedwidth_layer}). Then we consider a class of $\compgraph$ that has a spanning tree such that any edge is a part of at most $O(F)$ fundamental cycles. For a $N$ point FFT computation graph $F= \log(N).$ We present a polynomial time $O(FD)$-approximation algorithm to solve R-CALP for such graphs (Corollary~\ref{cor:unboundedwidth_layer}). 
Lastly we formulate the \emph{minimum cost embedding} problem as a quadratic integer program and present its linear programming relaxation based on \emph{earthmover distance metric} (Section~\ref{sec:lp}). Applying the randomized rounding techniques to the optimal solution of this LP we present two algorithms (derived from \cite{Calinescu01}) to approximate R-CALP. The first algorithm gives an $O(D\log n)$-approximation for general $\compgraph$ (Corollary~\ref{cor:logn-approx}) and the second algorithm gives an $O(D)$-approximation for planar $\compgraph$ (Corollary~\ref{cor:planar-approx}) where $n$ is the number vertices in $\net.$

\section{Notations and Problem Definition}
\label{sec:problemdef}

A communication network is represented by an undirected graph
$\net=(\netnodes,\netedges)$, where $V=\{u_1,\ldots,u_n\}$ is a set of network nodes and $E$ is a set of communication links (see Fig.~\ref{fig:example}a for an example of $\net.$) Each link has a non-negative capacity associated with it. Let
$\{s_1,s_2, \ldots,s_\kappa\} \subset \netnodes$ be the set of
$\kappa$ source nodes with $s_i$ generating an infinite sequence of
data values from the alphabet $\mathcal{A}_i.$ The sink node $t$ needs to compute
function $f:\{\mathcal{A}_1 \times \mathcal{A}_2 \times \cdots, \times
\mathcal{A}_{\kappa}\} \mapsto \mathcal{A}_t.$ The schema
to compute $f$ is given as a directed acyclic graph $\compgraph =
(\compnodes,\compedges)$ where $\compnodes$ is the set of nodes
representing a computation of an intermediate (with respect to $f$) function of the
data and $\compedges$ is the set of edges denoting the
communication of these functions. Let $\{\omega_1,\omega_2,
\ldots,\omega_{\kappa}\} \subset \Omega$ be the source nodes and $\omega_p$ be
the sink that receives $f(\cdot).$ See Fig.~
\ref{ex:embedding}b for an example of $\compgraph.$ 

Let $\{x_i(k)\}_{k \geq 1}$ be the infinite sequence of data values at source $s_i$. 
We assume that the entire sequence is available at $s_i$ all the
time. Let $f_t(k):= f(x_1(k), \ldots, x_{\kappa}(k)).$ Our interest in
this paper is in the computation and communication schedule in $\net$
that will obtain $f_t(k)$ at sink node $t$ at the maximum rate.
The source nodes of $\compgraph$ have in-degree zero while
out-degree of sink node $\omega_p$ is zero. All the other nodes
$\compgraph$ have in-degree greater than zero and out-degree greater
than zero\footnote{If the out-degree of all the nodes (except the sink
  node which has out-degree zero) is strictly one then the graph
  $\compgraph$ is a tree structure.}. The direction on the edges in
$\compgraph$ represents the direction of the data flow. Without loss of generality we assume that all the outgoing edges of a
node represent the same intermediate function.  Let
$\compedges_\theta$ be the set of all edges carrying the 
intermediate function $\theta$ and let $\mathcal{A}_\theta$ be its
(finite) alphabet. Let $\Theta$ be the set of all intermediate
functions in $\compgraph$, let $w : \Theta \mapsto \mathbb{Z}^{+}$ be the
weight of each intermediate function in $\compgraph$ with $w(\theta)
= \lceil \log(|\mathcal{A}_{\theta}|) \rceil.$ 
\begin{remark}
  \label{rmk:out-func}
  Each outgoing edge of any vertex $\omega \in
  \compnodes$ carries the same function, the weights associated with
  all the outgoing edges of a given $\omega$ are the same.
\end{remark}

A path in $\net$ is denoted by a sequence of distinct vertices $\sigma
=(u_1,u_2,\ldots,u_l)$, such that $(u_i,u_{i+1}) \in \netedges$ $
~\forall 1 \leq i \leq l-1 $. The nodes $u_1$ and $u_l$ are called the
start node ($\pathstart \sigma$) and the end node ($\pathend \sigma$)
of the path $\sigma$ respectively. A path can be of zero length in
which case $\sigma = (u_1)$ is a single vertex and start and end nodes
are the same. $\setofpaths$ is the set of all paths in $\net.$
For $\gamma \in \compedges$ let $\tail\gamma$ and $\head \gamma$
represent the head and the tail of the edge $\gamma$ respectively. Let
$\preedges \gamma$ and $\sucedges \gamma$ denote, respectively, the
immediate predecessors and successors of $\gamma,$ i.e., $ \preedges
\gamma = \{\alpha \in \compedges | \head \alpha = \tail \gamma\}$ and
$ \sucedges \gamma = \{\alpha \in \compedges | \tail \alpha = \head
\gamma\}$.
For a function $\theta \in \Theta,$ let $\prefunction \theta$ and
$\sucfunction \theta$ be the functions carried by the predecessor and
successor edges of $\Gamma_{\theta}.$

\subsection{Embedding Definition}
\label{sec:embedding-def}
Informally an
embedding of $\compgraph$ on $\net$ gives a way of computing $f$ on $\net$ as per the data flow given by $\compgraph$. Thus, an embedding of $\compgraph$ on $\net$ can be seen as a function which maps an edge $\gamma \in \compedges$ to paths in $\net$ where the the function carried by $\gamma$ is computed at the start node of the path and at the end node of the path it is used to generate its successor function. This is formalized in the following definition.

\begin{definition} [\textbf{Embedding}]
  \label{def:embedding}
  An embedding of $\compgraph$ on $\net$ is a function $\mathcal{E}: \compedges \mapsto
  \mathcal{P}(\setofpaths).$\footnote{Here $\mathcal{P}(\Sigma)$
    denotes the power set of $\Sigma$ except the empty set. In an
    embedding an edge may get mapped to a path of zero length, which
    implies that both its end points are mapped to the same vertex.
  } 
  If $\mathcal{E}(\gamma_l) :=
  \{\sigma_1^l,\ldots,\sigma_r^l\}$ then the edge $\gamma_l$ is mapped
  to $r$ paths such that the following properties are satisfied.
  \begin{enumerate}
  \item If $\tail {\gamma_l} = \omega_i, \forall i \in [1,\kappa]$ then $\pathstart{\sigma_a^l} = s_i$ $ \forall \sigma_a^l \in \mathcal{E}(\gamma_l).$ 
  \item If $\head{\gamma_l} = \omega_p$ then $\pathend{\sigma_a^l} = t$ $\forall \sigma_a^l \in
    \mathcal{E}(\gamma_l).$
  \item If $\gamma_i \in \sucedges{\gamma_j}$ then there exists a $\sigma_b^j$ such that
    $\pathend{\sigma_b^j} = \pathstart{\sigma_a^i}$ $\forall \sigma_a^i.$ Similarly, for
    every $\sigma_b^j$ there exists a $\sigma_a^i$ such that
    $\pathend{\sigma_b^j} = \pathstart{\sigma_a^i}.$
  \item There are no $i,j \in
    [1,r]$ such that $i\neq j$ and $\pathend{\sigma_i^l} =
    \pathend{\sigma_j^l}$ $\forall \gamma_l \in \compedges.$ 
    \item If $\pathstart{\sigma_i^l} \neq
    \pathstart{\sigma_j^l}$ $\forall i \neq j \in
    [1,r]$ then $\sigma_i^l \cap \sigma_j^l =
    \emptyset$ $\forall \gamma_l \in \compedges.$ 
 \end{enumerate}
\end{definition}

Above mentioned properties of a valid embedding are a direct consequence of the structure of $\compgraph$ which are explained in Appendix~\ref{app:embedding_property}.

  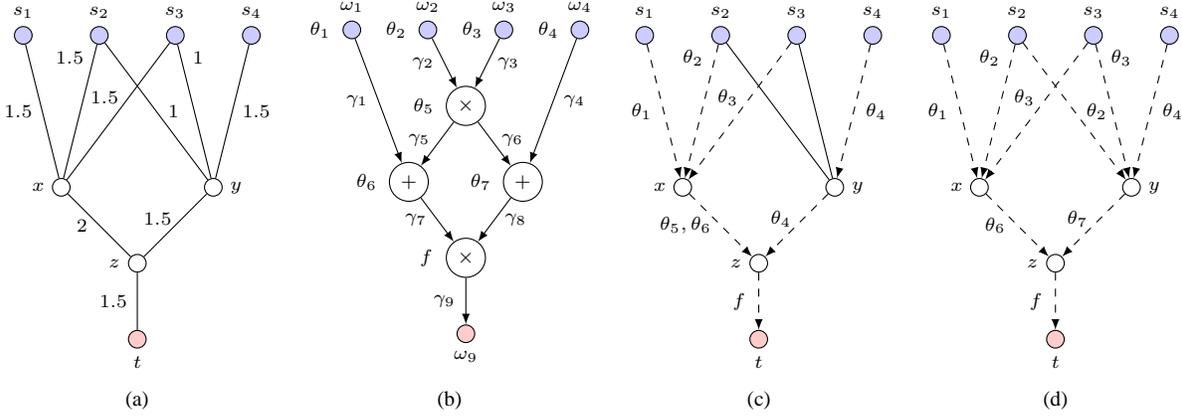
\begin{figure}[t]
  \begin{center}
    \resizebox{0.9\linewidth}{!}{
      \subfloat[]{
\begin{tikzpicture}[>=latex]
\scriptsize
 \tikzstyle{every node} = [circle,draw=black]
 \node (a) at (-1.5,2) [fill=blue!20] {};
 \node (b) at (-0.5,2) [fill=blue!20] {};
 \node (c) at (0.5,2) [fill=blue!20] {};
 \node (d) at (1.5,2) [fill=blue!20] {}; 
 \node (e) at (-1,0) {};
 \node (f) at (1,0) {};
 \node (g) at (0,-1) {};
 \node (h) at (0,-2) [fill=red!20] {};
 \node at (-1.5,2.3) [draw=none] {$s_1$};
  \node at (-0.5,2.3) [draw=none] {$s_2$};
   \node at (0.5,2.3) [draw=none] {$s_3$};
    \node at (1.5,2.3) [draw=none] {$s_4$};
    \node at (0,-2.3) [draw=none] {$t$}; 
     \node at (-1.3,0) [draw=none] {$x$};
      \node at (1.3,0) [draw=none] {$y$};
       \node at (-0.3,-1) [draw=none] {$z$};
    
 \draw [-] (a) -- (e) node [draw=none,midway,left] {$1.5$} ;
 \draw [-] (b) -- (e) node [draw=none,pos=0.1,left] {$1.5$};
 \draw [-] (b) -- (f)node [draw=none,pos=0.5,right] {$1$};
 \draw [-] (c) -- (e)node [draw=none,pos=0.4,left] {$1.5$};  
 \draw [-] (c) -- (f) node [draw=none,pos=0.1,right] {$1$};
 \draw [-] (d) -- (f) node [draw=none,midway,right] {$1.5$};
 \draw [-] (e) -- (g) node [draw=none,midway,left] {$2$};
 \draw [-] (f) -- (g) node [draw=none,pos=0.4,left] {$1.5$};
 \draw [-] (g) -- (h)node [draw=none,midway,left] {$1.5$}; 

 \end{tikzpicture}
}

\subfloat[]{
\begin{tikzpicture}[>=latex]
  \scriptsize
  \tikzstyle{every node} = [circle,draw=black]
 \node (a) at (-1.5,2) [fill=blue!20] {};
 \node (b) at (-0.5,2) [fill=blue!20] {};
 \node (c) at (0.5,2) [fill=blue!20] {};
 \node (d) at (1.5,2) [fill=blue!20] {}; 
 \node (e) at (0,1) {$\times$};
 \node (f) at (-0.75,0) {$+$};
 \node (g) at (0.75,0) {$+$};
 \node (h) at (0,-1) {$\times$};
 \node (i) at (0,-2) [fill=red!20] {};  
  \node at (-1.5,2.3) [draw=none] {$\omega_1$};
  \node at (-0.5,2.3) [draw=none] {$\omega_2$};
   \node at (0.5,2.3) [draw=none] {$\omega_3$};
    \node at (1.5,2.3) [draw=none] {$\omega_4$};
    \node at (0,-2.3) [draw=none] {$\omega_9$}; 

 \draw [->] (a) -- (f) node [midway,left,draw=none] {$\gamma_1$};
 \draw [->] (b) -- (e) node [midway,left,draw=none] {$\gamma_2$};
 \draw [->] (c) -- (e) node [midway,right,draw=none] {$\gamma_3$};  
 \draw [->] (d) -- (g) node [midway,right,draw=none] {$\gamma_4$};
 \draw [->] (e) -- (f) node [pos=0.4,left,draw=none] {$\gamma_5$};
 \draw [->] (e) -- (g) node [pos=0.4,right,draw=none] {$\gamma_6$};
 \draw [->] (f) -- (h)node [midway,left,draw=none] {$\gamma_7$};
 \draw [->] (g) -- (h) node [midway,right,draw=none] {$\gamma_8$};  
 \draw [->] (h) -- (i) node [midway,left,draw=none] {$\gamma_9$};
 \node [anchor=east,draw=none] at (a.west) {$\theta_1$};
 \node [anchor=east,draw=none] at (b.west) {$\theta_2$};
 \node [anchor=east,draw=none] at (c.west) {$\theta_3$};
 \node [anchor=east,draw=none] at (d.west) {$\theta_4$};
 \node [anchor=east,draw=none] at (e.west) {$\theta_5$};
 \node [anchor=east,draw=none] at (f.west) {$\theta_6$};
 \node [anchor=east,draw=none] at (g.west) {$\theta_7$};
 \node [anchor=east,draw=none] at (h.west) {$f$};
\end{tikzpicture}
} 
 \subfloat[]{
\begin{tikzpicture}[>=latex]
\scriptsize
 \tikzstyle{every node} = [circle,draw=black]
 \node (a) at (-1.5,2) [fill=blue!20] {};
 \node (b) at (-0.5,2) [fill=blue!20] {};
 \node (c) at (0.5,2) [fill=blue!20] {};
 \node (d) at (1.5,2) [fill=blue!20] {}; 
 \node (e) at (-1,0) {};
 \node (f) at (1,0) {};
 \node (g) at (0,-1) {};
 \node (h) at (0,-2) [fill=red!20] {};
  \node at (-1.5,2.3) [draw=none] {$s_1$};
  \node at (-0.5,2.3) [draw=none] {$s_2$};
   \node at (0.5,2.3) [draw=none] {$s_3$};
    \node at (1.5,2.3) [draw=none] {$s_4$};
    \node at (0,-2.3) [draw=none] {$t$}; 
     \node at (-1.3,0) [draw=none] {$x$};
      \node at (1.3,0) [draw=none] {$y$};
       \node at (-0.3,-1) [draw=none] {$z$};
 \draw [->,dashed] (a) -- (e) node [draw=none,midway,left] {$\theta_1$} ;
 \draw [->,dashed] (b) -- (e) node [draw=none,pos=0.1,left] {$\theta_2$};
 \draw [-] (b) -- (f);
 \draw [->,dashed] (c) -- (e)node [draw=none,pos=0.4,left] {$\theta_3$};  
 \draw [-] (c) -- (f) ;
 \draw [->,dashed] (d) -- (f) node [draw=none,midway,right] {$\theta_4$};
 \draw [->,dashed] (e) -- (g) node [draw=none,midway,left] {$\theta_5,\theta_6$};
 \draw [->,dashed] (f) -- (g) node [draw=none,pos=0.4,left] {$\theta_4$};
 \draw [->,dashed] (g) -- (h)node [draw=none,midway,left] {$f$};  
  \end{tikzpicture}
}
 \subfloat[]{
\begin{tikzpicture}[>=latex]
\scriptsize
 \tikzstyle{every node} = [circle,draw=black]
 \node (a) at (-1.5,2) [fill=blue!20] {};
 \node (b) at (-0.5,2) [fill=blue!20] {};
 \node (c) at (0.5,2) [fill=blue!20] {};
 \node (d) at (1.5,2) [fill=blue!20] {}; 
 \node (e) at (-1,0) {};
 \node (f) at (1,0) {};
 \node (g) at (0,-1) {};
 \node (h) at (0,-2) [fill=red!20] {};
  \node at (-1.5,2.3) [draw=none] {$s_1$};
  \node at (-0.5,2.3) [draw=none] {$s_2$};
   \node at (0.5,2.3) [draw=none] {$s_3$};
    \node at (1.5,2.3) [draw=none] {$s_4$};
    \node at (0,-2.3) [draw=none] {$t$}; 
     \node at (-1.3,0) [draw=none] {$x$};
      \node at (1.3,0) [draw=none] {$y$};
       \node at (-0.3,-1) [draw=none] {$z$};
 \draw [->,dashed] (a) -- (e) node [draw=none,midway,left] {$\theta_1$} ;
 \draw [->,dashed] (b) -- (e) node [draw=none,pos=0.1,left] {$\theta_2$};
 \draw [->,dashed] (b) -- (f)node [draw=none,pos=0.5,right] {$\theta_2$};
 \draw [->,dashed] (c) -- (e)node [draw=none,pos=0.4,left] {$\theta_3$};  
 \draw [->,dashed] (c) -- (f) node [draw=none,pos=0.1,right] {$\theta_3$};
 \draw [->,dashed] (d) -- (f) node [draw=none,midway,right] {$\theta_4$};
 \draw [->,dashed] (e) -- (g) node [draw=none,midway,left] {$\theta_6$};
 \draw [->,dashed] (f) -- (g) node [draw=none,pos=0.4,left] {$\theta_7$};
 \draw [->,dashed] (g) -- (h)node [draw=none,midway,left] {$f$}; 

 \end{tikzpicture}
}
    }
    \caption{(a). Network graph $(\net)$ Number near an edge shows
      its capacity in bits/second (b). Computation graph
      $(\compgraph)$ for $f = (x_1+x_2x_3)(x_4+x_2x_3)$ (c). An
      embedding $\embedding_1$  of function $f$ on $\net$ (d). Another
      embedding $\embedding_2$ to computer $f$ }
    \label{fig:example}
 \end{center}   
  \end{figure}
\begin{example}
  \label{ex:embedding}
  Consider $\net= (\netnodes,\netedges)$ as shown in
  Fig.~\ref{fig:example}a. 
  Assume that each source generates symbols from 
 $\mathcal{A} = \{0,1\}$ and
  the alphabet of function $f$ is also $\mathcal{A}$. A
  schema $\compgraph$ to compute the function $f$ is shown in
  Fig.~\ref{fig:example}b.
  Assume that all the intermediate functions are also from
  $\mathcal{A}$, hence $w(\theta) = \lceil \log(2)
  \rceil =1$ for all $\theta \in \Theta.$
  Two of the (multiple)
  possible embeddings are shown in the Fig.~\ref{fig:example}c and
  d. For the embedding shown in Fig~\ref{fig:example}c,
  $\embedding_1(\gamma_1) = s_1x, \embedding_1(\gamma_2) = s_2x,
  \embedding_1(\gamma_3) = s_3x, \embedding_1(\gamma_4) = s_4yz,
  \embedding_1(\gamma_5) = x, \embedding_1(\gamma_6) = xz,
  \embedding_1(\gamma_7) = xz, \embedding_1(\gamma_8) = z,
  \embedding_1(\gamma_9) = zt.$ For the embedding shown in
  Fig~\ref{fig:example}d, $\embedding_2(\gamma_1) = s_1x,
  \embedding_2(\gamma_2) = \{s_2x,s_2y\}, \embedding_2(\gamma_3) =
  \{s_3x,s_3y\}, \embedding_2(\gamma_4) = s_4y,$ $ \embedding_2(\gamma_5)
  = x, \embedding_2(\gamma_6) = y, \embedding_2(\gamma_7) = xz,$ $
  \embedding_2(\gamma_8) = yz,$ $ \embedding_2(\gamma_9) = zt.$

\end{example}

Observe that if an edge $\gamma_l$ is mapped to two paths, say
$\sigma_1^l$ and $\sigma_2^l,$ then the same symbol of the function carried
by it is generated twice; once by the vertex $\pathstart {\sigma_1^l}$
and once by vertex $\pathstart {\sigma_2^l}.$ We denote the set of all
the embeddings of $\compgraph$ on $\net$ by $\setofembeddings.$
As observed in Example~\ref{ex:embedding}, an edge in $\net$ can either
carry zero or more function types in an embedding. Let
$r_\mathcal{E}^\theta(e) := \mathbbm{1}\{e \in \sigma_i^l|\sigma_i^l
\in \mathcal{E}(\gamma_l) \mbox{ and } \gamma_l \in \Gamma_\theta\}$
be the indicator function of the transmission of function type
$\theta$ over an edge $e \in \netedges.$ Then total number of times an edge is used
in $\embedding$ is  $r_\mathcal{E}(e) :=
\sum\limits_{\theta \in \Theta} r_\mathcal{E}^\theta(e) w(\theta).$ 

\begin{remark}
An edge $e$ in $\net$ can be a part of embedding of more than one edges of $\compgraph$ all of which carry the same function $\theta.$ In this case we say that the edge $e$ is used only once (observe $r_\mathcal{E}^\theta(e))$ since the edges carry the same function.
\end{remark}

The notion of an embedding of $\compgraph$ on $\net$ to compute $f$ is used
in \cite{Shah13,Liu13}. The key difference between these and this
paper is that in the former, an edge in $\compgraph$ is mapped to only
one path in $\net.$ This is not a restriction when
$\compgraph$ is a tree, like in \cite{Shah13}. However, it does reduce
the maximum rate when $\compgraph$ is a DAG as demonstrated by  the following example.

\begin{example}
\label{ex:embedding-rate}
We continue with Example~\ref{ex:embedding} here. Observe that in $\embedding_2$ (shown in Fig.~\ref{fig:example}d) the function $\theta_5$ is computed at two vertices $x$ and $y$ and used to compute $\theta_6$ at $x$ and $\theta_7$ at $y.$ 
The source $s_2$ sends the function $\theta_2$ on $s_2x,s_2y$ and $s_3$ sends $\theta_3$ on $s_3x,s_3y.$ If the capacity of links $s_2y$ and $s_3y$ are used completely the final function $f$ can be computed at the rate of $1$ bits per second using $\embedding_2.$ As each edge in $\net$ is used only once, $r_{\embedding_2}(e) = 1 ~\forall e \in \netedges.$ 

Note that after the usage of edges by $\embedding_2$ residual capacities on the edges of $\net$ are: $c(s_1x) = 0.5,c(s_2x) =0.5,c(s_2y)=0,c(s_3x)=0.5,c(s_3y)=0,c(s_4y)=0.5,c(xz)=1,c(yz)=0.5$ and $c(zt)=0.5.$ These residual capacities can be used by $\embedding_1$ (shown in Fig~\ref{fig:example}c) to generate the function $f$ at rate $0.5$ bits/second. Note that for all the edges used by $\embedding_1,$ $r_{\embedding_1}(e) = 1$ except for $xz$ for which $r_{\embedding_1}(xz) =2.$ Using both the embeddings, the sink $t$ can receive $f$ at the rate of $1.5$ bits/second. 
\end{example}

\subsection{Communication and Computation Model}
\label{sec:routingscheme}
We saw that an embedding of $\compgraph$ on $\net$ specifies which
function $\theta$ is generated at which vertex and transmitted over
which edge in the network. However, this does not specify the exact
schedule for computing each $\theta.$ Our task is to not only give an
embedding but also give a full schedule. For this we define the notion
of \emph{routing-computing scheme.}

To define the scheme formally, we first mention the assumptions on the computation of functions and the allowed set of communication events in the network graph.
 Let $\mathbb{X}$ denote the
vector $[x_1,\ldots,x_{\kappa}],$ and its $k-$th realization be
$\mathbb{X}(k) = [x_1(k),\ldots,x_{\kappa}(k)].$ 
The time is slotted and in each time
slot an edge $e= (u,v) \in \netedges$ is said to be activated if some
information is transferred from $u$ to $v.$ All the edges can
be activated simultaneously in any time slot. If the capacity of an
edge $e$ is $c(e)$ then at most $\lfloor c(e) T \rfloor$ 
bits can be transferred over it in $T$ time slots. 
We assume that any vertex $u$ transmits all the
bits of the $k$-th realization of function $\theta$ on the edge $e$ as
a single packet of $w(\theta)$ bits. Any $u \in \netnodes$ at time slot $\tau$ may
perform one of the following tasks exclusively.
\begin{enumerate}
\item \emph{Computation event}: if there exists 
$\tau' <\tau$ such that the $k$-th realization of the predecessor functions
of $\theta$ are received or generated by $u$ then it can generate the $k$-th
realization of $\theta.$ 
\item \emph{Communication event}: if there exists  $\tau' < \tau$
such that the $k$-th realization of a function $\theta$ was either
received or generated by $u$ then it can transmit it over one of its
outgoing edges, say $(u,v).$ 
\item Receive a function from an
incoming edge or do nothing.
\end{enumerate}
We assume that any computation event in the network can happen instantaneously and the time is taken into consideration only for communication events (which is dictated by the capacity of network edges as mentioned above). Any routing-computing scheme can be considered as a sequence of $L$
events $R_l, 1 \leq l \leq L$ where each event is one of above
mentioned tasks. It computes $K$ symbols of $f$ at the sink in time $t$ by using $K$ fixed block of source symbols indexed by $1,2,\ldots,K.$ The rate of computation of $f$ by the routing-computing scheme is then defined as $K/t.$ 
At any time $\tau \leq t,$ a node can have, a subset of the universe of data
$\mathcal{U} = \Theta \times [1,K],$ where an element $(\theta,k) \in
\mathcal{U}$ denotes the $k$-th symbol of the function $\theta.$
The sets $\mathcal{U}_{u,l},\mathcal{U}_{u,l+1}
\subseteq \mathcal{U}$ represent the state of a node $u$ before and
after the $l$-th event $R_l$ respectively.
In the case of a computation event the state of only $u$ is changed, and
for a
communication event only the states of vertices $u$ and $v$ are
changed. As seen in Example~\ref{ex:embedding}, 
a symbol of a function can be computed multiple times in the
network and the scheme presented here takes this into account.
Let  $m_{u,k}^{\theta}$ be
  the number of times the $k$-th symbol of $\theta$ is used
  or transmitted by $u$ in the overall scheme. We remind you that when $\compgraph$ is a tree, each function symbol is computed
only once in the network and the corresponding scheme is presented in \cite{Shah13}.

\begin{definition}
\label{def:routing-scheme}
  A $(\{N_e|e \in \netedges\},K,m_{u,k}^{\theta})$ routing-computing
  scheme for $(\net,\compgraph)$ given  $L \in \mathbb{N}^+,$
  subsets $\{\mathcal{U}_{u,l} \subseteq \mathcal{U}| u \in
  \netnodes, l \in [1,L+1]\}$ and 
  $\forall u,k,\theta: m_{u,k}^{\theta} \in \mathbb{N}^+$ is:
  \begin{enumerate}
  \item For $1\leq i\leq \kappa,$ $\mathcal{U}_{s_i,1} =
    \{(\theta_i,k) | k \in [1,K]\}$, $\mathcal{U}_{u,1} =
    \emptyset ~\forall u \in \netnodes \setminus \{s_i|1 \leq i \leq
    \kappa\}.$
  \item For each $l <L+1,$ one of the following holds.

    \begin{enumerate}
    \item \textit{Computation event:} In this event a node $u$
      computes a function $\theta(\mathbb{X}(k))$ using
      $\{\eta(\mathbb{X}(k))|\eta \in \prefunction \theta\}.$ More
      precisely we first set $m_{u,k}^{\eta}=m_{u,k}^{\eta} -1
      ~\forall \eta \in \prefunction \theta$ and $Z(\mathcal{U}_{u,l})
      := \{(\gamma,k) \in \mathcal{U}_{u,l} | m_{u,k}^{\gamma} =
      0\}$. Then the data-sets are updated as follows:
      $\mathcal{U}_{u,l+1} = \{(\theta,k)\} \cup \mathcal{U}_{u,l}
      \setminus Z(\mathcal{U}_{u,l}); \mathcal{U}_{v,l+1} =
      \mathcal{U}_{v,l}, ~\forall v \in \netnodes \setminus \{u\}. $
     
    \item \textit{Communication event:} In this event a function
      $\theta(\mathbb{X}(k))$ is transmitted on the link $uv.$ More
      precisely we first set $m_{u,k}^{\theta} = m_{u,k}^{\theta} -1$
      and $Z(\mathcal{U}_{u,l}) := \{(\gamma,k) \in \mathcal{U}_{u,l}
      | m_{u,k}^{\gamma} = 0\}.$ Then the data-sets are updated as
      follows: $\mathcal{U}_{v,l+1} = \mathcal{U}_{v,l} \cup
      \{(\theta,k)\} ; \mathcal{U}_{u,l+1} = \mathcal{U}_{u,l}
      \setminus Z(\mathcal{U}_{u,l}); \mathcal{U}_{w,l+1} =
      \mathcal{U}_{w,l} ~\forall w \neq u,v.$

    \item \textit{Final condition:} $\mathcal{U}_{t,L+1} = \{(f,k)| 1
      \leq k \leq K\}; \mathcal{U}_{u,L+1} = \emptyset ~\forall u \neq
      t ; m_{u,k}^{\theta} = 0 ~\forall u \in \netnodes, k \in [1,K],
      \theta \in \Theta.$

   \item \textit{Total link usage:} Let $r_e^{\theta}$ 
     be the number of times a function $\theta$ is transmitted over
     edge $e\in \net.$ Then the total link usage is given by: $ N_e =
     \sum_{\theta \in \Theta} r_e^{\theta} w(\theta).$
  \end{enumerate}
\end{enumerate} 
 
\end{definition}
The scheme uses an edge $e \in \netedges$ for
$N_e/c(e)$ time slots to compute $K$ symbols of $f$ at the sink.
\begin{definition}
  For a given network $\net,$ $\{c(e)|e \in
  \netedges\},$ and a computation graph $\compgraph$, a
  rate $\lambda$ is said to be $(\net,\compgraph)$-achievable if for
  every $\epsilon >0,$ there is a $(\{N_e|e \in \netedges\},K,m_{u,k}^{\theta})$
  routing-computing scheme for $(\net,\compgraph)$ such that
  $N_e(\lambda-\epsilon) \leq Kc(e), ~\forall e \in \netedges.$ The
  supremum of $(\net,\compgraph)$-achievable rates over all the
  routing-computing schemes is called the computing capacity for
  $(\net,\compgraph),$ and is denoted by $C(\net,\compgraph).$ \footnote{A similar definition appears in~\cite{Shah13}, however in their case $\compgraph$ is a tree.}
\end{definition}

Example~\ref{ex:embedding-rate} presented in
Section~\ref{sec:embedding-def} shows that using multiple embeddings and
sequencing them appropriately we can achieve a higher
rate of function computation than by just using one embedding. In the
next section we give a (packing) linear program for obtaining maximum
rate of computation using a combination of different embeddings and
show that this also achieves the computing capacity
$C(\net,\compgraph).$

\section{Capacity Achieving LP (CALP)}
\label{sec:CALP}

\centerline{\rule{\columnwidth}{0.75pt}}
\textbf{Capacity Achieving Linear Program (CALP)}

\textbf{Objective:} Maximize $R:= \sum_{\embedding \in \setofembeddings} x(\embedding)$
\textbf{subject to}
\begin{enumerate}
\item Capacity constraints: $\sum_{\embedding \in
    \setofembeddings} r_{\embedding}(e)x(\embedding) \leq c(e), \mbox{
  } ~\forall e \in \netedges. $

 \item Non-negativity constraints: $x(\embedding) \geq 0, ~\forall \embedding \in \setofembeddings.$

\end{enumerate}

\centerline{\rule{\columnwidth}{0.75pt}}

\begin{theorem}
 \label{thm:equivalnce}
For a given network $\net$ and computation DAG $\compgraph,$ CALP achieves a rate $R$ which is equal to the computing capacity ($C(\net,\compgraph))$ for $(\net,\compgraph).$
\end{theorem}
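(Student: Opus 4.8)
The plan is to prove the equality $R = C(\net,\compgraph)$ by establishing the two inequalities separately: first that any feasible solution of CALP can be turned into a routing-computing scheme of essentially the same rate (achievability, $R \le C(\net,\compgraph)$), and then that any routing-computing scheme yields a CALP-feasible solution of at least the same objective value (converse, $C(\net,\compgraph) \le R$). Since $\net$ and $\compgraph$ are finite and every path in $\net$ has distinct vertices, $\setofembeddings$ is finite, so the LP is well posed; moreover an optimal basic feasible solution has at most $|\netedges|$ nonzero coordinates, so only finitely many embeddings are ever needed.

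For achievability I would start from an optimal basic feasible solution $\{x(\embedding)\}$, which by rationality of the data may be taken rational. Fix a large integer $M$ and let embedding $\embedding$ handle $K_\embedding := \lfloor M\, x(\embedding)\rfloor$ distinct output symbols, so $K := \sum_\embedding K_\embedding \approx M R$. For each symbol assigned to $\embedding$ I realize the computation and communication events dictated by $\embedding$; because every embedding satisfies the precedence structure of Definition~\ref{def:embedding} (successor paths start where predecessor paths end), these events can be ordered consistently with a topological order of $\compgraph$, and the bookkeeping variables $m_{u,k}^{\theta}$ are set to the number of successor paths each computed symbol must feed. The total usage of edge $e$ is $N_e = \sum_\embedding K_\embedding\, r_\embedding(e) \le M \sum_\embedding x(\embedding)\, r_\embedding(e) \le M\, c(e)$, so the bottleneck completion time is $t = \max_e N_e/c(e) \le M$ and the achieved rate $K/t$ tends to $R$ as $M \to \infty$; choosing $M$ large enough makes $N_e(R-\epsilon) \le K c(e)$ hold for every $\epsilon>0$.

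For the converse I would take any $(\{N_e\},K,m_{u,k}^{\theta})$ routing-computing scheme computing $K$ symbols of $f$ in time $t = \max_e N_e/c(e)$ and decompose it into embeddings. For each output symbol $k$ I trace backward through the computation events: the symbol $(f,k)$ at the sink is produced from the $k$-th realizations of its predecessor functions, which were themselves computed or received along paths of $\net$; following these back to the sources reconstructs, for every edge $\gamma_l$ of $\compgraph$, the set of network paths that carried the corresponding function, i.e.\ an embedding $\embedding_k$. Grouping symbols by their extracted embedding and setting $x(\embedding) := K_\embedding/t$ gives $\sum_\embedding x(\embedding) = K/t$ equal to the scheme's rate, while the per-function link counts satisfy $\sum_\embedding K_\embedding\, r_\embedding^\theta(e) = r_e^\theta$, whence $\sum_\embedding r_\embedding(e)\, x(\embedding) = N_e/t \le c(e)$; thus the $x(\embedding)$ are CALP-feasible with objective equal to the scheme's rate, and taking the supremum over schemes yields $C(\net,\compgraph) \le R$.

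The main obstacle is this backward extraction in the converse: I must argue that the events touching a fixed symbol $k$ can always be organized into a structure meeting all five properties of Definition~\ref{def:embedding} — in particular that the paths carrying the same $\gamma_l$ end at distinct vertices (property~4), that paths with distinct start vertices are edge-disjoint (property~5), and that the recomputation count $m_{u,k}^{\theta}$ matches the branching of successor edges. Handling the case where a single function symbol is legitimately computed at several vertices (as in Example~\ref{ex:embedding-rate}) without double counting edge usage is the delicate point, and may require first normalizing the scheme — discarding events that do not contribute to any output symbol and splitting redundant recomputations — before the decomposition into valid embeddings goes through cleanly.
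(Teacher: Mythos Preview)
Your proposal is correct and follows essentially the same two-step structure as the paper: achievability by scaling a rational CALP solution to build a routing-computing scheme (the paper uses the LCM of denominators rather than your floor-of-$Mx(\embedding)$, but the idea is identical), and converse by extracting one embedding per output symbol $k$ and setting $x(\embedding)$ proportional to the multiplicity of each embedding. The paper actually glosses over precisely the point you flag as delicate --- it simply asserts that ``looking at the communication and computation events corresponding to the $k$-th symbol \ldots\ one can easily get an embedding'' without verifying properties~4 and~5 of Definition~\ref{def:embedding} or handling redundant events --- so your caution there is warranted rather than a gap.
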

\begin{proof}
We prove the theorem in two steps. First we show achievability, i.e., we show that for any $\{x(\mathcal{E})|\mathcal{E} \in \mathbb{E}\}$ that satisfies the constraints of the CALP the rate $\sum\limits_{\mathcal{E} \in \mathbb{E}} x(\mathcal{E})$ is $(\net,\compgraph)-$achievable. Next we show that for any $(\{N_e|e \in \netedges\},K,m_{u,k}^{\theta})$ routing-computing scheme for $(\net,\compgraph)$ satisfying $N_e\lambda \leq Kc(e), ~\forall e \in \netedges$ there exists $\{x(\mathcal{E})|\mathcal{E} \in \mathbb{E}\}$ satisfying the constraints of the CALP such that $\sum\limits_{\mathcal{E} \in \mathbb{E}} x(\mathcal{E}) = \lambda.$ Authors in \cite{Shah13} defined routing-computing scheme works only for tree structured $\compgraph$  where any intermediate function is computed only once in the network and showed its equivalence to the corresponding CALP using similar arguments.

\textbf{Step $1$ of the proof:} In this step starting with a set of embeddings which satisfies the CALP constraints we generate a routing-computing scheme which achieves the sum rate of these embeddings. Let $\{x(\embedding)|\embedding \in \setofembeddings\}$ be the number of symbols of function $f$ generated by various embeddings such that it satisfies the constraints of CALP.  Since the rational numbers are dense we can find a set of rational flows $\{x'(\mathcal{E})| \mathcal{E} \in \mathbb{E}\}$ such that $\sum_{\mathcal{E} \in \mathbb{E}} x'(\mathcal{E}) \geq \sum_{\mathcal{E} \in \mathbb{E}} x(\mathcal{E}) - \epsilon$ for any $\epsilon >0.$ We denote the least common multiple of the denominators of $\{x'(\mathcal{E})| \mathcal{E} \in \mathbb{E}\}$ by $d.$ 
 Let us take $K = d \sum_{\mathcal{E} \in \mathbb{E}} x'(\mathcal{E}).$ For every edge $e \in \netedges$ let $N_e = d \sum_{\mathcal{E} \in \mathbb{E}} r_{\mathcal{E}}(e) x'(\mathcal{E}).$ 
  An embedding tells us where any function is computed in the network and on which edges it is transmitted. Let $L(\mathcal{E}) = \sum\limits_{e \in \netedges}  \sum\limits_{\theta \in \Theta} r_\mathcal{E}^{\theta} (e)$ denote the number of symbols of different functions transmitted in the embedding $\mathcal{E},$ where $r_\mathcal{E}^{\theta} (e)$ is the indicator variable for the transmission of function type $\theta$ over edge $e$ in embedding $\embedding.$ Similarly let $g_{\embedding}(\theta)$ be the number of times a function $\theta \in \{\Theta \setminus \{x_u | i \in [1,\kappa]\}\}$ is computed under the embedding $\mathcal{E}.$  More formally,
 \begin{displaymath}
 g_{\mathcal{E}}(\theta) :=  \sum\limits_{\gamma_1,\gamma_2 \in \Gamma_\theta}
  \mathbbm{1} \{\pathstart{\sigma_i} \neq \pathstart{\sigma_j} | \forall \sigma_i \in \mathcal{E}(\gamma_1) \mbox{ and } \sigma_j \in \mathcal{E}(\gamma_2) \}. \footnote{Note that in the above equation we need to consider all the values of $\gamma_1$ and $\gamma_2$ including $\gamma_1 = \gamma_2$ and the generation of source sequence $x_u$ is not considered as a computation in the embedding.}
\end{displaymath}
The total number of computations of all the functions in $\mathcal{E}$ is $g({\mathcal{E}}) := \sum\limits_{\theta \in \Theta} g_{\mathcal{E}}(\theta).$
 
  Now we will construct a routing-computing scheme with the following properties.
 \begin{enumerate}
  \item It computes $K = d \sum_{\embedding \in \setofembeddings} x'(\embedding)$ realizations of the function with $dx'(\embedding)$ realizations computed by embedding $\embedding.$
  \item It uses any edge $e$ to communicate $N_e = d \sum\limits_{\embedding \in \mathbb{E}} r_{\embedding}(e) x'(\embedding)$ bits, where $r_{\embedding}(e) = \sum\limits_{\theta \in \Theta} r_{\embedding}^{\theta}(e) w(\theta).$
  \item It has $L = d \sum_{\mathcal{E} \in \mathbb{E}} L(\mathcal{E})x'(\mathcal{E}) + d \sum_{\mathcal{E} \in \mathbb{E}} g(\mathcal{E})x'(\mathcal{E}) $ events out of which the number of communication events is $d \sum_{\mathcal{E} \in \mathbb{E}} L(\mathcal{E})x'(\mathcal{E})$ and $d \sum_{\mathcal{E} \in \mathbb{E}} g(\mathcal{E})x'(\mathcal{E}) $ are the computation events.
 \end{enumerate}

Note that for this routing-computing scheme $N_e(\sum_{\embedding \in \setofembeddings} x(\embedding)-\epsilon) \leq N_e \sum_{\embedding \in \setofembeddings} x'(\embedding).$ As $x'(\embedding)$ is a solution of the CALP it satisfies the capacity constraints thus 
\begin{displaymath}
 \sum\limits_{\embedding \in \setofembeddings} r_{\embedding}(e) x'(\embedding)\leq c(e) ~\forall e \in \netedges.
\end{displaymath} 
Using the values of $N_e$ and $K$ for this scheme we get, $ N_e = d \sum\limits_{\embedding \in \setofembeddings} r_{\embedding}(e) x'(\embedding) \leq d c(e) \leq \frac{K c(e)}{\sum\limits_{\embedding \in \setofembeddings} x'(\embedding)}.$
Thus the routing-computing scheme satisfies $N_e(\sum_{\mathcal{E} \in \mathbb{E}} x(\mathcal{E})-\epsilon) \leq N_e \sum_{\mathcal{E} \in \mathbb{E}} x'(\mathcal{E}) \leq Kc(e), ~\forall e \in \netedges.$ This guarantees the achievability of the computing rate $\sum_{\embedding \in \setofembeddings} x(\embedding).$ We now show the sequencing of communication and computation events in the routing-computing scheme.

For this we first compute a total ordering $\tau$ on the vertices and edges of the computation DAG using the underlying DAG ordering. Using this ordering one can inductively order the vertices and edges of the network graph $\net$ which are used in an embedding $\embedding.$ Note that every vertex and edge of $\net$ used in $\embedding$ has a function $\theta$ associated with it and the total number of edges (for transmission) and vertices (for computation) used by it are $L(\embedding) + g(\embedding).$ We denote the ordering (and the corresponding function) generated by an embedding $\embedding$ by
\begin{displaymath}
 \phi_{\embedding} : [1:L(\embedding) + g(\embedding)] \mapsto (\netnodes \times \Theta) \cup (\netedges \times \Theta).
\end{displaymath}

Now we find the total number of times a function $\theta$ being used or transmitted by a vertex $u$ in the network in an embedding $\embedding$ as follows.
\begin{align*}
 m_{u}^{\theta}(\embedding) &= \sum_{v \in \netnodes} \mathbbm{1} \{\phi_{\embedding}(l) = ((u,v),\theta)\} + \sum_{\eta \in \sucfunction{\theta}} \mathbbm{1} \{\phi_{\embedding}(l) = (u,\eta)\}
\end{align*}

We define the sets $\mathcal{U}_{u,l} \subseteq \mathcal{U}; ~\forall u \in \netnodes$ and $~\forall l \in [1,L+1]$ below in an inductive fashion.

\begin{enumerate}
 \item For $1\leq i\leq \kappa,$ $\mathcal{U}_{s_i,1} = \{(\theta_i,k) | k \in [1,K]\}.$ And $\mathcal{U}_{u,1} = \emptyset$ for all $u \in \netnodes \setminus \{s_i|1 \leq i \leq \kappa\}.$
 \item Let us fix an arbitrary order on the embeddings, say $\mathcal{E}_1,\mathcal{E}_2,\ldots,\mathcal{E}_{|\mathbb{E}|}.$ Recall that the $i$-th embedding generates $dx'(\embedding_i)$ number of function symbols. We describe the procedure for the $j$-th symbol generated by $i$-th embedding. The same procedure is run for each symbol of every embedding by following the order of embeddings. 
 Set $m_{u,j}^{\theta} = m_{u}^{\theta}(\embedding_i)$ for all $\theta \in \Theta.$ The scheme for this $j$-th symbol produced by $i$-th embedding has $L(\embedding_i) + g(\embedding_i)$ number of events. We give the procedure for the $l$-th event of this symbol inductively by assuming that all the events till the generation of $(j-1)$-th symbol by $\embedding_i$ and $(l-1)$-th event of $j$-th symbol are right. Then at the $l$-th event do one of the following.
 
   \begin{enumerate}
    \item If $\phi_{\embedding_i}(l) =(u,\theta),$ then the $l$-th event is a computation of $\theta$ at $u.$ The condition $\prefunction{\theta} \subseteq \mathcal{U}_{u,l}(k)$ holds because of the assumption of the correctness of the earlier steps. We set $m_{u,k}^{\eta} = m_{u,k}^{\eta} -1 ~\forall \eta \in \prefunction {\theta}$ and $Z(\mathcal{U}_{u,l}) := \{(\gamma,k) \in \mathcal{U}_{u,l} | m_{u,k}^{\gamma} = 0\}.$ The data-sets are redefined as follows:    $    \mathcal{U}_{u,l+1} = \{\theta,k\} \cup \mathcal{U}_{u,l} \setminus Z(\mathcal{U}_{u,l}) ,$ $\mathcal{U}_{v,l+1} = \mathcal{U}_{v,l}, ~\forall v \in \netnodes \setminus \{u\}.$  Note that this is in accordance with the condition $2(a)$ of Definition~\ref{def:routing-scheme}.
  \item If $\psi_{\embedding_i}(l) = ((u,v),\theta),$ then the $l$-th event is a communication of $\theta(\mathbb{X}(k))$ from $u$ to $v$ over the edge $(u,v).$ $(\phi_{\mathcal{E}_i}(n),k) \subseteq \mathcal{U}_{u,l}(k)$ holds because of the assumption. We first set $m_{u,k}^{\theta} = m_{u,k}^{\theta} -1$ and $Z(\mathcal{U}_{u,l}) := \{(\gamma,k) \in \mathcal{U}_{u,l} | m_{u,k}^{\gamma} = 0\}.$ The redefine the data-sets as follows: $\mathcal{U}_{u,l+1} = \mathcal{U}_{u,l} \setminus Z(\mathcal{U}_{u,l}),$ and $\mathcal{U}_{v,l+1} = \mathcal{U}_{v,l} \cup \{(\theta,k)\}.$ For any $w \neq u,v,$ $\mathcal{U}_{w,l+1} = \mathcal{U}_{w,l}.$ Note that this is in accordance with the condition $2(b)$ of Definition~\ref{def:routing-scheme}.
  \end{enumerate}
\end{enumerate} 
It is easy to verify by running the above procedure inductively the final conditions, $\mathcal{U}_{t,L+1} = \{(f,k)| 1 \leq k \leq K\},
     \mathcal{U}_{u,L+1} = \emptyset ~\forall u \neq t$ and $m_{u,k}^{\theta} =0 ~\forall u,k,\theta$ are met. Similarly the link usage $N_e = \sum_{\theta}r_e^{\theta}w(\theta)$ for all $e \in \netedges$ is also satisfied, where
     \begin{displaymath}
       r_e^{\theta} = |\{l \in [1,L]: l \mbox{ is a communication over }e \mbox{ for function } \theta\}|.
     \end{displaymath}

 \textbf{Step $2$ of the proof:} Now we prove that for any $(\{N_e|e \in \netedges\},K,m_{u,k}^{\theta})$ routing-computing scheme for $(\net,\compgraph)$ satisfying $N_e\lambda \leq Kc(e), ~\forall e \in \netedges$ there exists $\{x(\mathcal{E})|\mathcal{E} \in \mathbb{E}\}$ satisfying the constraints of CALP such that $\sum\limits_{\mathcal{E} \in \mathbb{E}} x(\mathcal{E}) = \lambda.$ 
 
 In any routing-computing scheme looking at the communication and computation events corresponding to the $k$-th symbol of all the functions one can easily get an embedding. Let us say that for the $k$-th computation the scheme uses embedding $\mathcal{E}^{(k)} \in \mathbb{E}.$ For each $e \in \netedges,$ the $k$-th computation requires communication of $r_{\mathcal{E}^{(k)}}^{\theta}(e)$ bits over $e$ of function type $\theta.$ Usage of the link $e$ by the embedding $\embedding^{(k)}$ can be computed by $r_{\mathcal{E}^{(k)}}(e) = \sum\limits_{\theta \in \Theta} r_{\mathcal{E}^{(k)}}^{\theta}(e) w(\theta).$ Thus the total link usage by the scheme can be written as
 \begin{equation}
  \sum_{k=1}^{K} r_{\mathcal{E}^{(k)}}(e) = N_e  ~\forall e \in \netedges. \label{eq:proof_ne}
 \end{equation}
 Let $x(\mathcal{E}) := \frac{\lambda |k \in [1,K]: \mathcal{E}^{(k)}  \in \mathbb{E}|}{K} ~\forall \mathcal{E} \in \mathbb{E}.$
 Note that by definition, $x(\embedding) \geq 0$ and   $\sum\limits_{\mathcal{E} \in \mathbb{E}} x(\mathcal{E}) = \lambda.$ Equation~\eqref{eq:proof_ne} can be written as
 \begin{eqnarray*}
  \sum\limits_{\mathcal{E} \in \mathbb{E}} |k \in [1,K]: \mathcal{E}^{(k)} = \embedding|r_{\mathcal{E}}(e) = N_e \\
  \sum\limits_{\mathcal{E} \in \mathbb{E}} K x(\mathcal{E}) r_{\embedding}(e) = \lambda N_e \leq K c(e) \\
   \sum\limits_{\mathcal{E} \in \mathbb{E}} x(\mathcal{E}) r_{\embedding}(e) \leq c(e)
 \end{eqnarray*}
So, $\{x(\mathcal{E})|\mathcal{E} \in \mathbb{E}\}$ satisfies the conditions of the CALP. Thus we get a solution of CALP with $\sum\limits_{\mathcal{E} \in \mathbb{E}} x(\mathcal{E}) = \lambda$ from the routing-computing scheme.

 
\end{proof}

\section{Complexity of CALP}
\label{sec:hardness}

In this section we prove that solving CALP is MAX SNP-hard even when $\compgraph$ has bounded degree and bounded edge weights. We first prove that if there is an $\alpha$-approximation for CALP then there is an $\alpha$-approximation algorithm for \emph{minimum cost embedding} problem. We give a linear reduction from \maxcut\ to the problem of finding \emph{minimum cost embedding.} Because \maxcut\ is a MAX SNP-hard problem, we get the following theorem.

\begin{theorem}
 \label{thm:main}
 For a DAG $\compgraph$ and arbitrary $\net$ solving CALP is MAX SNP-hard even when: (1)~Each vertex of $\compgraph$ (except for the sink) has bounded ($O(1))$ degree. (2)~Every edge of $\compgraph$ has bounded $(O(1))$ weight. (3)~All the outgoing edges of a vertex of $\compgraph$ have same weight. (4)~The network graph $\net$ has only three vertices.
\end{theorem}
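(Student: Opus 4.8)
The plan is to establish the hardness in two stages and compose them. \textbf{Stage one} transfers inapproximability from the \emph{minimum cost embedding} problem to CALP. The dual of CALP is $\min \sum_{e\in\netedges} c(e)y(e)$ subject to $\sum_{e} r_{\embedding}(e)y(e)\ge 1$ for every $\embedding\in\setofembeddings$ and $y(e)\ge 0$, so its separation oracle is precisely: given weights $y(e)$, find the embedding minimizing $\sum_e r_{\embedding}(e)y(e)$ and compare with $1$. By Theorem~\ref{thm:approx_separation} solving CALP is equivalent to solving this oracle, and running an $\alpha$-approximation for CALP inside the ellipsoid method yields an $\alpha$-approximation for the oracle, i.e.\ for minimum cost embedding. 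Hence any constant-factor inapproximability for minimum cost embedding lifts verbatim to CALP. \textbf{Stage two} gives an L-reduction from \maxcut\ to minimum cost embedding; since \maxcut\ is MAX SNP-complete and L-reductions compose and preserve APX-hardness, minimum cost embedding—and therefore CALP—is MAX SNP-hard.

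For the L-reduction, given a \maxcut\ instance $G=(V_G,E_G)$ I would build a three-vertex network $\net$ with $\netnodes=\{a,b,t\}$ ($t$ the sink) and a large computation DAG $\compgraph$ encoding $G$. A cut is encoded as a \emph{placement}: for each $v\in V_G$ introduce an intermediate-function node in $\compgraph$ whose computing vertex in any reasonable embedding is forced to be either $a$ or $b$, with placement at $a$ meaning $v\in S$ and at $b$ meaning $v\in\bar S$. For each edge $(u,v)\in E_G$ I attach a constant-size gadget on top of the two endpoint nodes whose cheapest embedding costs exactly one unit \emph{less} when the endpoints sit at different network vertices than when they sit at the same one. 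Because all gadgets are identical and of constant size, every vertex of $\compgraph$ has $O(1)$ degree, every edge of $\compgraph$ has $O(1)$ weight, and all outgoing edges of a vertex carry the same function (Remark~\ref{rmk:out-func}); the three edge weights of $\net$ are fixed constants as well, so conditions (1)--(4) of the statement hold.

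Summing the gadget contributions, every embedding that respects the binary placement has cost $C_0-\mathrm{cut}(S)$ for a fixed $C_0=\Theta(|E_G|)$, where $\mathrm{cut}(S)$ is the number of $G$-edges crossing the induced partition; thus a minimum cost embedding realizes a maximum cut and $\mathrm{OPT}_{\mathrm{emb}}=C_0-\mathrm{OPT}_{\mathrm{cut}}$. Using the standard bound $\mathrm{OPT}_{\mathrm{cut}}\ge |E_G|/2$ for \maxcut\ gives $\mathrm{OPT}_{\mathrm{emb}}=O(\mathrm{OPT}_{\mathrm{cut}})$, the first L-reduction inequality. For the second, any embedding of cost $c$ is first massaged (without increasing its cost) into one with binary placements and then decoded to a cut of value at least $C_0-c$; this yields $\mathrm{OPT}_{\mathrm{cut}}-\mathrm{cut}(S)\le c-\mathrm{OPT}_{\mathrm{emb}}$, the second inequality with constant $1$.

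The crux—and the step I expect to be the main obstacle—is designing the edge gadget so that two properties hold simultaneously. First, placing an endpoint node at the sink $t$, or computing a function redundantly at several network vertices (which Definition~\ref{def:embedding} and the indicator $r_{\embedding}^{\theta}$ explicitly allow), must never be strictly cheaper than an honest $\{a,b\}$ placement, so that the massaging-and-decoding step is well defined and cost-nonincreasing. Second, the cost gap between the same-side and different-side configurations must be a fixed positive constant independent of $G$, so that the total cost is genuinely affine in $\mathrm{cut}(S)$. Achieving both at once while respecting acyclicity of $\compgraph$, the constraint that source nodes map to the fixed network sources, and the uniform out-weight requirement, is the delicate part; once the gadget is in hand, the separation-oracle transfer of stage one and the two L-reduction inequalities are routine.
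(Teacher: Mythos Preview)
Your two-stage plan is exactly the paper's proof: the same duality transfer via Theorem~\ref{thm:approx_separation}, the same three-vertex network, an L-reduction from \maxcut\ via constant-size per-edge gadgets giving total cost $C_0-\mathrm{cut}(S)$ (the paper obtains $28|E_H|-K$ by reusing the Dahlhaus et~al.\ multiway-cut diamond, then massaging it into a valid DAG), the same use of $\mathrm{OPT}_{\mathrm{cut}}\ge |E_G|/2$ for the first L-inequality, and the same ``massage to single placements'' lemma for the second.

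One genuine slip: your assertion that constant-size gadgets alone force every vertex of $\compgraph$ to have $O(1)$ degree is false as stated, because the node you introduce for $v\in V_G$ is shared by all $\deg_G(v)$ edge gadgets and hence has degree $\Theta(\deg_G(v))$. The paper fixes this by starting from \maxcut\ on \emph{cubic} graphs (still MAX SNP-hard by~\cite{Berman99}); you need the same restriction, or some other device, to secure condition~(1).
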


\textbf{Proof Outline:} We give the reduction in several steps. The outline of the proof is as follows.
\begin{enumerate}
  \item We first consider the dual of CALP and its separation oracle which is  a version of the problem of finding the \emph{minimum cost embedding.}
  \item We then prove that there is an $\alpha$-approximation for CALP if and only if there is an $\alpha$-approximation for the separation oracle of its dual. This implies that if \emph{minimum cost embedding} problem is hard to approximate beyond some factor then finding the maximum rate of computation is also hard to approximate.  
 \item Next we prove MAX SNP-hardness of by reducing \maxcut\ problem to \emph{minimum cost embedding}. We use a series of gadgets to obtain the desired properties of the computation graph $\compgraph.$
 
\end{enumerate}
\subsection{Step 1 of the proof}
\label{sec:step1}

First we consider the dual of CALP which is presented below. Recall that $\setofembeddings$ represents the set of all possible embeddings of $\compgraph$ on $\net$ and $r_{\embedding}(e)$ represents the number of times an edge $e \in \netedges$ is used by the embedding $\embedding.$

\centerline{\rule{\columnwidth}{0.75pt}}
\textbf{Dual of CALP}

\textbf{Objective:} Minimize $C = \sum_{e \in \netedges} c(e) y(e)$
\textbf{subject to}
\begin{enumerate}
 \item Cost constraints: $ \sum\limits_{e \in \netedges} r_{\embedding}(e)y(e) \geq 1, \mbox{ } ~\forall \embedding \in \setofembeddings,$ where $r_{\embedding}(e)= \sum_{\theta \in \Theta} r_{\embedding}^\theta(e) w(\theta)$.
 \item Non-negativity constraints: $ y(e) \geq 0 ~\forall e \in \netedges.$
\end{enumerate}

\centerline{\rule{\columnwidth}{0.75pt}}

Note that $r_{\embedding}(e)$ can be computed given the embedding $\embedding.$ Given a vector $\{x(e)| e \in \netedges\}$ the total cost of an embedding can be defined as:
\begin{equation}
 C(\embedding) := \sum_{e \in \netedges} r_{\embedding}(e) x(e) = \sum_{e \in \netedges} \left(\sum\limits_{\theta \in \Theta} r_{\embedding}^\theta(e) w(\theta)\right) x(e). \label{eq:orig_cost}
\end{equation}

Observe that for any given solution of the dual of CALP, $\{y(e)| e \in \netedges\},$ a cost constraint corresponding to an embedding $\embedding$ is $C(\embedding) \geq 1.$ Let us now look at the separation oracle of the dual of CALP.

\begin{definition}[\textbf{Separation oracle of Dual of CALP}]
\textbf{Instance:} A network graph $\net,$ a computation DAG $\compgraph,$ weight function $\{w(\theta) | \theta \in \Theta\}$ and a vector $\{y(e) | e \in \netedges\}.$
\textbf{Output:} If $C(\embedding) \geq 1 ~\forall \embedding \in \setofembeddings,$ then output ``yes" else output ``no" and an embedding $\embedding$ such that $C(\embedding) <1.$
\end{definition}

Note that to solve the above problem, it suffices to compute the minimum cost embedding of $\compgraph$ on $\net.$ A version of minimum cost embedding problem has been studied in \cite{Vyavahare14}. We formally define this cost in Section~\ref{sec:app_algo} and then derive its relation to the cost defined in Equation~\eqref{eq:orig_cost}. In the next section we prove the relation between CALP and the problem of finding minimum cost embedding of $\compgraph$ on $\net.$

\subsection{Step 2 of the proof}
\label{sec:step1.1}

In this section we prove the equivalence between the the problem of solving CALP and the separation oracle of its dual, which is to find the minimum cost embedding. In the process we present a procedure to find a solution of CALP if we have an algorithm to solve minimum cost embedding problem. This will be used in Section~\ref{sec:app_algo} to approximately solve CALP. Specifically we prove the following theorem.
\begin{theorem}
 \label{thm:approx_separation}
 There is a polynomial time $\alpha$-approximation algorithm to solve CALP if and only if there is a polynomial time $\alpha$-approximation algorithm for finding the minimum cost embedding of $\compgraph$ on $\net.$
\end{theorem}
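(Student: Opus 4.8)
The plan is to read CALP as a packing linear program with one variable $x(\embedding)$ per embedding and one constraint per edge, so that its dual is the covering program (``Dual of CALP'') with one constraint $C(\embedding)\ge 1$ for every $\embedding\in\setofembeddings$. Since $\setofembeddings$ is exponentially large, neither program can be written down explicitly, and the natural engine is the ellipsoid method applied to the dual, whose only nontrivial ingredient is a separation oracle. As noted in Section~\ref{sec:step1}, deciding whether a candidate $y=\{y(e)\}$ violates some dual constraint is exactly the question whether $\min_{\embedding}C(\embedding)<1$, i.e. it is the minimum cost embedding problem for the weight vector $y$. Hence an exact oracle for minimum cost embedding lets the ellipsoid method solve the dual, and therefore CALP, exactly in polynomial time; the real content of the theorem is to carry this equivalence through when the oracle is only an $\alpha$-approximation, and to keep the loss at the single factor $\alpha$ in each direction.

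For the forward direction (minimum cost embedding $\Rightarrow$ CALP) I would first convert the $\alpha$-approximation into an approximate separation oracle: on input $y$, run the approximation to obtain $\embedding$ with $C(\embedding)\le\alpha\cdot\min_{\embedding'}C(\embedding')$; if $C(\embedding)<1$ return the genuinely violated constraint $\sum_{e\in\netedges}r_{\embedding}(e)y(e)\ge 1$, and otherwise declare $y$ feasible. Running the ellipsoid method on the dual with this oracle optimizes over a relaxed region $Q_\alpha$ sandwiched as $Q\subseteq Q_\alpha\subseteq\tfrac1\alpha Q$, where $Q=\{y\ge 0:C(\embedding)\ge 1~\forall\embedding\}$; using the homogeneity $C_{\lambda y}(\embedding)=\lambda C(\embedding)$, any accepted $y^\ast$ satisfies $\alpha y^\ast\in Q$, so $\alpha y^\ast$ is truly dual feasible with objective at most $\alpha$ times the dual optimum. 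Because CALP and its dual share the same optimal value, this already pins the rate; to exhibit an actual primal solution I would use that CALP has only $|\netedges|$ nontrivial constraints, hence an optimal solution supported on at most $|\netedges|$ embeddings. The embeddings returned as violated constraints during the ellipsoid run form a polynomial candidate set $S$, and solving CALP restricted to the columns in $S$ then recovers a primal solution of value within a factor $\alpha$ of $C(\net,\compgraph)$.

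For the reverse direction (CALP $\Rightarrow$ minimum cost embedding) I would use that, by LP duality, an $\alpha$-approximation for CALP is an $\alpha$-approximation for minimizing the linear functional $\sum_{e\in\netedges}c(e)y(e)$ over the covering polytope $Q$, for every choice of capacities $c$; that is, it is an approximate optimization oracle for $Q$. Since minimum cost embedding is exactly the evaluation/separation problem for $Q$, the polynomial-time equivalence between approximate optimization and approximate separation over $Q$ supplies the desired oracle: to evaluate $\min_{\embedding}C(\embedding)$ at a given $y_0$ I would binary-search on a threshold $\beta$ and test membership of $y_0/\beta$ in $Q$ by driving the optimization oracle through the ellipsoid-based construction of a separating hyperplane, returning the embedding that witnesses violation.

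The main obstacle I expect is not the high-level duality but the bookkeeping that keeps the loss at exactly $\alpha$ while the programs stay implicit and exponential. On the forward side this means arguing that the sandwich $Q\subseteq Q_\alpha\subseteq\tfrac1\alpha Q$ is tight enough that scaling by $\alpha$ incurs no loss beyond the oracle's own factor, and that the polynomially many generated columns in $S$ genuinely contain an optimal basis so the restricted primal matches the unrestricted value. On the reverse side it means pushing the factor $\alpha$ through the optimization-to-separation reduction without the additive slack usually attached to such reductions. Verifying that the homogeneity of $C(\embedding)$ and the nonnegativity intrinsic to a packing/covering pair together let one dispense with any $(1+\epsilon)$ factor is, I expect, the delicate step.
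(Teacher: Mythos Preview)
Your proposal is correct and follows essentially the same route as the paper: both directions are the standard Jain-style equivalence between approximate separation and approximate optimization for a packing/covering pair, carried out via the ellipsoid method, with the homogeneity of the constraints used to scale an ``accepted'' $y$ by $\alpha$ into a genuinely feasible dual point, and with the polynomially many generated violated constraints used to write down and solve a restricted primal. The paper's reverse direction is phrased through the polar $P^\ast$ of the dual polytope rather than your binary search on $\beta$, but this is only a cosmetic difference in how the optimization-to-separation reduction is packaged.
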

\begin{proof}
The arguments to prove the theorem are similar to the one presented in Theorem~4 of \cite{Jain03} where they consider a packing Steiner tree LP. The main difference between their packing LP and our LP is that in their case the coefficient of the dual variables $\{y(e) | e \in \netedges\}$ are $0/1.$ In our LP (the dual of CALP) the coefficient is $r_{\embedding'}(e)$ which could be any positive number depending on the embedding $\embedding'.$

\textbf{In the forward direction} starting from an $\alpha$-approximation polynomial time algorithm, say $A,$ for the minimum cost embedding we give an $\alpha$-approximation polynomial time algorithm to solve the CALP. First we add the inequality $\sum_{e \in \netedges} c(e)y(e) \leq R$ in the constraints of dual of CALP and using ellipsoid algorithm and binary search (over various values of $R$) we find the minimum value of $R$, say $R^*$, for which the dual is feasible. We use the algorithm $A$ for the separation oracle of dual while running the ellipsoid method. The separation oracle works as follows: First for a given set of $\{y(e)\}$ it checks the inequality $\sum_{e \in \netedges} c(e)y(e) \leq R.$ If this is true then it uses algorithm $A$ to find the minimum cost embedding $\embedding$ of cost $C(\embedding).$ If $C(\embedding) <1$ then we know that $\{y(e)\}$ is not a feasible solution of the dual and $\embedding$ gives a separating hyperplane. But if $C(\embedding) >1$ then $\{y(e)\}$ is considered to be a feasible solution and the corresponding dual (with the added inequality) is considered feasible. Since algorithm $A$ is an $\alpha$-approximation of the optimal minimum cost embedding we know that the above conclusion might be incorrect and the dual might indeed be infeasible. However, in this case $\{\alpha y(e)\}$ gives the feasible solution with $R$ replaced by $\alpha R.$ Note that, this is possible because the right hand side of the cost constraints is all $1$ in the dual. Therefore if $R^{*}$ is the minimum value of $R$ found feasible by the ellipsoid method then we know that the optimal solution of dual lies in the range between $R^{*}$ and $\alpha R^{*}.$ Thus by strong duality of linear programming this method gives us $\alpha$ approximation value of the solution of CALP. 

To find the actual solution corresponding to this value, i.e., $\{x(\embedding) \forall \embedding \in \setofembeddings'\}$ we do the following: We know that the ellipsoid method ends in polynomial time giving polynomially many separating hyperplanes to reach to the $\alpha$-approximate solution. These hyperplanes are sufficient to show that the solution of dual is atleast $R^{*}.$ Corresponding to each of these hyperplanes in the dual there is a variable in the primal CALP. If we set all the other variables to zero then we get a polynomial sized version of CALP whose solution is at least $R^{*}.$ This version of CALP can be solved in polynomial time giving the $\alpha$-approximate solution $\{x(\embedding)\}$ of CALP. This completes the forward direction of Theorem~\ref{thm:approx_separation}.

\textbf{In the reverse direction} we start with an $\alpha$-approximate solution, say $\{x(\embedding)\},$ of CALP and find an $\alpha$-approximate minimum cost embedding. Recall that the objective function value corresponding to this is $x_{sol} = \sum_{\embedding \in \setofembeddings'}x(\embedding).$ By LP-duality we know that $x_{sol} /\alpha$ is an $\alpha$-approximate value of the optimal of dual of CALP and $x_{sol}/\alpha = \sum_{e \in \netedges} c(e) y(e).$ We set each $y(e) := \frac{x_{sol}}{\alpha c(e) |E|}$ to get the corresponding solution (possibly infeasible) of the dual of CALP. 

If $P$ is the polytope defined by the constraints of dual of CALP then we define its polar by $P^{*} := \{z | \langle z,y\rangle \geq 1, \forall y \in P\}.$ It is easy to observe that if we can find an approximate solution over $P$ then we can approximately solve the separation oracle problem of $P^{*}$ and $(P^{*})^{*} =P.$ Using the $\alpha$-approximate solution $\{y(e)\}$ found above we get $\alpha$-approximate separation oracle of $P^{*}.$ Using the ellipsoid method mentioned in the forward direction of the proof and this separation oracle we get an $\alpha$-approximate solution on $P^{*}.$
As $(P^{*})^* =P$  this solution over $P^{*}$ gives an $\alpha$-approximate separation oracle of $P$ which is equivalent to approximately solving the minimum cost embedding problem. In this case also as the right hand side of the edge constraints are all $1,$ the approximation ratio is preserved. 

\end{proof}

\subsection{Step 3 of the proof}
\label{sec:step2}
In Section~\ref{sec:step1.1} we showed that solving CALP is equivalent to solving minimum cost embedding. In this section we reduce a known NP-complete problem, \maxcut\  \cite{Garey79}, to the minimum cost embedding problem thus proving that solving CALP is NP-complete.

A \maxcut\ problem is defined as follows: Given an unweighted graph $H=(V_{H},E_{H})$ and a number $K,$ check whether there is a partition of $V_H$ into two sets $V_1$ and $V_2$ such that there are at least $K$ edges between $V_1$ and $V_2.$ Moreover, it is known that if the input graph of \maxcut\ problem is a cubic graph \footnote{A graph in which each vertex has exactly degree three is called cubic graph.} then the problem is MAX SNP-hard \cite{Berman99}. We start with an instance of \maxcut\ with cubic graph and prove the MAX SNP-hardness of minimum cost embedding problem.

 Given an instance $\phi =\{H,K\}$ of \maxcut\ where $H$ is a cubic graph, we generate an instance of minimum cost embedding problem $\psi = (\compgraph,S_{\compgraph},\omega_p,w;\net,S_\net,t,y).$ Recall that $\net=(\netnodes,\netedges)$ is the network graph with $S_{\net} \subset \netnodes$ sources, $t$ as the sink and $y$ as the weight function on $\netedges.$ Similarly, $\compgraph=(\compnodes,\compedges)$ is a computation DAG with $S_{\compgraph}$ as sources, $\omega_p$ as the sink and $w$ as the weight function on $\compedges.$ 
 
\begin{theorem}
 \label{thm:flow-to-cost}
 For an instance $\phi$ of \maxcut\ we construct an instance $\psi$ of the minimum cost embedding such that $\phi$ has a cut of size at least $K$ if and only if $\psi$ has an optimal embedding of cost at most $28|E_{H}| -K.$

\end{theorem}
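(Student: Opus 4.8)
The plan is to give an explicit gadget reduction that turns the cubic graph $H$ into a computation DAG $\compgraph$ together with a three–vertex network $\net$, so that the freedom of \emph{where} each gadget computes its output mimics the freedom of choosing a side of the cut. I would take $\net$ to have vertex set $\{a,b,t\}$, with $a,b$ playing the role of the two sides of the partition and $t$ the sink; recalling that Definition~\ref{def:embedding} fixes each source of $\compgraph$ to a prescribed source of $\net$, the genuine freedom in an embedding lies in the placement of the intermediate computations. I would choose the weights $y$ on $\netedges$ and the function weights $w$ so that the \emph{side} assigned to a vertex $v\in V_H$ is encoded by the network vertex ($a$ or $b$) at which the ``output'' function $\theta_v$ of $v$'s gadget is produced, and so that combining an object held at $a$ with one held at $b$ (or delivering across sides) forces a costly crossing, whereas staying on one side is cheap.

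For the construction I would introduce, for each $v\in V_H$, a vertex gadget whose single output $\theta_v$ may be computed at either $a$ or $b$ and whose three copies (one per incident edge, as $H$ is cubic) are wired so that they are all forced to originate from the common location of $\theta_v$, thereby enforcing side–consistency of $v$ across its three edges. For each edge $(u,v)\in E_H$ I would introduce an edge gadget that consumes a copy of $\theta_u$ and a copy of $\theta_v$ and funnels the result toward $t$, with weights calibrated so that the embedding cost of this gadget is exactly $28$ when $u$ and $v$ are co–located (the edge is not cut) and exactly $27$ when they sit on opposite sides (the edge is cut). Summed over all $|E_H|$ edges, the base contributions total $28|E_H|$ and each cut edge shaves off one unit, giving total cost $28|E_H|-K$. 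The remaining devices in the series are the standard degree–reduction and weight–equalization gadgets needed to enforce conclusions (1)--(3) of Theorem~\ref{thm:main} (bounded degree, bounded and uniform out–edge weights) without disturbing this accounting; it is their cumulative size that pins down the constant $28$.

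The two directions then run as follows. Given a cut $(V_1,V_2)$ of size at least $K$, I would place $\theta_v$ at $a$ for $v\in V_1$ and at $b$ for $v\in V_2$, route each edge gadget in the cheapest admissible way, and sum the per–gadget costs to obtain an embedding of cost at most $28|E_H|-K$. Conversely, from any embedding of cost at most $28|E_H|-K$ I would read off, for each $v$, the location of $\theta_v$ as its side and argue that the number of bichromatic edges is at least $K$.

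The hard part will be the converse direction, because Definition~\ref{def:embedding} permits an embedding to compute a function at several vertices at once and to route it along redundant paths, so an adversarial embedding need not assign each $\theta_v$ a single well–defined side. I would handle this by a normalization (exchange) argument: any embedding can be transformed, \emph{without increasing} its cost, into a canonical one in which each vertex gadget produces $\theta_v$ at exactly one of $a,b$ and each edge gadget uses the unique cheapest routing consistent with those choices. This is where the rigidity built into the gadgets is essential — the weights must make every form of hedging (duplicating a vertex output across $a$ and $b$, or splitting an edge gadget) cost strictly more than committing to one side — so that the minimum over all embeddings coincides with the minimum over canonical embeddings, which equals $28|E_H|$ minus the maximum cut of $H$. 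Once canonicity is established, the cut read–off and the cost bookkeeping are routine, and the \emph{exactness} of the additive constant $28$ (rather than a mere inequality) is precisely what makes the reduction linear and thus yields the MAX SNP–hardness claimed in the surrounding argument.
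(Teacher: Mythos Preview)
Your plan is correct at the strategic level and matches the paper closely: a three-vertex network with two ``side'' vertices and the sink, a per-edge cost dichotomy of $27$ versus $28$, and a normalization/exchange lemma to tame embeddings that duplicate computations. You have also correctly identified the normalization step as the crux of the converse.

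Where the paper differs is in the concrete gadgetry. It does \emph{not} build a separate vertex gadget producing an output $\theta_v$ that fans out to three edge gadgets. Instead it lifts, for each edge $(x,y)\in E_H$, the nine-node gadget of Dahlhaus et al.\ for $3$-way multiterminal cut, with $x$ and $y$ appearing as literal shared internal nodes of the DAG across all edge gadgets in which they participate; the sources are eight auxiliary nodes $S_{1xy}^{*},S_{2xy}^{*}$ per edge, fixed to $S_1,S_2\in\netnodes$ respectively. The constant $28$ and the $27$/$28$ split are thus inherited from the known $3$-way cut analysis of that gadget (Lemma~\ref{lm:cut_cost_opt} via Lemma~\ref{lm:mincost_rembedding}), rather than being tuned from scratch. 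A second, small gadget (Fig.~\ref{fig:edge_gadget}) is inserted at every multi-out vertex solely to equalize outgoing weights, and its heavy $z$-weighted edges are shown never to be exposed in an optimum (Lemma~\ref{lm:edgewt}). Side-consistency of a vertex $x$ across its three incident edge gadgets is not structurally enforced by wiring as you propose, but is instead proved cost-optimal inside the normalization lemma (Lemma~\ref{lm:cost_cut}), using that splitting $x$ exposes extra weight-$4$ source edges that outweigh any saving on the weight-$1$ internal edges.

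So your scheme of bespoke vertex gadgets with a fanned-out $\theta_v$ could be made to work, but you would have to design and analyze those gadgets and recover the exact constant yourself; the paper sidesteps this by piggy-backing on the Dahlhaus et al.\ construction, which supplies both the constant $28$ and the needed case analysis essentially for free.
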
 

\begin{proof}
First we create an undirected network graph. We consider $\net$ to be a complete graph on three vertices with $V = \{S_1,S_2,t\}.$ We set $S_{\net} = \{S_1,S_2\}$ as the sources and $t$ as the sink vertex. We set the weight $y(e) =1 ~\forall e \in \netedges.$  

Now we create the computation graph $\compgraph$ from $H$ using a series of gadgets each of which enables the desired properties on $\compgraph$ as follows: We start with the gadget shown in Fig.~\ref{fig:diamond_gadget}(a) for each edge $(x,y) \in E_{H}.$ This gadget is used to prove the MAX SNP-hardness of Multiterminal cut from \maxcut\ in \cite{Dahlhaus94}. We direct readers to \cite{Dahlhaus94} for more details of this gadget. Note that each $S_{ixy}$ is connected to four vertices with edges of weight four. We create four vertices for each $S_{ixy}$ (one for each of its one outgoing edge) and connect one of its neighbor of $S_{ixy}$ to exactly one of these newly created vertices. We put the directions on the edges of Fig.~\ref{fig:diamond_gadget}(a) such that all the edges from $S_{ixy}$ are outgoing edges and $\omega_p$ has all incoming edges. The resulting gadget is shown in Fig.~\ref{fig:diamond_gadget}(b). It is easy to observe that Fig.~\ref{fig:diamond_gadget}(b) is just a redrawn directed version of Fig.~\ref{fig:diamond_gadget}(a) with a separate vertex for each edge of $S_{ixy}.$ We denote the graph formed by replacing each edge of $E_H$ by the gadget of Fig.~\ref{fig:diamond_gadget}(b) by $I.$ Finally, we replace every vertex of $I,$ with multiple outgoing edges, by the gadget shown in Fig.~\ref{fig:edge_gadget}. 
 
\begin{figure}[t]
\begin{center}
 \subfloat[]{
\begin{tikzpicture}[>=latex]
\footnotesize
 \foreach \x in {-2,0,2}{
  \foreach \y in {-2,0,2}{
    \fill (\x,\y) circle (0.07cm);
   }
   }
 
  \foreach \x in {-2,0}{
   \foreach \y in {-2,2}{
     \draw [-] (\x,\y) -- (\x+2,\y);
   }
   }
  
  \foreach \y in {2,0}{
   \foreach \x in {-2,2}{
     \draw [-] (\x,\y) -- (\x,\y-2);
   } 
   }
   \foreach \x in {-2,2}{
     \draw [-] (0,0) -- (\x,0) node[draw=none,midway,below] {$4$};
    \draw [-] (0,0) -- (0,\x) node[draw=none,midway,left] {$4$};
    }
    
   \node at (-1,1.7) [draw=none] {$4$};
   \node at (1,-1.7) [draw=none] {$4$};
   
   \node at (-1.7,1) [draw=none] {$4$};
   \node at (1.7,-1) [draw=none] {$4$};
   
    \draw[-]  (-2,2) to[out=30,in=150] node [draw=none,pos=0.5,above] {$4$} (2,2);
    \draw[-]  (-2,2) to[out=240,in=120] node [draw=none,pos=0.5,left] {$4$} (-2,-2);
     \draw[-]  (-2,-2) to[out=330,in=210] node [draw=none,pos=0.5,below] {$4$} (2,-2);
      \draw[-]  (2,2) to[out=300,in=60] node [draw=none,pos=0.5,right] {$4$} (2,-2);
       \draw[-]  (-2,0) to[out=30,in=150] node [draw=none,pos=0.75,above] {} (2,0);
     \draw[-]  (0,2) to[out=300,in=60] node [draw=none,pos=0.75,right] {} (0,-2);   
     
   \node at (-2,2.2) [draw=none,left] {$S_{1xy}$};
   \node at (0,2.2) [draw=none] {$x$};
   \node at (2,2.2)[draw=none,right] {$a_{xy}$};
   
   \node at (-2,0.2) [draw=none,left] {$y$};
   \node at (-0.3,0.2) [draw=none] {$S_{2xy}$};
   \node at (2,0.2)[draw=none,right] {$b_{xy}$};
   
   \node at (-2,-2.2) [draw=none,left] {$d_{xy}$};
   \node at (0,-2.2) [draw=none] {$c_{xy}$};
   \node at (2,-2.2)[draw=none,right] {$\omega_p$};
\end{tikzpicture}
}
\subfloat[]{
\begin{tikzpicture}[>=latex]
\footnotesize
 \foreach \x in {-2,-1,1,2}{
  \foreach \y in {2,-1}{
    \fill (\x,\y) circle (0.07cm);
   }
   }
 
  \foreach \x in {-1.5,1.5}{
    \fill (\x,1) circle (0.07cm);
   }
   \foreach \x in {-2.5,-0.5,0.5,2.5}{
    \fill (\x,0) circle (0.07cm);
   } 
  \fill (0,-2.5) circle (0.07cm);
  
  \foreach \x in {-2,-1}{
     \draw [-triangle 45] (\x,2) -- (-1.5,1) node[draw=none,midway,right] {$4$};
     \draw [-triangle 45] (-1.5,1) -- (\x,-1);
    } 
    
  \foreach \x in {2,1}{
     \draw [-triangle 45] (\x,2) -- (1.5,1) node[draw=none,midway,right] {$4$};
     \draw [-triangle 45] (1.5,1) -- (\x,-1);
    } 
 
 \foreach \x in {-2,-1,1,2} {
  \draw [-triangle 45] (\x,-1) -- (0,-2.5) node[draw=none,midway,right] {$4$};
  }
  
  \draw [-triangle 45] (-2.5,0) -- (-2,-1) node[draw=none,midway,left] {$4$};
  \draw [-triangle 45] (-0.5,0) -- (-1,-1)node[draw=none,midway,right] {$4$};
  \draw [-triangle 45] (0.5,0) -- (1,-1)node[draw=none,midway,left] {$4$};
  \draw [-triangle 45] (2.5,0) -- (2,-1)node[draw=none,midway,right] {$4$};
  
  \draw [-triangle 45] (-2,-1) to[out=330,in=210] (1,-1);
  \draw [-triangle 45] (2,-1) to[out=210,in=330] (-1,-1);
  \node at (-1.7,1) [draw=none] {$x$};
   \node at (-2.4,-1) [draw=none] {$a_{xy}$};
   
   \node at (1.3,1) [draw=none] {$y$};
   \node at (0.5,-1)[draw=none] {$b_{xy}$};
   
   \node at (2.3,-1) [draw=none] {$d_{xy}$};
   \node at (-1.3,-1) [draw=none] {$c_{xy}$};
   \node at (0,-2.7)[draw=none] {$\omega_p$};
   
   \node at (-2,2.3) [draw=none] {$S_{1xy}^{x}$};
   \node at (-1,2.3) [draw=none] {$S_{2xy}^{x}$};
   \node at (1,2.3) [draw=none] {$S_{1xy}^{y}$};
   \node at (2,2.3) [draw=none] {$S_{2xy}^{y}$};
   \node at (-2.5,0.3) [draw=none] {$S_{1xy}^{a}$};
   \node at (-0.5,0.3) [draw=none] {$S_{2xy}^{c}$};
   \node at (0.5,0.3) [draw=none] {$S_{2xy}^{b}$};
   \node at (2.5,0.3) [draw=none] {$S_{1xy}^{d}$};
\end{tikzpicture}
}
\end{center}
 \caption[Gadget for edges in $H$]{(a) Gadget for edge $(x,y)$ in $H.$ (b) Redrawing the gadget shown in (a) with new vertices for each outgoing edge of $S_{ixy}.$ Numbers near the edges represent their weights and the unlabeled edges have weight $1.$}
  \label{fig:diamond_gadget}
\end{figure}
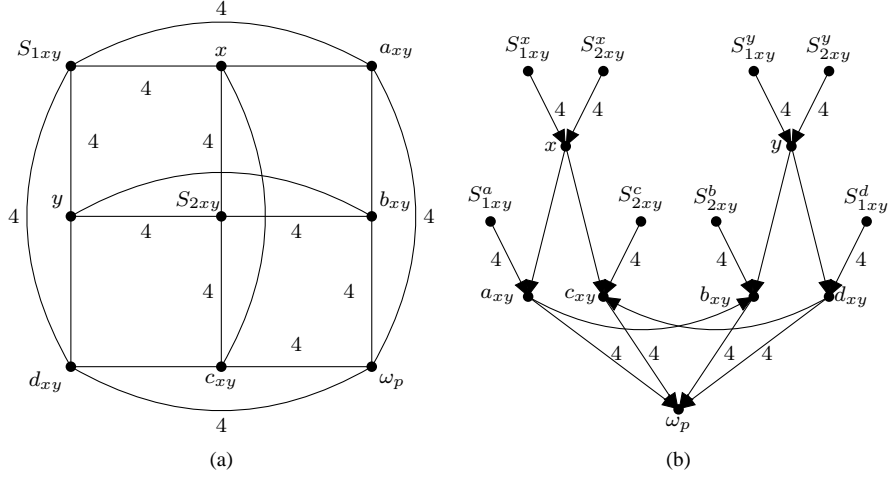  

We set all the vertices of type $S_{ixy}^{*}$ as sources, i.e., $S_{\compgraph} = \{S_{ixy}^{*} | x,y \in V_{H}, i\in \{1,2\}, * \in \{x,y,a,b,c,d\}\}$ and the sink is $\omega_p.$ From each edge gadget we get eight sources thus $|S_{\compgraph}| = 8 |E_H|.$  Similarly, the sink vertex $\omega_p$ has $4 |E_{H}|$ incoming edges. Observe that graph $\compgraph$ has the following properties.

\begin{lemma}
 \label{lm:valid_dag}
 The DAG $\compgraph$ created from an instance $\phi$ of \maxcut\ has the following properties: (1)~All the vertices in $S_{\compgraph}$ have only outgoing edge and  the sink vertex $\omega_p$ has only incoming edges. (2)~All the intermediate vertices in $\compgraph$ have atleast one incoming and one outgoing edge. (3)~There are no directed cycles in $\compgraph.$ (4)~Out-degree of each vertex is bounded. (5)~Weight on each edge is bounded.
\end{lemma}
\begin{proof}
The proof directly follows from the gadgets. Details of the proof are presented in Appendix~\ref{app:lemmas}.
\end{proof}

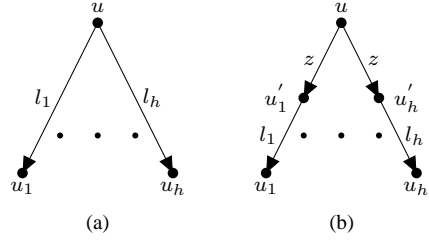
\begin{figure}[t]
 \begin{center}
   \subfloat[]{ 
 \begin{tikzpicture}[>=latex]
\footnotesize
 \fill (0,1) circle (0.07cm);
 \fill (-1,-1) circle (0.07cm);
 \fill (1,-1) circle (0.07cm);
 \draw [-triangle 45] (0,1) -- (-1,-1) node [midway,left] {$l_1$};
 \draw [-triangle 45] (0,1) -- (1,-1) node [midway,right] {$l_h$};
 \node at (0,1.2) [draw=none] {$u$};
 \node at (-1,-1.2) [draw=none] {$u_1$};
 \node at (1,-1.2) [draw=none] {$u_h$};
 
 \foreach \x in {-0.5,0,0.5} {
   \fill (\x,-0.5) circle (0.04cm);
  }
  
 \end{tikzpicture}
}
\hspace{10pt}
\subfloat[]{
\begin{tikzpicture}[>=latex]
\footnotesize
 \fill (0,1) circle (0.07cm);
  \fill (-1,-1) circle (0.07cm);
 \fill (1,-1) circle (0.07cm);
 \foreach \x in {-0.5,0.5} {
 \fill (\x,0) circle (0.07cm);
 }
  \draw [-triangle 45] (0,1) -- (-0.5,0) node [midway,left] {$z$};
  \draw [-triangle 45] (0,1) -- (0.5,0) node [midway,right] {$z$};
 
 \draw [-triangle 45] (-0.5,0) -- (-1,-1) node [midway,left] {$l_1$};
 \draw[-triangle 45] (0.5,0) -- (1,-1) node [midway,right] {$l_h$};
 
 \node at (0,1.2) [draw=none] {$u$};
 \node at (-1,-1.2) [draw=none] {$u_1$};
 \node at (1,-1.2) [draw=none] {$u_h$};
 \node at (-0.6,0) [draw=none,left] {$u_1^{'}$};
 \node at (0.6,0) [draw=none,right] {$u_h^{'}$};
 
 \foreach \x in {-0.5,0,0.5} {
   \fill (\x,-0.5) circle (0.04cm);
  }
 \end{tikzpicture}
}
\end{center}
  \caption[Gadget for outgoing edges of $I$]{(a) A vertex in $I$ with $h$ outgoing edges (b) Gadget to replace the vertex shown in (a). Labels near the edges represent their weights and $z = h \mbox{max}(l_1,\ldots,l_h) +1$}
  \label{fig:edge_gadget}
\end{figure}  
 
Recall that the network graph generated from \maxcut\ has only three vertices. We assume that each source vertex of type $S_{1*}^{*}$ in $\compgraph$ is generated at $S_1 \in \netnodes.$ Similarly, each source of type $S_{2*}^{*}$ is generated at $S_2.$ The sink vertex $\omega_p \in \compnodes$ is mapped to $t \in \netnodes.$ This completes the generation of an instance $\psi$ from $\phi$ of \maxcut.

Before we start proving Theorem~\ref{thm:flow-to-cost}, we prove some properties of the gadgets of Figs.~\ref{fig:diamond_gadget},~\ref{fig:edge_gadget}. We say that an edge of $\compgraph$ is exposed in an embedding if its weight is considered while computing the cost of the embedding.

\begin{lemma}
 \label{lm:edgewt}
 In the minimum cost embedding of $\compgraph$ on $\net,$ any edge of weight $z$ is never exposed from the gadget of Fig.~\ref{fig:edge_gadget}(b).
\end{lemma}

\begin{lemma}
\label{lm:mincost_rembedding}
If a $3$- way multiterminal cut (with terminals being $S_{1xy},S_{2xy}, \omega_p$) of the gadget shown in Fig.~\ref{fig:diamond_gadget}(a) has weight $W$ then there is an embedding of the gadget of Fig.~\ref{fig:diamond_gadget}(b) (along with the Fig.~\ref{fig:edge_gadget}(b)) of cost $W$ on $\net.$
\end{lemma} 

Using Lemma~\ref{lm:mincost_rembedding}, we can borrow the following result from Lemma~4.1 of \cite{Dahlhaus94} for the $3$-way cut of Fig.~\ref{fig:diamond_gadget}(a). Refer to \cite{Dahlhaus94} for more details.

\begin{lemma}
 \label{lm:cut_cost_opt}
 There are embeddings of the gadget of Fig.~\ref{fig:diamond_gadget}(b) (along with Fig.~\ref{fig:edge_gadget}(b)) on $\net$ with the following properties.
 \begin{enumerate}
  \item There is an embedding with cost $27$ in which $x,a_{xy}$ are mapped to $S_1;$ $y,b_{xy}$ to $S_2$ and $c_{xy},d_{xy}$ to $t.$ Similarly, there is an embedding with cost $27$ in which $y,d_{xy}$ are mapped to $S_1;$ $x,c_{xy}$ to $S_2$ and $a_{xy},b_{xy}$ to $t.$
  \item Any other embedding in which $x$ is mapped to $S_1$ but $y$ is not mapped to $S_2$ or vice a versa has cost strictly greater than $27.$ 
  \item Moreover, there are embeddings in which $x,y$ both are either mapped to $S_1$ or $S_2$ have cost exactly $28.$ For example, an embedding in which $x,y,a_{xy}$ are mapped to $S_1;$ $b$ to $S_2$ and $c_{xy},d_{xy}$ to $\omega_p$ has cost $28.$ Similarly, an embedding in which $x,y,c_{xy}$ are mapped to $S_2;$ $d_{xy}$ to $S_1$ and $a_{xy},b_{xy}$ to $\omega_p$ has cost $28.$
 \end{enumerate}
\end{lemma}

And finally we need the following lemma to prove Theorem~\ref{thm:flow-to-cost}.
\begin{lemma}
\label{lm:cost_cut}
Given any embedding $\embedding$ with cost $C(\embedding)$ of $\compgraph$ on $\net$ in which a vertex of $\compgraph$ is mapped to multiple vertices of $\net$ we can obtain an embedding $\embedding'$ in which no vertex of $\compgraph$ is mapped to more than one vertex of $\net$ and has cost $C(\embedding') \leq C(\embedding)$ in polynomial time.
\end{lemma}

Proofs of all these lemmas are presented in Appendix~\ref{app:lemmas}.

\textbf{Proof of forward direction (Theorem~\ref{thm:flow-to-cost}):} We need to prove that if there is a \maxcut\ of graph $H$ of size at least $K$ then there is an embedding of cost at most $28|E_{H}| -K$ of $\compgraph$ on $\net.$ Suppose there is a partition of $V_H$ into sets $V_1,V_2$ such that the number of edges between them is at least $K.$ Then we create an embedding of $\compgraph$ on $\net$ as follows: Map all the vertices of $V_1$ to $S_1$ and $V_2$ to $S_2.$ Thus for every edge gadget $x,y$ are either mapped to $S_1$ or $S_2.$ If $x,y$ both are mapped to different $S_i, i \in \{1,2\}$ then map the intermediate vertices of this gadget according to the embedding of Lemma~\ref{lm:cut_cost_opt} point~1 and if they are mapped to the same vertex then use the embeddings given in point~3 of Lemma~\ref{lm:cut_cost_opt}. Specifically, if $x,y$ are in the same set in the \maxcut\ then the gadget will contribute $28$ to the cost of the embedding else it will contribute $27.$ As there are at least $K$ edges across the cut, the total cost of the embedding of $\compgraph$ on $\net$ is at most $28|E_H| -K.$

\textbf{Proof of backward direction (Theorem~\ref{thm:flow-to-cost}):} Now we need to prove that if there is a minimum cost embedding of cost less than $28|E_H| -K$ then there is a cut of size at least $K$ for $H.$ From Lemma~\ref{lm:cost_cut} we know that the minimum cost embedding maps every vertex of $\compgraph$ to only one vertex of $\net.$ For each edge $(x,y) \in E_H$ we know from Lemma~\ref{lm:cut_cost_opt} (point~2) that the cost of the embedding from its gadget is $\geq 28$ unless $x,y$ (or $y,x$) are mapped to $S_1,S_2$ (or $S_2,S_1$) respectively. If the cost of the embedding is less than $28|E_H| -K$ then there must be at least $K$ edge gadgets in which $x,y$ (or $y,x$) are mapped to $S_1,S_2$ (or $S_2,S_1$) respectively. To get a cut of $H$ from this embedding we take $\{x | x \in V_H\}$ which are mapped to $S_1$ to be in $V_1$ and the vertices which are mapped to $S_2$ to be in $V_2.$ The vertices of $V_H$ which are mapped to $\omega_p$ are arbitrarily put in the set $V_1$ or $V_2.$ By our earlier arguments there are at least $K$ edges between $V_1$ and $V_2$ thus giving a cut of size at least $K.$
\end{proof}

We now show that the reduction presented in Theorem~\ref{thm:flow-to-cost} is indeed a linear reduction thus proving the MAX SNP-hardness of the minimum cost embedding problem \cite{Dahlhaus94}. We just showed that an instance $\phi$ of \maxcut\ with optimal value $\mathsf{opt}(\phi)$ can be converted into an instance $\psi$ of minimum cost embedding problem in polynomial time such that $\mathsf{opt}(\psi) \leq 28|E_H| -\mathsf{opt}(\phi).$ Note that for any instance of \maxcut\ problem $\mathsf{opt}(\phi) \geq |E_H| / 2$ \footnote{A simple greedy algorithm can construct such a cut.}. Thus, 
\begin{equation}
 \mathsf{opt}(\psi) \leq \frac{55}{2} |E_H| \leq 55 \mathsf{opt}(\phi). \label{eq:linearreduction_alpha}
\end{equation}

For any solution $y$ of $\psi$ with $\mathsf{cost}(y)= 28|E_H| -K,$ by Lemma~\ref{lm:cost_cut} we can obtain an embedding $y'$ in which every vertex of $\compgraph$ is mapped to only one vertex of $\net$ and has cost at most $28|E_H| -K.$ Let the cost of this new embedding be $\mathsf{cost}(y') = 28|E_H| - K'$ where $K' \geq K.$ By Theorem~\ref{thm:flow-to-cost} we know that we can obtain a solution $x$ of $\phi$ from $y'$ of weight at least $K'.$ Thus, $|\mathsf{cost}(x) - \mathsf{opt}(\phi)| \leq |K' - \mathsf{opt}(\phi)|.$ On the other hand $|\mathsf{cost}(y) - \mathsf{opt}(\psi)| \geq |28|E_H| -K +28|E_H| +\mathsf{opt}(\phi)|.$ As $\mathsf{opt}(\phi) \geq K' \geq K$ we get,
\begin{equation}
 |\mathsf{cost}(x) - \mathsf{opt}(\phi)| \leq |\mathsf{cost}(y) - \mathsf{opt}(\psi)|. \label{eq:linearreduction_beta}
\end{equation}

Equations~\eqref{eq:linearreduction_alpha},~\eqref{eq:linearreduction_beta} prove that the reduction presented in Theorem~\ref{thm:flow-to-cost} is a linear reduction. Authors in \cite{Berman99} showed that for \maxcut\ no algorithm can achieve an approximation ratio of $0.997$ unless P=NP. Combining with the linear reduction factors of Equations~\eqref{eq:linearreduction_alpha},~\eqref{eq:linearreduction_beta} we get the following result.

\begin{corollary}
 \label{cor:ratecost_snphardness}
 For a given DAG $\compgraph$ and network graph $\net$ finding minimum cost embedding is MAX SNP-hard even when $\compgraph$ has bounded out-degree, weights on its edges are bounded, and $\net$ has only three vertices. Moreover, it is hard to approximate above a factor of $0.0178$ unless P=NP.
\end{corollary}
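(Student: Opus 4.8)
The plan is to read the corollary directly off the linear reduction already established, rather than constructing anything new. Theorem~\ref{thm:flow-to-cost} together with \eqref{eq:linearreduction_alpha} and \eqref{eq:linearreduction_beta} exhibits a genuine L-reduction from \maxcut\ on cubic graphs to the minimum cost embedding problem, with amplification parameter $\alpha=55$ (from \eqref{eq:linearreduction_alpha}, since $\mathsf{opt}(\psi)\le\tfrac{55}{2}|E_H|\le 55\,\mathsf{opt}(\phi)$) and error parameter $\beta=1$ (this is exactly the content of \eqref{eq:linearreduction_beta}). Since MAX SNP-hardness is closed under L-reductions and cubic \maxcut\ is MAX SNP-hard \cite{Berman99}, the minimum cost embedding problem is MAX SNP-hard. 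The structural side conditions are preserved by the reduction: by parts (4) and (5) of Lemma~\ref{lm:valid_dag} the constructed $\compgraph$ has bounded out-degree and bounded edge weights, and by construction $\net$ is the complete graph on three vertices. Hence the qualitative hardness holds under all the restrictions named in the corollary.

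For the explicit constant I would make the transfer of approximation quantitative, taking care that \maxcut\ is a maximization problem whereas minimum cost embedding is a minimization problem and that the map $K\mapsto 28|E_H|-K$ inverts the objective. Suppose, toward a contradiction, that minimum cost embedding admits a polynomial-time algorithm returning an embedding $y$ with $\mathsf{cost}(y)\le(1+\epsilon)\,\mathsf{opt}(\psi)$. Applying Lemma~\ref{lm:cost_cut} to force every vertex of $\compgraph$ onto a single vertex of $\net$ and then extracting a cut $x$ as in the backward direction of Theorem~\ref{thm:flow-to-cost}, inequality \eqref{eq:linearreduction_beta} yields
\begin{equation*}
\mathsf{opt}(\phi)-\mathsf{cost}(x)\;\le\;\mathsf{cost}(y)-\mathsf{opt}(\psi)\;\le\;\epsilon\,\mathsf{opt}(\psi)\;\le\;\alpha\,\epsilon\,\mathsf{opt}(\phi),
\end{equation*}
so the recovered cut obeys $\mathsf{cost}(x)\ge(1-\alpha\epsilon)\,\mathsf{opt}(\phi)$. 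An $\epsilon$ small enough to drive $1-\alpha\epsilon$ above the Berman--Karpinski threshold would therefore yield a polynomial-time approximation of cubic \maxcut\ beating the $0.997$ bound of \cite{Berman99}, which is impossible unless P=NP.

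Solving $1-\alpha\epsilon\ge 0.997$ at its boundary and substituting the reduction constants then pins down the quantitative inapproximability factor quoted in the corollary; this is the only numerical step and, as such, a routine calculation. The part that needs genuine care --- and the main obstacle --- is the bookkeeping forced by the objective inversion: one must combine $\mathsf{opt}(\psi)=28|E_H|-\mathsf{opt}(\phi)$ with the elementary bound $\mathsf{opt}(\phi)\ge|E_H|/2$ to turn the purely additive gap that \eqref{eq:linearreduction_beta} controls into a multiplicative guarantee on the cut, and to confirm that the worst-case ratio $\mathsf{opt}(\psi)/\mathsf{opt}(\phi)$ is indeed bounded by the $\alpha$ being used. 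One must also check that the whole argument stays inside the restricted instance family (three-vertex $\net$, bounded-degree and bounded-weight $\compgraph$), so that the transferred constant is a hardness for exactly the class stated, not merely for unrestricted instances.
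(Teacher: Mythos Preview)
Your proposal is correct and follows exactly the paper's approach: the paper's entire argument for the corollary is the single sentence that the linear reduction of Theorem~\ref{thm:flow-to-cost} with parameters $\alpha=55$, $\beta=1$ (equations~\eqref{eq:linearreduction_alpha} and~\eqref{eq:linearreduction_beta}), combined with the $0.997$ inapproximability bound of \cite{Berman99} for cubic \maxcut, yields the stated hardness. Your write-up is in fact more explicit than the paper's one-line justification, spelling out the max-to-min objective inversion and checking the structural restrictions via Lemma~\ref{lm:valid_dag}; the only residual work is the arithmetic you flag as routine, and the paper does not show that arithmetic either.
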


\section{Algorithm for $\net$ with two vertices}
\label{sec:2nodenet_algo}

In Theorem~\ref{thm:flow-to-cost} (Section~\ref{sec:step2}) we proved that finding minimum cost embedding is NP-hard even when there are only three vertices in $\net.$ In this section we present a polynomial time algorithm to find the minimum cost embedding when the network graph has only two vertices. By using the algorithm presented in this section and the technique of Theorem~\ref{thm:approx_separation} we can obtain a rate maximizing schedule for an arbitrary computation graph on a two node network graph in polynomial time. 

For all the discussion in this section we assume that the network graph $\net$ has two vertices $n_1,n_2$ connected via an edge of weight $x(n_1,n_2).$ The computation graph is assumed to be an arbitrary DAG $\compgraph.$ There are $\kappa$ sources in $\compgraph = (\compnodes,\compedges);$ out of which $\kappa_1$ are mapped to $n_1$ and others are mapped to $n_2.$ The sink vertex $\omega_p$ of $\compgraph$ is at node $n_2.$ There is a weight function $\{w(\gamma) | \gamma \in \compedges\}$ \footnote{Recall that the weight of an edge of $\compgraph$ is associated with the sub-function it carries. Thus all outgoing edges of a vertex of $\compgraph$ have same weight.} associated with the edges of $\compgraph.$ The problem is to find the embedding of $\compgraph$ on $\net$ such that the cost of the embedding is minimized. Recall that cost of an embedding is defined by Equation~\eqref{eq:orig_cost}.

To find the minimum cost embedding we first reduce our problem to an instance of \mincut\ which is defined as follows: Given a directed graph $J=(V_J,E_J)$ with weights on edges $\{g(i,j) | (i,j) \in E_J\}$ and two distinct vertices $j_1,j_2 \in V_J,$ find two disjoint subsets $J_1,J_2 \subset V_J$ such that $j_1 \in J_1, j_2 \in J_2$ and the following optimal value is achieved.
\begin{equation}
 \mathsf{opt}(\mincut(j_1,j_2)) := \min\limits_{J_1,J_2 \in V_J} ( \delta (J_1) + \delta(J_2)). \label{eq:opt_mincut}
\end{equation}
For any set $A \subseteq V_J,$ $\delta(A)$ is defined as the sum of weights of all the outgoing edges from $A.$ In other words,
\begin{equation}
 \delta(A) := \sum\limits_{i \in A, j \in V_J \setminus A} g(i,j). \label{eq:delta-of-set}
\end{equation}

We show that \mincut\ problem can be solved in polynomial time and then present an algorithm which converts the optimal solution of \mincut\ to the corresponding instance of minimum cost embedding of $\compgraph$ on $\net.$ 

\begin{lemma}
 \label{lm:mincut-poly}
Given any directed graph $J$ and its two distinct vertices $j_1,j_2$ \mincut\ can be solved in polynomial time.
\end{lemma}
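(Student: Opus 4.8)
The plan is to exhibit \mincut\ as the minimization of a submodular function over a distributive lattice, which is solvable in polynomial time by standard submodular function minimization. The objects to optimize over are ordered pairs $(J_1,J_2)$ of disjoint vertex sets with $j_1\in J_1$ and $j_2\in J_2$, and the objective is $F(J_1,J_2):=\delta(J_1)+\delta(J_2)$. The directed out-cut function $\delta$ of \eqref{eq:delta-of-set} is submodular, i.e. $\delta(A)+\delta(B)\ge\delta(A\cup B)+\delta(A\cap B)$, so the only real work is to put the correct lattice structure on the feasible pairs.

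First I would order the feasible pairs by declaring $(A_1,A_2)\preceq(B_1,B_2)$ iff $A_1\subseteq B_1$ and $A_2\supseteq B_2$, with
\[(A_1,A_2)\vee(B_1,B_2)=(A_1\cup B_1,\ A_2\cap B_2),\qquad (A_1,A_2)\wedge(B_1,B_2)=(A_1\cap B_1,\ A_2\cup B_2).\]
The point of taking the second coordinate \emph{decreasing} is that disjointness is then preserved: if $v\in A_2\cap B_2$ then $v\notin A_1$ and $v\notin B_1$, so $v\notin A_1\cup B_1$, and symmetrically for the meet. Hence the feasible pairs form a distributive lattice, and the terminal constraints $j_1\in J_1$ and $j_2\in J_2$ are preserved by $\vee$ and $\wedge$, so they cut out a sublattice. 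Equivalently, writing $K_2=V_J\setminus J_2$, the feasible region is $\{(J_1,K_2): J_1\subseteq K_2,\ j_1\in J_1,\ j_2\notin K_2\}$ with \emph{both} coordinates increasing, which is the family of order ideals of an explicit poset.

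Next I would verify that $F$ is submodular on this lattice. Since $\delta(A_1\cup B_1)+\delta(A_1\cap B_1)\le\delta(A_1)+\delta(B_1)$ and likewise $\delta(A_2\cap B_2)+\delta(A_2\cup B_2)\le\delta(A_2)+\delta(B_2)$, adding the two inequalities gives exactly $F\big((A_1,A_2)\vee(B_1,B_2)\big)+F\big((A_1,A_2)\wedge(B_1,B_2)\big)\le F(A_1,A_2)+F(B_1,B_2)$. I would stress here that the decreasing convention for the second coordinate is essential: a directed out-cut function is submodular but \emph{not} posimodular, so one cannot simply ``uncross'' an overlapping feasible pair $(J_1,J_2)$ by replacing it with $(J_1\setminus J_2,\ J_2\setminus J_1)$ without possibly raising the cost (the discrepancy is governed by edges pointing \emph{into} $J_1\cap J_2$); the twisted lattice is precisely what repairs this failure.

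Finally, minimizing a submodular function over a distributive lattice (a ring family) is polynomial by the Gr\"otschel--Lov\'asz--Schrijver framework, so \mincut\ is polynomial. In fact, because $F$ is a sum of two graph cut functions subject only to the precedence constraint $J_1\subseteq K_2$, the minimization can alternatively be carried out by a minimum cut computation in an auxiliary graph on two copies of $V_J$, with infinite-capacity arcs enforcing $J_1\subseteq K_2$ and pinning the terminals, yielding an elementary and fast algorithm whose optimal source/sink sides recover disjoint witnesses $J_1,J_2$. I expect the main obstacle to be the second and third steps together, namely identifying the correct twisted lattice so that disjointness becomes a sublattice condition while $F$ stays submodular; once the lattice is fixed, submodularity of $F$ follows mechanically from submodularity of $\delta$.
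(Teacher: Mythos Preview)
Your proof is correct, and it reaches the same endpoint as the paper---submodular function minimization---but via a different decomposition. The paper does not work on the lattice of pairs; instead it \emph{projects out} $J_2$ by defining
\[
h(A)\ :=\ \delta(A\cup\{j_1\})\ +\ \min_{C\subseteq V_J\setminus(A\cup\{j_1,j_2\})}\delta(C\cup\{j_2\}),
\]
argues that for fixed $A$ the inner minimum is a polynomial-time min-cut, and then shows $h$ is submodular in $A$ by observing that if $X$ attains the inner minimum for $A$ and $Y$ for $B$, then $X\cap Y$ is feasible for $A\cup B$ and $X\cup Y$ is feasible for $A\cap B$, so two applications of the submodularity of $\delta$ give $h(A\cup B)+h(A\cap B)\le h(A)+h(B)$. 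Your twisted lattice on pairs $(J_1,J_2)$, with the second coordinate ordered by reverse inclusion, is a cleaner packaging of exactly the same uncrossing: it avoids the partial-minimization step and makes the ring-family structure explicit via the substitution $K_2=V_J\setminus J_2$. What your route buys in addition is the auxiliary-graph reduction you sketch at the end---two copies of $V_J$ with infinite arcs enforcing $J_1\subseteq K_2$---which yields an elementary single min-cut computation rather than a call to a general submodular minimizer; the paper does not observe this.
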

\begin{proof}
 Recall that the solution of $\mathsf{opt}(\mincut(j_1,j2)$ are two disjoint subsets $J_1,J_2$ of $V_J$ such that $j_1 \in J_1$ and $j_2 \in J_2.$ Equation~\eqref{eq:opt_mincut} can be written as $ \mathsf{opt}(\mincut(j_1,j_2)) = \min\limits_{J_1 \in V_J} [ \delta(J_1) + \min\limits_{J_2 \subseteq V_J \setminus J_1} \delta(J_2)].$ For a given $J_1$ we need to compute the right hand side of the above equation in polynomial time. To do so we modify the equation as follows: Let $A$ be a subset of $V_J$ such that $j_1 \notin A.$ Then we rewrite the equation as
 \begin{equation}
  \mathsf{opt}(\mincut(j_1,j_2)) = \min\limits_{A \subset V_J} [ \delta(A \cup j_1) +\min\limits_{C \subseteq V_J \setminus \{A,j_1,j_2\}} \delta(C \cup j_2)]. \label{eq:mincut_delta}
 \end{equation}
The second term of the right hand side of above equation can be computed in polynomial time by computing the minimum cut of $j_2$ by considering the subsets from $V_J \setminus \{A,j_1,j_2\}.$ Thus for a given set $A,$ right hand side of Equation~\eqref{eq:mincut_delta} can be computed in polynomial time. Now we show that this is indeed a submodular function and thus the set $A$ which minimizes the value can also found in polynomial time.

A function $h$ on the subsets of a set $U$ is \emph{submodular} if for any two sets $Y,Z \subseteq U,$ $h(Y) + h(Z) \geq h(Y \cap Z) + h(Y \cup Z).$ 
For any two subsets $Y,Z \subseteq V_J$ it is easy to observe that $\delta(Y \cup Z) \leq \delta(Y) + \delta(Z) - \delta(Y \cap Z).$ Hence $\delta$ is a submodular function. Let $X \subseteq V_J \setminus \{A,j_1,j_2\}$ be the set which minimizes the second term of Equation~\eqref{eq:mincut_delta}. Then for a set $A,$ let $h(A) := \delta(A \cup j_1) + \delta(X \cup j_2).$ Similarly for a set $B$ $h(B) = \delta(B \cup j_1) + \delta(Y \cup j_2)$ where $Y \subseteq V_J \setminus \{B,j_1,j_2\}$ minimizes the second term of Equation~\eqref{eq:mincut_delta}. Also, $ h(A \cup B) = \delta(A \cup B \cup j_1) + \delta (Z \cup j_2)$ for some $Z  \subseteq V_J \setminus \{A \cup B, j_1,j_2\}.$ Note that $(X \cap Y)$ and $(A \cup B)$ are disjoint sets which implies that $X \cap Y \subseteq V_J \setminus \{A \cup B, j_1,j_2\}.$ Thus,
\begin{displaymath}
 h(A \cup B) \leq \delta(A \cup B \cup j_1) + \delta (X \cap Y \cup j_2).
\end{displaymath}
Similarly $h(A \cap B) = \delta(A \cap B \cup j_1) + \delta(W \cup j_2)$ for some $W \subseteq V_J \setminus \{A \cap B, j_1,j_2\}.$ Note that $(X \cup Y)$ and $(A \cap B)$ are disjoint sets. Thus, 
\begin{displaymath}
 h(A \cap B) \leq \delta (A \cap B \cup j_1) + \delta(X \cup Y \cup j_2).
\end{displaymath}
As $\delta$ is a submodular function, it is easy to observe that $h(A \cup B) + h(A \cap B) \leq h(A) + h(B).$ This proves that the right hand side of Equation~\eqref{eq:mincut_delta} is a submodular function and $\mathsf{opt}(\mincut(j_1,j_2))$ can be obtained in polynomial time by using algorithm presented in \cite{Schrijver00}.
\end{proof}

Given an instance $\psi = (\compgraph,S_{\compgraph},\omega_p,w,\net,S_{\net},t,y)$ of minimum cost embedding we create an instance $\phi= (J,g,j_1,j_2)$ of \mincut. 

\begin{theorem}
 \label{thm:cut-to-cost}
 The instance $\psi$ of minimum cost embedding problem has the optimal embedding of cost $C$ if an only if the corresponding instance $\phi$ of \mincut\ has the optimal cut of weight $C.$
\end{theorem}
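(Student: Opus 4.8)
The plan is to exhibit an explicit, cost-preserving dictionary between embeddings of $\compgraph$ on the two-node network $\net = (\{n_1,n_2\}, \{(n_1,n_2)\})$ and the feasible solutions $(J_1,J_2)$ of $\mincut(j_1,j_2)$, where $j_1$ stands for $n_1$ and $j_2$ for $n_2$. First I would pin down the construction of $J$: identify the $\kappa_1$ sources placed at $n_1$ with $j_1$ and the remaining sources together with the sink $\omega_p$ with $j_2$. For every vertex $\omega$ of $\compgraph$ producing a function $\theta$ with successor set $\sucnodes{\omega}$ I replace its outgoing star by a \emph{relay gadget}: a fresh vertex $p_\omega$, a directed \emph{cost edge} $\omega \to p_\omega$ of weight $w(\theta)\,x(n_1,n_2)$, and directed \emph{rigid edges} $p_\omega \to \omega'$ of weight $z$ for each $\omega' \in \sucnodes{\omega}$, where $z$ is chosen strictly larger than $\sum_{\theta\in\Theta} w(\theta)\,x(n_1,n_2)$. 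The purpose of the gadget is to charge the weight $w(\theta)$ of a function exactly once, whenever $\theta$ crosses the single link, matching the indicator $r_{\embedding}^\theta(e)$ in the cost of Equation~\eqref{eq:orig_cost}, rather than once per outgoing edge of $\compgraph$.

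For the forward direction (embedding $\Rightarrow$ cut) I would invoke Lemma~\ref{lm:cost_cut} to assume each vertex of $\compgraph$ is mapped to a single node of $\net$. Put every $\compgraph$-vertex mapped to $n_1$ into $J_1$ and every one mapped to $n_2$ into $J_2$. For each relay $p_\omega$, if $\omega$ and all its successors lie on the same side I place $p_\omega$ on that side, cutting nothing in the gadget; otherwise $\theta$ crosses the link, and I place $p_\omega$ in the third region $V_J\setminus(J_1\cup J_2)$, so that only the cost edge $\omega\to p_\omega$ is cut and contributes exactly $w(\theta)\,x(n_1,n_2)$, while the rigid edges, leaving the third region, are not counted by $\delta(\cdot)$. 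Summing over all functions, a check against the definitions \eqref{eq:opt_mincut} and \eqref{eq:delta-of-set} shows the cut weight equals $x(n_1,n_2)\sum_{\theta \text{ crossing}} w(\theta) = C(\embedding)$.

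For the backward direction (cut $\Rightarrow$ embedding) the first step is a rigidity lemma: since $z$ exceeds the total weight of all cost edges, no optimal cut severs a rigid edge, so whenever a relay $p_\omega$ lies in $J_1$ (resp. $J_2$) all of $\sucnodes{\omega}$ lie in $J_1$ (resp. $J_2$). The second step is to argue that an optimal cut may be taken with the third region containing only relay vertices, i.e. every $\compgraph$-vertex sits in $J_1$ or $J_2$; I would prove this by a retraction/exchange argument that processes vertices in reverse topological order, showing that moving a mislaid $\compgraph$-vertex out of the third region onto a side never increases the objective (the saving on that vertex's own cost edge is always offset, via the rigidity of the incoming $z$-edges, by the payments it would otherwise force on its predecessors). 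Once this normal form is in place, reading $n_1$ for $J_1$ and $n_2$ for $J_2$ yields an embedding, and the same per-function accounting as in the forward direction shows its cost equals the cut weight; with the forward inequality this gives equality of the two optima, proving Theorem~\ref{thm:cut-to-cost}. Together with Lemma~\ref{lm:mincut-poly} this furnishes the promised polynomial-time algorithm.

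The step I expect to be the crux is this second step of the backward direction, namely certifying that the third region can be emptied of genuine $\compgraph$-vertices without loss. This is precisely where the ``pay once per function'' semantics could be undercut: a vertex hiding in the third region escapes paying for its own function, and I must show the rigid $z$-edges make any such escape at best break-even. The rigidity lemma and the careful choice of $z$ are what make this exchange argument go through, analogously to the role played by Lemma~\ref{lm:edgewt} in the hardness construction.
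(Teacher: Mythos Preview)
Your relay gadget is essentially the paper's own construction: the paper splits each vertex $u$ of $\compgraph$ into $u_{in}$ and $u_{out}$, with a cost edge $u_{in}\to u_{out}$ of weight $w(\theta_u)$ and rigid edges $u_{out}\to(\cdot)$ of weight $\infty$ to all successors; your $(\omega,p_\omega)$ is the same pair up to relabelling. So the reduction itself is right.

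The gap is in how you pass between embeddings and cuts. You invoke Lemma~\ref{lm:cost_cut} to assume every vertex of $\compgraph$ is mapped to a single node of $\net$, and dually you try to show the optimal cut can be normalized so that the third region $J_3 = V_J\setminus(J_1\cup J_2)$ contains only relay vertices. Neither step is valid in general. Lemma~\ref{lm:cost_cut} is proved only for the specific hardness gadget of Section~\ref{sec:step2}; its proof in Appendix~\ref{app:lemmas} uses the numerical edge weights of Fig.~\ref{fig:diamond_gadget} and does not apply to an arbitrary $\compgraph$. And your crux normalization is actually false: take $\omega$ with a single predecessor $\omega'$ whose function has weight $\epsilon$, and with two successors forced to $n_1$ and $n_2$ respectively, where $w(\theta_\omega)=W\gg\epsilon$. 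Any cut with $\omega\in J_1\cup J_2$ must sever the cost edge $\omega\to p_\omega$ (since $p_\omega$ cannot sit on either side without cutting a rigid edge), paying $W$; the cut with $\omega\in J_3$ pays only $\epsilon$ through $\omega'\to p_{\omega'}$. Correspondingly, the optimal \emph{embedding} here genuinely computes $\theta_\omega$ at both $n_1$ and $n_2$, so it is not an \rembedding\ and your forward direction misses it.

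The paper's fix is simply not to normalize at all. In its Lemma~\ref{lm:embedding-to-2cut} it takes an arbitrary (possibly multi-vertex) embedding and sends any $\omega$ computed at both nodes to $J_3$ (both $u_{in}$ and $u_{out}$), together with any $u_{out}$ whose successors straddle both sides; the bookkeeping then shows the cut weight equals the embedding cost exactly. Conversely, in Lemma~\ref{lm:2cut-to-embedding} a vertex $u_{in}\in J_3$ is read as ``map $u$ to both $n_1$ and $n_2$'', and afterwards redundant copies are pruned, yielding an embedding of cost at most the cut weight. With these two lemmas the two optima coincide without ever needing your exchange argument. Your construction is correct; what needs to change is to let genuine $\compgraph$-vertices live in $J_3$ and interpret that as multi-node computation, rather than trying to forbid it.
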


\begin{proof}
 We first construct the directed graph $J$ for \mincut\ instance from $\compgraph,\net$ as follows: Replace each vertex of $\compgraph,$ except for the sink vertex $\omega_p,$ by the gadget shown in Fig.~\ref{fig:vertex_gadget}. Add two vertices labeled $j_1,j_2$ in this graph. Add outgoing edges from $j_1$ to all the ``in" vertices pf the sources which are mapped to $n_1 \in \net$ with weight of $\infty.$ Similarly add outgoing edges from $j_2$ to the remaining ``in" vertices of the sources and the sink $\omega_p$ (note that these vertices are mapped to $n_2 \in \net)$ with weight $\infty.$  We label the resulting directed graph by $J$ for the \mincut\ instance with $j_1,j_2$ being the two vertices for which $\mathsf{opt}(\mincut(j_1,j_2))$ has to be computed. 
 
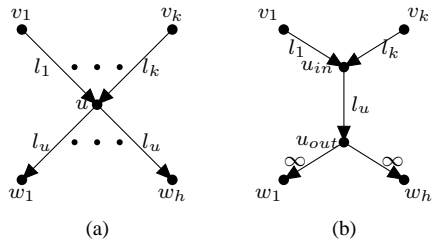
\begin{figure}[t]
 \begin{center}
  \subfloat[]{ 
 \begin{tikzpicture}[>=latex]
\footnotesize
 \fill (0,0) circle (0.07cm);
 \fill (-1,-1) circle (0.07cm);
 \fill (1,-1) circle (0.07cm);
 \fill (-1,1) circle (0.07cm);
 \fill (1,1) circle (0.07cm);
 \draw [-triangle 45] (0,0) -- (-1,-1) node [midway,left] {$l_u$};
 \draw [-triangle 45] (0,0) -- (1,-1) node [midway,right] {$l_u$};
  \draw [-triangle 45] (-1,1) -- (0,0) node [midway,left] {$l_1$};
 \draw [-triangle 45] (1,1) -- (0,0) node [midway,right] {$l_k$};
 \node at (-0.2,0) [draw=none] {$u$};
 \node at (-1,-1.2) [draw=none] {$w_1$};
 \node at (1,-1.2) [draw=none] {$w_h$};
 \node at (-1,1.2) [draw=none] {$v_1$};
 \node at (1,1.2) [draw=none] {$v_k$};
 
 \foreach \x in {-0.3,0,0.3} {
   \fill (\x,-0.5) circle (0.04cm);
   \fill (\x,0.5) circle (0.04cm);
  }
  
 \end{tikzpicture}
}
\hspace{10pt}
\subfloat[]{ 
 \begin{tikzpicture}[>=latex]
\footnotesize
 \fill (0,0.5) circle (0.07cm);
 \fill (-0.8,-1) circle (0.07cm);
 \fill (0.8,-1) circle (0.07cm);
 \fill (0,-0.5) circle (0.07cm);
 \fill (-0.8,1) circle (0.07cm);
 \fill (0.8,1) circle (0.07cm);
 \draw [-triangle 45] (0,-0.5) -- (-0.8,-1) node [midway,left] {$\infty$};
 \draw [-triangle 45] (0,-0.5) -- (0.8,-1) node [midway,right] {$\infty$};
  \draw [-triangle 45] (-0.8,1) -- (0,0.5) node [midway,left] {$l_1$};
 \draw [-triangle 45] (0.8,1) -- (0,0.5) node [midway,right] {$l_k$};
 \draw [-triangle 45] (0,0.5) -- (0,-0.5) node [midway,right] {$l_u$};
 \node at (-0.35,0.5) [draw=none] {$u_{in}$};
 \node at (-0.35,-0.5) [draw=none] {$u_{out}$};
 \node at (-1,-1.2) [draw=none] {$w_1$};
 \node at (1,-1.2) [draw=none] {$w_h$};
 \node at (-1,1.2) [draw=none] {$v_1$};
 \node at (1,1.2) [draw=none] {$v_k$};
  \end{tikzpicture}
}
 \end{center}
 \caption[Gadget for a vertex in $\compgraph$]{(a) A vertex in $\compgraph$ with $k$ incoming edges and $h$ outgoing edges of weight $l_u$ (b) Gadget to replace $u$ shown in (a). Labels near the edges represent their weights.}
  \label{fig:vertex_gadget}
\end{figure}  

Proof of Theorem~\ref{thm:cut-to-cost} follows directly from the following two lemmas.

\begin{lemma}
 \label{lm:embedding-to-2cut}
 If for the instance $\psi$ there is an embedding $\embedding$ of cost $C$ then there is a $\mincut(j_1,j_2)$ of weight $C$ for the instance $\phi.$
\end{lemma}
\begin{proof}
Before proving the lemma we recall a few notations and ideas about $\compgraph$ and its embedding on $\net.$ Every vertex $u$ of $\compgraph$ computes a specific function $\theta$ and all its outgoing edges carry the same function. The set of all the successor functions of $\theta$ is represented by $\sucfunction \theta.$ An embedding of $\compgraph$ on $\net$ gives us a mapping of vertices of $\compgraph$ to that of $\net.$ It tells us on which vertices of $\net$ the function $\theta$ is computed. The network graph $\net$ for our instance $\psi$ has only two vertices $n_1,n_2.$ Thus any function is either computed at $n_1$ or $n_2$ or both. Also recall that the $\mincut(j_1,j_2)$ partitions the vertex set $V_J$ of $J$ into three disjoint sets $J_1,J_2,J_3$ such that $j_1 \in J_1, j_2 \in J_2.$ In all the discussion below we assume that vertex $u \in \compnodes$ computes function $\theta.$ We compute the $\mincut(j_1,j_2)$ from the embedding $\embedding$ as follows:
\begin{enumerate}
 \item Put $j_1$ ($j_2$) in $J_1$ ($J_2$ respectively).
\item If a source vertex $\omega_i \in \compgraph$ is mapped to $n_1$ ($n_2$) then put $\omega_i^{in}$ in $J_1$ ($J_2$ respectively). Put the sink vertex $\omega_p$ in $J_2.$
 \item If $\theta$ is computed at both $n_1,n_2$ under embedding $\embedding$ then put $u^{in},u^{out}$ in $J_3.$
 \item If $\theta$ is computed at only one vertex, say $n_1$ ($n_2$) then put $u^{in}$ in $J_1$ ($J_2$).
 \item If all the functions in $\sucfunction \theta$ are computed only at $n_1$ ($n_2$) then put $u^{out}$ in $J_1$ ($J_2$). 
 \item If some of the functions of $\sucfunction \theta$ are computed at $n_1$ and some are computed at $n_2$ then put $u^{out}$ in $J_3.$
 
\end{enumerate}

It is easy to observe that this cut is a valid $\mincut(j_1,j_2).$ Now we compute the weight of the cut by computing $\delta(J_1), \delta(J_2).$ First note that none of the $\infty$ weight edges of $j_1,j_2$ are in the cut as corresponding sources and the sink are mapped to $J_1,J_2.$ Similarly, any vertex $u^{out}$ is mapped in $J_1$ or $J_2$ if all its successor functions are computed there. Thus, no $\infty$ weight is in $\delta(J_1), \delta(J_2)$ and the cut size is finite. Observe that the way $J$ is constructed from $\compgraph$ corresponding to all the outgoing edges of any vertex $u$ there is only one edge $(u^{in},u^{out}) \in E_J$ of same weight. This edge is in the cut constructed above iff any of the corresponding edges are exposed in $\embedding$ (points~5,~6). Hence the weight of the cut constructed above is same as that of $\embedding.$ 
\end{proof}

\begin{lemma}
 \label{lm:2cut-to-embedding}
 If there is a $\mincut(j_1,j_2)$ for the instance $\phi$ of weight $C$ then there is an embedding of $\compgraph$ on $\net$ of cost $\leq C.$
\end{lemma}
\begin{proof}
 Recall that a $\mincut(j_1,j_2)$ partitions the elements of $V_J$ into three sets $J_1,J_2,J_3.$ We create an embedding from the cut as follows: If any vertex $u^{in} \in V_J$  is in $J_1$ ($J_2$) then map $u$ at $n_1$ ($n_2$) under embedding $\embedding.$ If $u^{in}$ is in $J_3$ then map $u$ to both $n_1$ and $n_2.$ As the weight of the cut is finite, we know that all the sources of $\compgraph$  which are connected to $j_1$ (or $j_2$) the corresponding ``in" vertices are in $J_1$ (or $J_2)$. This ensures that all the sources are mapped either to $n_1$ or $n_2$ under $\embedding.$ Similarly, the sink of $\compgraph$ is in $J_2$ and thus mapped to $n_2$ under $\embedding.$ Observe that all the edges which are in $\delta(J_1)$ and $\delta(J_2)$ are exposed in the embedding $\embedding.$ Hence the cost of this embedding is same as that of the cut $C.$ As the vertices in $J_3$ are mapped at both $n_1$ and $n_2,$ there will be some redundant computations in $\embedding.$ For example some vertex $u$ might be computed at both nodes but all its successors are computed only at $n_1,$ thus making the computation at $n_2$ redundant. To get a valid embedding we need to remove such computations and removing (or pruning) such computations will only reduce the cost from $C.$ As there are only two nodes in the network checking for redundant computations for each vertex of $\compgraph$ can be done in polynomial time and thus gives an embedding $\embedding$ of cost $\leq C.$
\end{proof}

\textbf{Proof of forward direction (Theorem~\ref{thm:cut-to-cost}):} We need to prove that the minimum cost embedding has optimal embedding of cost $C$ if the \mincut\ has optimal cut of weight $C.$ Let $\embedding$ be an embedding obtained by applying the procedure on the optimal \mincut\ presented in the proof of Lemma~\ref{lm:2cut-to-embedding} with cost $C' \leq C.$ Let $C' <C.$ Then by Lemma~\ref{lm:embedding-to-2cut} we can obtain a \mincut\ of $\phi$  of weight $C'.$ But this is a contradiction to the fact that $\phi$ has the optimal cut of weight $C.$ Thus the embedding $\embedding$ obtained from the optimal cut of $\phi$ has cost $C' = C.$ 

\textbf{Proof of backward direction (Theorem~\ref{thm:cut-to-cost}):} Now we need to prove that if there is an optimal embedding of cost $C$ then $\phi$ has the optimal cut of weight $C.$ By Lemma~\ref{lm:embedding-to-2cut} we can obtain a \mincut\ for $\phi$ of weight $C$ from the optimal embedding of $\psi.$ This cut has to be the optimal cut else we can get an embedding of lesser cost than $C$ by Lemma~\ref{lm:2cut-to-embedding}.
\end{proof}

\section{Approximate Algorithms}
\label{sec:app_algo}

In Section~\ref{sec:hardness} we proved that finding a rate maximizing schedule is MAX SNP-hard. In this section we define a restricted class of embeddings and present some approximation algorithms for the corresponding maximum rate schedule problem.

\begin{definition} [\textbf{\rembedding}]
\label{def:restricted-embedding}
 A restricted embedding (\rembedding) of  $\compgraph$ on $\net$ is a function $\mathcal{E}': \compedges \mapsto \setofpaths$ which follows the following set of rules.
 \begin{enumerate}
   \item For some $\gamma \in \compedges$ if $\tail {\gamma} = \omega_i, i \in [1,\kappa]$ then 
    $\pathstart {\mathcal{E}'(\gamma)} =s_i.$ 
  \item If for some $\gamma \in \compedges$, $\head{\gamma} = \omega_p$ then  
    $ \pathend {\mathcal{E}'(\gamma)} = t.$
  \item If $\gamma_i \in \sucedges{\gamma_j}$ for some $\gamma_i,\gamma_j \in \compedges$ then 
   $\pathend{\mathcal{E}'(\gamma_j)} = \pathstart{\mathcal{E}'(\gamma_i)}.$
 \end{enumerate}
\end{definition}

Note that any intermediate function is computed only once in the network under \rembedding. 
\rembedding s are a special case of the embedding (defined in Definition~\ref{def:embedding}) and let $\setofembeddings'$ be the set of all the \rembedding s of $\compgraph$ on $\net.$ 

We can write a packing linear program, similar to CALP (presented in Section~\ref{sec:CALP}), in which the embeddings are coming from the set $\setofembeddings'$ instead of the general set of embeddings $\setofembeddings.$ Let us call this LP as R-CALP. We observe that the separation oracle of the dual of R-CALP also reduces to the problem of finding \emph{minimum cost \rembedding} problem where the cost of the \rembedding\ is defined by Equation~\eqref{eq:orig_cost}. Hence forth we refer the problem of finding the minimum cost \rembedding\ 
 by \mincostnew. It is easy to verify that Theorem~\ref{thm:approx_separation} also holds in this case giving us the following corollary.

\begin{corollary}
 \label{cor:approx_separation_rcalp}
 There is a polynomial time $\alpha$-approximation algorithm to solve R-CALP if and only if there is a polynomial time $\alpha$-approximation algorithm for solving \mincostnew\ of $\compgraph$ on $\net.$
\end{corollary}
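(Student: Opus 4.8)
The plan is to observe that R-CALP is a packing linear program of exactly the same form as CALP, differing only in that the embeddings range over the restricted class $\setofembeddings'$ rather than over $\setofembeddings$; consequently the whole argument of Theorem~\ref{thm:approx_separation} applies verbatim after replacing $\setofembeddings$ by $\setofembeddings'$ throughout. I would begin by writing down the dual of R-CALP: minimize $\sum_{e \in \netedges} c(e) y(e)$ subject to $\sum_{e \in \netedges} r_{\embedding}(e) y(e) \geq 1$ for every $\embedding \in \setofembeddings'$ together with $y(e) \geq 0$. Because the cost $C(\embedding)$ of an \rembedding\ is still measured by the same Equation~\eqref{eq:orig_cost} used for a general embedding, the separation oracle of this dual asks, for a given $\{y(e)\}$, whether every $\embedding \in \setofembeddings'$ satisfies $C(\embedding) \geq 1$; this is answered precisely by computing the minimum cost \rembedding, i.e.\ by solving \mincostnew.

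The next step is to run the two directions of Theorem~\ref{thm:approx_separation} unchanged. In the forward direction, given an $\alpha$-approximation algorithm $A$ for \mincostnew, I would add the inequality $\sum_{e} c(e) y(e) \leq R$ to the dual, binary search over $R$, and use $A$ as the (approximate) separation oracle inside the ellipsoid method. The crucial structural fact is that every cost constraint of this dual has right-hand side exactly $1$, so that whenever $A$ reports no violated constraint the scaled vector $\{\alpha y(e)\}$ is genuinely feasible with $R$ replaced by $\alpha R$; this pins the optimum between $R^{*}$ and $\alpha R^{*}$, and by LP duality yields an $\alpha$-approximate value for R-CALP. Collecting the polynomially many separating hyperplanes returned by the ellipsoid method identifies a polynomial-size subset of primal variables $x(\embedding),\ \embedding \in \setofembeddings'$, and solving the resulting small LP recovers the actual primal solution. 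In the reverse direction I would mirror the polar-polytope argument: letting $P'$ denote the polytope defined by the constraints of the dual of R-CALP, an approximate point over $P'$ yields an approximate separation oracle for its polar $(P')^{*}$, and since $(P')^{**} = P'$ this in turn gives an approximate separation oracle for $P'$ itself, which is exactly \mincostnew. Here too the all-ones right-hand side of the cost constraints guarantees that the approximation ratio is preserved.

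The only point that genuinely needs checking — and it is a mild one — is that passing to the subclass $\setofembeddings'$ does not invalidate any assumption of the original proof. The coefficients $r_{\embedding}(e)$ remain nonnegative and efficiently computable for \rembedding s, the objective and the constraint right-hand sides are unchanged, and \mincostnew\ is literally the separation problem for the R-CALP dual; since an \rembedding\ is a special case of an embedding, nothing in the ellipsoid or polar machinery is affected by the restriction. Hence the equivalence of Theorem~\ref{thm:approx_separation} transfers directly to R-CALP and \mincostnew, which is the claim of the corollary.
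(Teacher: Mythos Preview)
Your proposal is correct and follows precisely the paper's own approach: the paper simply remarks that Theorem~\ref{thm:approx_separation} applies verbatim once $\setofembeddings$ is replaced by $\setofembeddings'$, and you have spelled out why (same packing structure, same all-ones right-hand side in the dual, separation oracle $=$ \mincostnew). There is nothing to add.
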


In Section~\ref{sec:step2} we proved that minimum cost embedding problem is MAX SNP-hard  by reducing it from \maxcut\ problem. Recall that the instance of minimum cost embedding problem which we created has the optimal embedding in which one vertex of $\compgraph$ is mapped to only one vertex of $\net.$ Thus the reduction presented in Theorem~\ref{thm:flow-to-cost} also proves that solving the minimum cost \rembedding\ problem is MAX SNP-hard. In this section we present some approximation algorithms to solve \mincostnew\ problem thus giving approximate solutions for R-CALP.

We first present a version of minimum cost embedding problem which has been studied in literature and relate it to the one presented in Section~\ref{sec:step1} by Theorem~\ref{thm:old_new_relation}. Using the result of Theorem~\ref{thm:old_new_relation}  and the procedure described in the proof of Theorem~\ref{thm:approx_separation} we give a couple of algorithms to find approximate solutions of R-CALP for special classes of computation graph. 

\subsection{A version of minimum cost embedding}
\label{sec:oldcost}

A version of \mincostnew\ has been studied in literature under various names like function computation \cite{Vyavahare14,Shah13}, optimal operator placement \cite{Ying08,Bonfils04,Abrams05,Phatak10} and module placement \cite{Bokhari81,Stone77,MaryLo88,Fernandez-Baca89}. 

The cost model of this literature differs from our cost model (\mincostnew) in the following two ways ---(1)~in their cost model two outgoing edges of a vertex $\omega$ of $\compgraph$ can have different weights and, (2)~if an edge $e \in \netedges$ is used by multiple, say $z,$ outgoing edges of a vertex $\omega$ of $\compgraph$ in an embedding then while computing the cost of the embedding the weight $x(e)$ is considered $z$ times. In our cost model even if an edge $e$ is used by multiple outgoing edges of a vertex of $\compgraph$, the weight $x(e)$ is taken only once. We define their cost model more formally below.

Let $\xi_{\embedding'}^{\gamma}(e) := \mathbbm{1}\{e \in \embedding'(\gamma) \}$ be an indicator function which takes value $1$ if an edge $e$ in $\net$ is used by an edge $\gamma$ of $\compgraph$ under \rembedding\ $\embedding'.$ Then given a vector $\{x(e) | e \in \netedges\}$ and weight function $\{w(\gamma) | \gamma \in \compedges\}$ \footnote{Note that the weights in this case are defined on the edges of $\compgraph$ and outgoing edges of a vertex in $\compgraph$ can have different weights.} the cost of an \rembedding\ is defined as:
\begin{equation}
 \complement(\embedding') := \sum_{e \in \netedges} \xi_{\embedding'}(e) x(e) = \sum_{e \in \netedges} \left(\sum_{\gamma \in \compedges} \xi_{\embedding'}^{\gamma}(e) w(\gamma) \right) x(e). \label{eq:oldcost}
\end{equation}

\begin{definition}[\mincostold]
Given a network graph $\net$ with weight function $x$ on its edges, a computation graph $\compgraph$ with weight function $w$ on its edges find an \rembedding\ $\mathsf{opt}(\complement)$ such that:
 \begin{displaymath}
  \mathsf{opt}(\complement,\compgraph,\net) := \argmin_{\embedding' \in \setofembeddings'} \complement(\embedding')
 \end{displaymath}
\end{definition}

We omit $\compgraph,\net$ from the above expression when it is clear from the context and use $\mathsf{opt}(\complement)$ to represent the optimal embedding for \mincostold. Observe that $\mathsf{opt}(\complement)$ has the following properties: (1)~A vertex of $\compgraph$ is mapped to only one vertex of $\net.$ This property is imposed because of the definition of \rembedding. (2)~Every edge $\gamma$ of $\compgraph$ is mapped to the shortest path between its mapped end points in $\net$ due to the nature of the cost defined in Equation~\eqref{eq:oldcost}. 

Example~\ref{ex:newcost} below illustrates the difference between the two cost models and shows how our cost model is more natural when $\compgraph$ is a DAG.

\begin{example}
 \label{ex:newcost}
 We revisit Example~\ref{ex:intro} here. Recall that for the computation graph of Fig.~\ref{fig:ex_rate}b, $w(\gamma) =1 \forall \gamma \in \compedges.$ Let $x(e) =1 \forall e \in \netedges$ for the network shown in Fig.~\ref{fig:ex_rate}a. Then the cost of the embedding $\embedding_1$ (shown in Fig.~\ref{fig:ex_rate}c) according to Equation~\eqref{eq:orig_cost} is $C(\embedding_1) = 6$ while the cost according to Equation~\eqref{eq:oldcost} is $\complement(\embedding_1) = 7.$ This difference is due to the fact that the cost incurred over link $xz$ for the transmission of function $\theta_5$ in $\embedding_1$ is taken only once in account by Equation~\eqref{eq:orig_cost} while Equation~\eqref{eq:oldcost} considers it twice \footnote{Because of the two outgoing edges of node $\omega_5$ in $\compgraph$}. In practice the function $\theta_5$ is transmitted only once over $xz$ in $\embedding_1$ and rate computation in Example~\ref{ex:intro} does consider this.
\end{example}

Polynomial time algorithms to solve \mincostold\ problem when $\compgraph$ is a tree are available in various literature, e.g., \cite{Bokhari81,Ying08,Shah13}. Authors in \cite{Fernandez-Baca89} gave polynomial time algorithm when $\compgraph$ is $k$-tree while \cite{Vyavahare14} proves that the \mincostold\ is MAX SNP-hard for general $\compgraph.$ A polynomial time algorithm for a layered $\compgraph$ is presented in \cite{Vyavahare14}. \mincostold\ problem is also related to two well studied problems like Multiterminal cut and $0$-extension problem. We explain the relation with these problems below.

\paragraph{Connection to Multiterminal cut problem} \mincostold\ problem, when $\net$ is a complete graph of $k$ terminals with weights $x(e) =1 \forall e \in \netedges,$ is equivalent to a well known NP-complete problem \textit{Multiterminal Cut} \cite{Dahlhaus94}. The Multiterminal Cut problem is defined as follows: Given a graph $\compgraph=(\compnodes,\compedges)$ with weights $w(\gamma)$ on its edges and a set of $k$ of its vertices, divide the graph $\compgraph$ into $k$ parts such that there is only one terminal in each part and the sum of the weights of the edges across these parts is minimum. In other words, Multiterminal Cut problem asks for a \rembedding\ $\embedding$ of $\compgraph$ on a complete graph $\net = (\netnodes,\netedges)$  with $|\netnodes| = k$ and $x(e) =1 \forall e \in \netedges$ such that cost $\complement(\embedding)$ is minimum. Refer to \cite{Vyavahare14} for the details of this reduction which proves that \mincostold\ problem is MAX SNP-hard even if the number of terminals $k$ and the weights on the edges $w(\gamma)$ are constant.

\paragraph{Connection to $0$-extension problem} When the network graph $\net$ is a complete graph with $k$ vertices but with arbitrary edge weights then the problem \textit{$0$-extension} can be seen as a special case of \mincostold\ problem. $0$-extension problem was first introduced by \cite{Karzanov98} and is defined as follows: Given a graph $\compgraph=(\compnodes,\compedges)$ with non negative edge weights $w(\gamma)$ on its edges and a metric $d$ defined on a subset $T \subseteq \compnodes,$  find an assignment $\embedding$ of every $\omega \in \compnodes$ on $\embedding(\omega) \in T$ such that $\embedding(\omega) = \omega \forall \omega \in T$ and the cost $\sum_{(\omega_1,\omega_2) \in \compedges} w(\omega_1,\omega_2) d(\embedding(\omega_1),\embedding(\omega_2))$ is minimum. In other words, $0$-extension problem asks for a \rembedding\ $\embedding$ of $\compgraph$ on a complete graph $\net = (\netnodes,\netedges)$  with $|\netnodes| = |T|$ and $\{x(e) | e \in \netedges\}$ where $x(e)$ imposes a metric on $\netnodes$ such that the cost $\complement(\embedding)$ is minimum. The $0$-extension problem is a well studied problem and we refer the readers to \cite{Karloff06} for a detailed review of the results available in the literature. Authors in \cite{Karloff06} proved that for every $\epsilon >0,$ there is no polynomial time $O((\log p)^{1/4 -\epsilon})$- approximate algorithm for 0-extension unless NP $\subseteq$ DTIME$(p^{poly(\log p)})$ where $p$ is the number of vertices in $\compgraph$ with the maximum degree of any vertex and the weight of an edges as $poly(\log p).$ This result also holds for \mincostold\ problem as $0$-extension is a special case of it.

Next we prove a relation between the \mincostold\ and \mincostnew\ problems.
\begin{theorem}
 \label{thm:old_new_relation}
 Given a network graph $\net$ with weight function $x$ on its edges and a computation graph $\compgraph$ with weight function $w$ on its edges the optimal solution of \mincostold\ problem gives a $D$-approximation of \mincostnew\ problem where $D$ is the maximum out-degree of any vertex in $\compgraph.$
\end{theorem}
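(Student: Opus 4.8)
The plan is to establish the two-sided pointwise inequality $C(\embedding') \le \complement(\embedding') \le D\, C(\embedding')$ for every \rembedding\ $\embedding'$, and then chain it against the optimality of the old-cost solution. I would begin by noting that both \mincostold\ and \mincostnew\ minimize over the same feasible set $\setofembeddings'$ of restricted embeddings, so each maps every vertex of $\compgraph$ to a single vertex of $\net$; this makes the two cost functions directly comparable on a common domain. I would also fix the weight correspondence dictated by Remark~\ref{rmk:out-func}: for the function $\theta$ computed at a vertex $\omega$, every outgoing edge $\gamma\in\Gamma_\theta$ carries weight $w(\gamma)=w(\theta)$, so that Equations~\eqref{eq:orig_cost} and~\eqref{eq:oldcost} are built from the same underlying weights.

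Next I would carry out the core per-edge, per-function comparison. Fixing a network edge $e$ and a function type $\theta$, let $n_\theta(e):=\sum_{\gamma\in\Gamma_\theta}\xi_{\embedding'}^{\gamma}(e)$ denote the number of $\compgraph$-edges carrying $\theta$ whose path uses $e$. By the definitions, the new cost contributes $r_{\embedding'}^{\theta}(e)\,w(\theta)\,x(e)=\mathbbm{1}\{n_\theta(e)\ge 1\}\,w(\theta)\,x(e)$, while the old cost contributes $n_\theta(e)\,w(\theta)\,x(e)$. Since $\mathbbm{1}\{n_\theta(e)\ge 1\}\le n_\theta(e)$, summing over all $\theta$ and $e$ gives $C(\embedding')\le\complement(\embedding')$. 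For the reverse bound I would observe that $\Gamma_\theta$ is precisely the set of outgoing edges of the (unique) vertex computing $\theta$, so $|\Gamma_\theta|$ equals that vertex's out-degree and hence $n_\theta(e)\le D$. This yields $n_\theta(e)\le D\,\mathbbm{1}\{n_\theta(e)\ge 1\}$ (trivially when $n_\theta(e)=0$, and because $n_\theta(e)\le D$ otherwise), which after summation gives $\complement(\embedding')\le D\,C(\embedding')$.

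Finally I would combine these with optimality. Writing $\embedding^{*}_{\complement}$ for an optimal \mincostold\ embedding and $\embedding^{*}_{C}$ for an optimal \mincostnew\ embedding, the chain
\[
C(\embedding^{*}_{\complement})\ \le\ \complement(\embedding^{*}_{\complement})\ \le\ \complement(\embedding^{*}_{C})\ \le\ D\,C(\embedding^{*}_{C})
\]
follows respectively from the lower pointwise bound, the $\complement$-optimality of $\embedding^{*}_{\complement}$ (its old cost is no larger than that of $\embedding^{*}_{C}$), and the upper pointwise bound. This is exactly the assertion that evaluating the old-optimal embedding under the new cost costs at most $D$ times the optimal new cost, i.e. $\embedding^{*}_{\complement}$ is a $D$-approximation for \mincostnew.

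I expect the only genuinely delicate point to be the reverse inequality $\complement\le D\,C$: it rests entirely on the structural fact that a function type has a single producer whose out-degree is at most $D$, so the old model's over-counting of a shared network edge is capped by $D$. Correctly identifying $|\Gamma_\theta|$ with an out-degree (rather than with the global number of edges that happen to carry $\theta$) is the crux; once that is pinned down, everything else is bookkeeping over the sums in Equations~\eqref{eq:orig_cost} and~\eqref{eq:oldcost}.
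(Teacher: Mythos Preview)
Your proposal is correct and follows essentially the same route as the paper: establish the pointwise sandwich $C(\embedding')\le\complement(\embedding')\le D\,C(\embedding')$ for every \rembedding, then chain optimality of the two problems to get $C(\embedding^{*}_{\complement})\le D\,C(\embedding^{*}_{C})$. Your version is in fact a bit more explicit than the paper's, since you unpack the per-edge, per-$\theta$ comparison via $n_\theta(e)$ and spell out why $|\Gamma_\theta|\le D$, whereas the paper argues these bounds verbally.
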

\begin{proof}
Recall that the cost of a \rembedding\ of $\compgraph$ on $\net$ is computed by Equations~\eqref{eq:orig_cost}, ~\eqref{eq:oldcost} in \mincostold\ (denoted by $\complement(\embedding)$) and \mincostnew\ (denoted by $C(\embedding)$) problem, respectively. Let us consider a computation graph $\compgraph$ in which outgoing edges of any vertex are not more that $D.$ As seen earlier weight of an edge $e$ in $\net$ considered multiple times if it is used by multiple outgoing edges of a vertex of $\compgraph$ in an embedding $\embedding$ while computing $\complement(\embedding)$ but it is considered only once for computation of $C(\embedding).$ Thus, for any embedding $\embedding,$ $C(\embedding) \leq \complement(\embedding).$ By the same argument if the maximum number of outgoing edges of any vertex of $\compgraph$ is $D$ then an edge $e$ of $\net$ can be used at most $D$ times by outgoing edges of any vertex. Thus the cost coming from mapping of outgoing edges of a vertex of $\compgraph$ on any edge $e$ of $\net$ in $\complement(\embedding)$ could be at most $D$ times the cost coming from $e$ in $C(\embedding)$ which implies that $\complement(\embedding) \leq D C(\embedding).$ Combining both the arguments we have,
\begin{equation}
 \complement(\embedding) \leq D C(\embedding) \leq D \complement(\embedding). \label{eq:oldnewcost}
\end{equation}

Let $\embedding_1$ and $\embedding_2$ be the optimal solutions of \mincostold\ and \mincostnew\ problem respectively. Then, $C(\embedding_2) \leq C(\embedding_1) \leq \complement(\embedding_1) \leq \complement(\embedding_2) \leq D C(\embedding_2),$ where first and fourth inequalities are due to the definitions of $\embedding_1,\embedding_2$ and  second and third inequalities are due to Equation~\ref{eq:oldnewcost}. Thus,

\begin{displaymath}
 C(\embedding_2) \leq C(\embedding_1) \leq D C(\embedding_2).
\end{displaymath}
This proves the theorem.
\end{proof}

This implies that an algorithm which gives an $\alpha$-approximate solution for \mincostold\ problem also gives an $\alpha D$-approximate solution for \mincostnew\ problem. Recall that by Theorem~\ref{thm:approx_separation} there is an $\alpha$-approximation algorithm for solving R-CALP if and only if there is an $\alpha$-approximation algorithm for \mincostnew\ problem. Combining this fact with the hardness result for $0$-extension in \cite{Karloff06} we get the following result.

\begin{corollary}
 \label{cor:oldcost_hardness}
 Given an arbitrary network graph $\net$ and a computation graph $\compgraph$ with $p$ vertices and the maximum degree of a vertex and the maximum weight on an edge in $\compgraph$ is $poly(\log p),$ for any $\epsilon >0,$ there is no polynomial time approximation algorithm with approximation ratio of $O(poly(\log p)(\log p)^{1/4 -\epsilon})$ for solving R-CALP unless NP $\subseteq$ DTIME$(p^{poly(\log p)}).$
\end{corollary}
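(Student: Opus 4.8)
The plan is to prove the contrapositive by chaining the two equivalences established earlier in this section and then feeding into the known hardness of $0$-extension from \cite{Karloff06}. Suppose, for contradiction, that R-CALP admits a polynomial time $\beta$-approximation. First I would apply Corollary~\ref{cor:approx_separation_rcalp}, which says that an $\alpha$-approximation for R-CALP and an $\alpha$-approximation for \mincostnew\ on the same pair $(\compgraph,\net)$ exist together with no change in the ratio; this immediately produces a polynomial time $\beta$-approximation for \mincostnew.

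Next I would transfer this guarantee to \mincostold\ using the sandwich $C(\embedding) \leq \complement(\embedding) \leq D\, C(\embedding)$ recorded in Equation~\eqref{eq:oldnewcost}, where $D$ is the maximum out-degree of a vertex of $\compgraph.$ Running the \mincostnew\ approximation and evaluating the \mincostold\ cost of its output $\embedding$ gives $\complement(\embedding) \leq D\,C(\embedding) \leq D\beta\,\min_{\embedding'} C(\embedding') \leq D\beta\,\min_{\embedding'}\complement(\embedding'),$ where the last inequality uses $\min_{\embedding'} C(\embedding') \leq \min_{\embedding'}\complement(\embedding')$ from $C \leq \complement.$ Hence the same embedding is a $\beta D$-approximation for \mincostold. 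Since, as noted in the paragraph on the connection to $0$-extension, $0$-extension is precisely the special case of \mincostold\ in which $\net$ is the complete graph whose edge weights induce a metric, this is in particular a $\beta D$-approximation for $0$-extension.

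Finally I would invoke \cite{Karloff06}: for every $\epsilon>0$ there is no polynomial time $O((\log p)^{1/4-\epsilon})$-approximation for $0$-extension unless $NP \subseteq DTIME(p^{\mathrm{poly}(\log p)}),$ where $p$ is the number of vertices of $\compgraph$ and the maximum degree and edge weights are $\mathrm{poly}(\log p).$ Under the hypotheses of the corollary the out-degree $D$ entering Theorem~\ref{thm:old_new_relation} is itself $\mathrm{poly}(\log p),$ so the $\beta D$-approximation just constructed violates this lower bound unless the containment holds. Combining the $(\log p)^{1/4-\epsilon}$ threshold with the polylogarithmic factor $D$ then yields the stated hardness for R-CALP.

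The step I expect to be the real obstacle is the parameter bookkeeping across the three reductions. I must verify that the degree and edge-weight bounds $\mathrm{poly}(\log p)$ demanded by the $0$-extension lower bound survive when the instance is reread as a \mincostnew\ (hence R-CALP) instance under the cost $C,$ and that the out-degree $D$ is only $\mathrm{poly}(\log p)$ so that it merges with the $(\log p)^{1/4-\epsilon}$ threshold instead of dominating it. Equally delicate is the direction in which the factor $D$ acts on the approximation ratio: R-CALP is a maximization (packing) problem whereas \mincostnew\ and \mincostold\ are minimizations, so I would lean entirely on Corollary~\ref{cor:approx_separation_rcalp} to carry the ratio across that boundary unchanged and track the single factor $D$ only through Theorem~\ref{thm:old_new_relation}, rather than re-deriving any of the correspondences by hand.
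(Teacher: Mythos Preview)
Your proposal is correct and follows essentially the paper's approach: chain Corollary~\ref{cor:approx_separation_rcalp}, the sandwich $C(\embedding)\le\complement(\embedding)\le D\,C(\embedding)$ of Equation~\eqref{eq:oldnewcost}, and the $0$-extension lower bound of \cite{Karloff06}. If anything you are more careful than the paper's text, which states the implication from \mincostold\ to \mincostnew\ (Theorem~\ref{thm:old_new_relation}), whereas transferring hardness requires the reverse implication that you correctly extract from the same sandwich.
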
 

Now we present polynomial time approximate algorithms for special classes of computation graph $\compgraph.$

\subsection{When $\compgraph$ is a layered graph}
\label{sec:layer}

In this section we consider the case when $\compgraph$ is a layered graph. An example of layered graph is shown in Fig.~\ref{fig:layer}. We assume that there are $r$ layers and each layer has at most $W$ vertices. We number layers from $\{1,\ldots,r\}$ and vertices of a layer $l$ by $\{\omega_{1l},\ldots,\omega_{Wl}\}.$ An edge $\{\omega_{ai},\omega_{bj}\}$ is present only if $j = i+1.$ We also assume that the sink vertex is present on the $r$-th layer. Note that this implies that the out-degree of any vertex in a layered graph is at most $W.$ Commonly used layered computation graphs are butterfly structure of fast Fourier transform (FFT), correlation function and functions of Boolean data in Sum of Product (or Product of Sum) form.

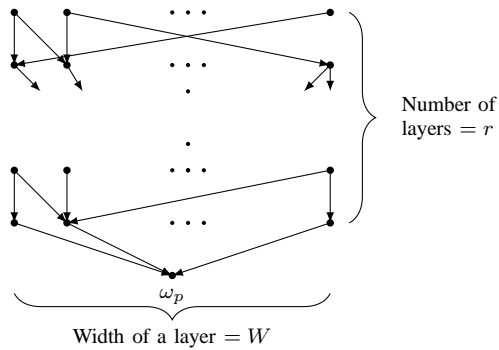
\begin{figure}[t]
     \centering
\begin{tikzpicture}[>=latex,scale=0.7]
\footnotesize
\foreach \x in {-3,-2,3} {
  \foreach \y in {-1,0,2,3} {
    \fill (\x,\y) circle (0.07cm);
  }  
}    
\foreach \x in {0,0.3,0.6} {
  \foreach \y in {-1,0,2,3} {
    \fill (\x,\y) circle (0.04cm);
  }  
} 
\foreach \y in {0,0.5,1.5}
  \fill (0.3,\y) circle (0.04cm);
\fill (0,-2) circle (0.07cm);

\draw[->] (-3,3) -- (-3,2);
\draw[->] (-3,3) -- (-2,2);  
\draw[->] (-2,3) -- (-2,2);     
\draw[->] (3,3) -- (-3,2);
\draw[->] (-2,3) -- (3,2);
\draw[->] (-3,2) -- (-2.5,1.5);
\draw[->] (-2,2) -- (-1.7,1.5);
\draw[->] (3,2) -- (2.5,1.5);
\draw[->] (3,2) -- (3,1.5);
\draw[->] (-3,0) -- (-3,-1);
\draw[->] (-3,0) -- (-2,-1);  
\draw[->] (-2,0) -- (-2,-1);     
\draw[->] (3,0) -- (-2,-1);
\draw[->] (3,0) -- (3,-1);
\draw[->] (-3,-1) -- (0,-2);
\draw[->] (-2,-1) -- (0,-2);
\draw[->] (3,-1) -- (0,-2);
\node at (0,-2.4){$\omega_p$};    
\draw [decorate,decoration={brace,amplitude=10pt},xshift=-4pt,yshift=0pt]
(3.5,3) -- (3.5,-1) node [black,midway,xshift=0.6cm,text width=2cm,right] {Number of layers $= r$};
\draw [decorate,decoration={brace,amplitude=10pt},xshift=0pt,yshift=-4pt]
(3,-2.2) -- (-3,-2.2) node [black,midway,yshift=-0.6cm] {Width of a layer $= W$};

\end{tikzpicture}

    \caption{A layered computation graph}
    \label{fig:layer}
\end{figure}
  
A polynomial time algorithm is presented in \cite{Vyavahare14} which solves \mincostold\ problem for a layered $\compgraph$ and an arbitrary $\net.$ This algorithm takes $O(rn^{2W})$ time where $n$ is the number of vertices in $\net.$ Theorem~\ref{thm:old_new_relation} implies that this algorithm is a $2W$-approximation algorithm for \mincostnew\ problem. Recall that \mincostnew\ problem is the separation oracle for the dual of R-CALP and by the method described in Section~\ref{sec:step2} we can solve the R-CALP by using \mincostnew\ solution. This leads us to the following result.

\begin{corollary}
 \label{cor:boundedwidth_layer}
 Given an arbitrary network graph $\net$ with non-negative capacities on its edges and a layered computation graph $\compgraph$ with $r$ layers and at most $W$ vertices at each layer, there is a polynomial time $W$-approximation algorithm to solve R-CALP.
\end{corollary}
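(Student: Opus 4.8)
The plan is to chain three ingredients that are already in place: the exact polynomial-time solver for \mincostold\ on layered graphs from \cite{Vyavahare14}, the cost-model comparison of Theorem~\ref{thm:old_new_relation}, and the separation-oracle equivalence for R-CALP recorded in Corollary~\ref{cor:approx_separation_rcalp}. In one line: on layered inputs an exact \mincostold\ solver is automatically a $W$-approximate \mincostnew\ solver, and a $W$-approximate \mincostnew\ solver (which is exactly the dual separation oracle) yields a $W$-approximate algorithm for R-CALP.

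First I would record the only structural fact needed about layered graphs: since an edge $\{\omega_{ai},\omega_{bj}\}$ exists only when $j=i+1$, every vertex points only into the next layer, which holds at most $W$ vertices, so the maximum out-degree satisfies $D \le W$. Next I would invoke the algorithm of \cite{Vyavahare14}, which returns an \emph{optimal} \mincostold\ embedding of a layered $\compgraph$ on an arbitrary $\net$ in time $O(r n^{2W})$, polynomial once the width $W$ is treated as a constant. Being optimal for $\complement(\cdot)$, this embedding is, by Theorem~\ref{thm:old_new_relation} with $D\le W$, a $W$-approximation for the cost $C(\cdot)$ of Equation~\eqref{eq:orig_cost}; since \mincostnew\ is precisely the separation oracle of the dual of R-CALP, we thus have a polynomial-time $W$-approximate separation oracle.

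Finally I would feed this approximate oracle into the forward direction of Corollary~\ref{cor:approx_separation_rcalp} (the \rembedding\ analogue of Theorem~\ref{thm:approx_separation}): running the ellipsoid method on the dual of R-CALP augmented with the bound $\sum_{e \in \netedges} c(e)y(e)\le R$ and binary-searching over $R$, with the oracle supplying separating hyperplanes, produces polynomially many hyperplanes that reduce R-CALP to a polynomial-size LP whose optimum is an $\alpha$-approximation. The ratio is preserved because every right-hand side of the dual cost constraints equals $1$, so taking $\alpha = W$ yields the claimed $W$-approximation.

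The step I expect to need the most care is checking that the whole pipeline stays inside the bounded-width layered class and remains polynomial. Throughout the ellipsoid run the graph $\compgraph$ and its weights are fixed; only the network-edge vector $\{y(e)\}$ varies between calls, so every call is still a \mincostnew\ instance on the same layered $\compgraph$ and the $O(r n^{2W})$ solver applies at each of the polynomially many calls. One should also confirm that neither the binary search over $R$ nor the final reduced-LP solve inflates the factor beyond $W$ --- this is exactly where the ``right-hand side is all $1$'' observation from the proof of Theorem~\ref{thm:approx_separation} is invoked, and it is what keeps the guarantee at $W$.
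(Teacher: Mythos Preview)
Your proposal is correct and follows essentially the same route as the paper: use the exact \mincostold\ solver for layered $\compgraph$ from \cite{Vyavahare14}, convert it via Theorem~\ref{thm:old_new_relation} (with $D\le W$) into a $W$-approximate \mincostnew\ solver, and then invoke Corollary~\ref{cor:approx_separation_rcalp} to obtain a $W$-approximation for R-CALP. Your extra care in verifying that only the weight vector $\{y(e)\}$ changes across ellipsoid iterations, so that each oracle call remains a layered instance solvable in $O(rn^{2W})$ time, is a welcome elaboration of details the paper leaves implicit.
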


The complexity of the algorithm of Corollary~\ref{cor:boundedwidth_layer} is exponential in the width of any layer thus the algorithm cannot be applied to layered graphs with unbounded width. We now present a procedure to get an $O(F)$-approximation of \mincostold\ problem for a computation graph $\compgraph$ which has a spanning tree $\mathcal{T}$ such that any edge of $\mathcal{T}$ is a part of at most $O(F)$ fundamental cycles. A fundamental cycle is a cycle created by adding an edge from $\compgraph$ to $\mathcal{T}.$ For every edge $uv \notin \mathcal{T}$ there is a unique such cycle created by the edges of $\mathcal{T}$ and $uv.$
\begin{theorem}
 \label{thm:unboundedwidth_layer}
 Given an arbitrary network $\net$ and a computation graph $\compgraph$ with a spanning tree $\mathcal{T}$ such that any edge of $\mathcal{T}$ is a part of at most $O(F)$ fundamental cycles, there is a polynomial time $O(F)$-approximation algorithm to solve \mincostold\ problem.
\end{theorem}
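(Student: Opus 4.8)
The plan is to compute a minimum cost R-embedding for the spanning tree $\mathcal{T}$ in isolation and then argue that, because every tree edge lies on few fundamental cycles, this embedding is also near-optimal for the whole computation graph $\compgraph$. Recall from Section~\ref{sec:oldcost} that in $\mincostold$ every vertex of $\compgraph$ is mapped to a single vertex of $\net$ and each edge is routed along a shortest path, so writing $d(\cdot,\cdot)$ for the shortest-path metric of $\net$, the cost of an R-embedding $\embedding$ is $\complement(\embedding)=\sum_{\gamma\in\compedges} w(\gamma)\,d(\embedding(\tail\gamma),\embedding(\head\gamma))$. This is precisely a metric-labeling / $0$-extension objective, and restricting it to the tree edges gives $\complement_{\mathcal{T}}(\embedding):=\sum_{\gamma\in\mathcal{T}} w(\gamma)\,d(\embedding(\tail\gamma),\embedding(\head\gamma))$.

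First I would solve the tree-restricted problem exactly. Since $\mathcal{T}$ is a tree and the images of the sources and of the sink are fixed, $\complement_{\mathcal{T}}$ can be minimized in polynomial time by a standard bottom-up dynamic program over $\mathcal{T}$: root the tree and tabulate, for every vertex $\omega$ and every candidate image $u\in\netnodes$, the least cost of the subtree rooted at $\omega$ subject to $\embedding(\omega)=u$. This is exactly the polynomial-time tree case of $\mincostold$ cited in Section~\ref{sec:oldcost}. Let $\embedding^{*}$ be the resulting labeling; since it assigns a single network vertex to every vertex of $\compgraph$, it is automatically a valid R-embedding of the entire $\compgraph$, and it is the embedding the algorithm outputs.

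For the analysis I would charge the non-tree edges to the tree. For a non-tree edge $\gamma=(\omega_1,\omega_2)$ let $P_\gamma\subseteq\mathcal{T}$ be its fundamental tree path; since $d$ is a metric, the triangle inequality gives $d(\embedding^{*}(\omega_1),\embedding^{*}(\omega_2))\le\sum_{\gamma'\in P_\gamma} d(\embedding^{*}(\tail{\gamma'}),\embedding^{*}(\head{\gamma'}))$. Summing over all non-tree edges and exchanging the order of summation, the total non-tree cost is at most $\sum_{\gamma'\in\mathcal{T}} d(\embedding^{*}(\tail{\gamma'}),\embedding^{*}(\head{\gamma'}))\big(\sum_{\gamma\notin\mathcal{T}:\,\gamma'\in P_\gamma} w(\gamma)\big)$, and by hypothesis the inner sum loads each tree edge $\gamma'$ with the weight of at most $O(F)$ fundamental cycles. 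Hence $\complement(\embedding^{*})\le\big(1+O(F)\big)\,\complement_{\mathcal{T}}(\embedding^{*})$. Combining this with the optimality of $\embedding^{*}$ on the tree, $\complement_{\mathcal{T}}(\embedding^{*})\le\complement_{\mathcal{T}}(\mathsf{opt}(\complement))\le\complement(\mathsf{opt}(\complement))$ (the last step because $\complement_{\mathcal{T}}$ merely discards non-negative terms), yields $\complement(\embedding^{*})=O(F)\,\complement(\mathsf{opt}(\complement))$, the claimed $O(F)$-approximation.

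The step I expect to be the main obstacle is controlling the weighted load $\sum_{\gamma\notin\mathcal{T}:\,\gamma'\in P_\gamma} w(\gamma)$ on a tree edge by $O(F)\,w(\gamma')$: the fundamental-cycle hypothesis bounds the \emph{number} of charging edges, so the bound is immediate when the edge weights are uniform (as in the $N$-point FFT graph, where $F=\log N$ and all intermediate functions share a single alphabet), but for genuinely non-uniform weights one must either read the hypothesis as a weighted fundamental-cycle count or argue that $\mathcal{T}$ can be chosen to respect the weights. The remaining care is routine: verifying that the tree dynamic program correctly enforces the fixed source and sink images and runs in time polynomial in $|\compnodes|$ and $|\netnodes|$.
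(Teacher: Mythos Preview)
Your approach is essentially identical to the paper's: solve \mincostold\ optimally on the spanning tree, extend the resulting vertex labeling to all of $\compgraph$ via shortest paths, and bound the non-tree edges by charging each to its fundamental tree path so that every tree edge is charged $O(F)$ times, then compare against $\complement_{\mathcal{T}}(\mathsf{opt}(\complement))\le\complement(\mathsf{opt}(\complement))$. The weight concern you flag is well taken---the paper's own inequality $\complement_{uv}(\mathcal{X})\le\sum_{e\in\sigma_{uv}}\complement_e(\mathcal{T})$ likewise follows from the triangle inequality only when $w(u,v)$ does not exceed the weights of the tree edges on $\sigma_{uv}$, so both arguments are clean for uniform edge weights (as in the FFT application) and tacitly assume this otherwise.
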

\begin{proof}
Let $\mathcal{T}$ be the spanning tree of $\compgraph$ such that any of its edge is a part of at most $O(F)$ fundamental cycles. Recall that polynomial time algorithms to find optimal solution for \mincostold\ when the computation graph is a tree are known in the literature \cite{Bokhari81,Ying08}. Using any of the algorithms available in \cite{Bokhari81,Ying08} we can find the optimal solution of \mincostold\ for $\mathcal{T}$ on $\net.$ Let this optimal \rembedding\ for $\mathcal{T}$ be $\mathsf{opt}(\mathcal{T})$ with cost $\complement(\mathcal{T}).$ Note that the \rembedding\ $\mathsf{opt}(\mathcal{T})$ gives a mapping for each vertex of $\compgraph$ on $\net.$ We create an \rembedding\ $\mathcal{X}$ for $\compgraph$ from $\mathsf{opt}(\mathcal{T})$ as follows: Map an edge $(u,v) \in \compgraph$ to the shortest path between its mapped end points in $\mathsf{opt}(\mathcal{T}).$ In this way the edges of $\compgraph$ which are in $\mathcal{T}$ are mapped to the same paths as in $\mathsf{opt}(\mathcal{T}).$ It is easy to observe that it is a valid \rembedding\ for $\compgraph$ with cost $\complement(\mathcal{X}).$ Let the optimal solution of \mincostold\ problem for $\compgraph$ on $\net$ be $\mathsf{opt}(\compgraph)$ with cost $\complement(\mathsf{opt}(\compgraph)).$ 
It is easy to observe that the mapping of the edges of $\compgraph$ which are in $\mathcal{T}$ under the \rembedding\ $\mathsf{opt}(\compgraph)$ gives a valid \rembedding\ of $\mathcal{T}$ on $\net.$ Thus, $\complement(\mathcal{T}) \leq \sum_{uv \in \mathcal{T}} \complement_{uv}(\mathsf{opt}(\compgraph))  \leq \sum_{uv \in \mathcal{T}} \complement_{uv}(\mathsf{opt}(\compgraph)) + \sum_{uv \notin \mathcal{T}} \complement_{uv}(\mathsf{opt}(\compgraph)) \leq \complement(\mathsf{opt}(\compgraph)).$ 
Also, by the definition of $\mathsf{opt}(\compgraph)$ and $\mathcal{X}$ we get $\complement(\mathsf{opt}(\compgraph)) \leq \complement(\mathcal{X}).$ 

The cost of $\mathcal{X}$ can be written as $\complement(\mathcal{X}) = \sum_{uv \in \mathcal{T}} \complement_{uv}(\mathcal{X}) + \sum_{uv \notin \mathcal{T}} \complement_{uv}(\mathcal{X}) = \complement(\mathcal{T}) + \sum_{uv \notin \mathcal{T}} \complement_{uv}(\mathcal{X}).$
Note that for each $uv \notin \mathcal{T}$ there is a path $\sigma_{uv} \in \mathcal{T}.$ As an edge $uv \notin \mathcal{T}$ is mapped to the shortest distance between its mapped end points in $\mathcal{X}$ we get, 
\begin{displaymath}
 \sum_{uv \notin \mathcal{T}} \complement_{uv}(\mathcal{X}) \leq \sum_{uv \notin \mathcal{T}} \left(\sum_{e \in \sigma_{uv}} \complement_e(\mathcal{T}\right) \leq O(F) \complement(\mathcal{T}),
\end{displaymath}
where the last inequality is due to the property of $\mathcal{T}.$ Finally we get, $\complement(\mathcal{X}) \leq \complement(\mathcal{T}) + O(F) \complement(\mathcal{T}) \leq O(F) \complement(\mathcal{T}) \leq O(F) \complement(\mathsf{opt}(\compgraph)).$

This proves that the \rembedding\ $\mathcal{X}$ is an $O(F)$-approximation of $\mathsf{opt}(\compgraph).$
\end{proof}

Using this algorithm with the procedure described in Theorem~\ref{thm:approx_separation} we get the following result.
 
\begin{corollary}
 \label{cor:unboundedwidth_layer}
 Given an arbitrary network graph $\net$ with non-negative capacities on its edges and a computation graph $\compgraph$ with a spanning tree whose any edge is a part of at most $O(F)$ fundamental cycles, there is a $O(FD)$-approximation algorithm to solve R-CALP where $D$ is the maximum out-degree of any vertex in $\compgraph.$
\end{corollary}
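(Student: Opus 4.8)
The plan is to compose three results already in hand: the $O(F)$-approximation for \mincostold\ from Theorem~\ref{thm:unboundedwidth_layer}, the cost-model comparison of Theorem~\ref{thm:old_new_relation} (used in its sharper per-embedding form, Equation~\eqref{eq:oldnewcost}), and the LP-to-embedding equivalence of Corollary~\ref{cor:approx_separation_rcalp}. First I would reduce the goal: by Corollary~\ref{cor:approx_separation_rcalp} it suffices to exhibit a polynomial time $O(FD)$-approximation algorithm for \mincostnew, since any $\alpha$-approximation for \mincostnew\ yields an $\alpha$-approximation for R-CALP. So the whole task becomes producing one approximate \rembedding\ with the right guarantee.

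To build that \rembedding, run the algorithm of Theorem~\ref{thm:unboundedwidth_layer} to obtain an \rembedding\ $\mathcal{X}$ with $\complement(\mathcal{X}) \leq O(F)\,\complement(\mathsf{opt}(\complement))$. Write $\embedding_1 = \mathsf{opt}(\complement)$ for the optimal \mincostold\ solution and $\embedding_2 = \mathsf{opt}(C)$ for the optimal \mincostnew\ solution. The key calculation is the chain
\[
C(\mathcal{X}) \leq \complement(\mathcal{X}) \leq O(F)\,\complement(\embedding_1) \leq O(F)\,\complement(\embedding_2) \leq O(FD)\,C(\embedding_2),
\]
where the first and last inequalities are the per-embedding bounds $C(\embedding)\leq\complement(\embedding)\leq D\,C(\embedding)$ from Equation~\eqref{eq:oldnewcost} applied to $\mathcal{X}$ and to $\embedding_2$ respectively, the second inequality is the guarantee of Theorem~\ref{thm:unboundedwidth_layer}, and the third is the optimality of $\embedding_1$ for \mincostold. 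Since $C(\embedding_2)$ is the optimal \mincostnew\ cost, this shows $\mathcal{X}$ is an $O(FD)$-approximation for \mincostnew, and it is computed in polynomial time because each step above is. Feeding this into Corollary~\ref{cor:approx_separation_rcalp} then gives the claimed $O(FD)$-approximation for R-CALP.

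The main subtlety to watch is that Theorem~\ref{thm:old_new_relation} as stated only compares the two \emph{optima} ($\embedding_1$ being a $D$-approximation of \mincostnew), whereas here the input to the comparison is the \emph{approximate} \mincostold\ solution $\mathcal{X}$, not $\embedding_1$. I must therefore avoid the optimal-to-optimal statement and instead invoke the per-embedding inequality Equation~\eqref{eq:oldnewcost}, which is valid for every \rembedding; applying it once to $\mathcal{X}$ (to pass from $C$ to $\complement$) and once to $\embedding_2$ (to pass back from $\complement$ to $C$) is exactly what keeps the two factors $O(F)$ and $D$ multiplicative and yields $O(FD)$. Care in tracking which inequality is applied to which embedding is the only delicate point; the rest is routine composition.
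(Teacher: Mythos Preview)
Your proposal is correct and follows essentially the same approach as the paper, which simply states that combining Theorem~\ref{thm:unboundedwidth_layer} with the procedure of Theorem~\ref{thm:approx_separation} (equivalently Corollary~\ref{cor:approx_separation_rcalp}) yields the result. Your explicit use of the per-embedding inequality~\eqref{eq:oldnewcost} to handle the approximate solution $\mathcal{X}$ rather than the optimum $\embedding_1$ is in fact more careful than the paper's one-line justification.
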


An example of such a graph is the computation graph for fast Fourier transform (FFT). A FFT graph for $\kappa$ input sources can be represented by a layered graph of $r= \log (\kappa)$ layers with $W=\kappa$ vertices on each layer. Fig.~\ref{fig:fft}a shows an FFT computation graph for $4$ sources and its spanning tree is shown in Fig.~\ref{fig:fft}b. It is easy to observe that in such a spanning tree of any FFT structure any edge is a part of at most $O(\log (\kappa))$ fundamental cycles. This gives a $O(\log (\kappa))$-approximation for R-CALP with $k$-point FFT computation graph.

\begin{figure}[t]
    \centering
    \subfloat[]{
\begin{tikzpicture}[>=latex,scale=0.7]

\footnotesize
\foreach \x in {-3,-1,1,3}
  \foreach \y in {3,2,1}
    \fill (\x,\y) circle (0.07cm);
    
 \fill (0,0) circle (0.07cm);
 \fill (0,-1) circle (0.07cm);
 \node at (0,-1.3) {$\omega_p$};
 \node at (-3,3.2) {$s_1$};
 \node at (-1,3.2) {$s_2$};
 \node at (1,3.2) {$s_3$};
 \node at (3,3.2) {$s_4$};

 \draw [->] (0,0) -- (0,-1);
 \foreach \x in {-3,-1,1,3} {
  \draw[->] (\x,1) -- (0,0);
   \foreach \y in {3,2} {
    \draw[->] (\x,\y) -- (\x,\y-1);
    }
  }   
  \draw[->] (-3,3) --(-1,2);
  \draw[->] (-1,3) -- (-3,2);
  \draw[->] (3,3) --(1,2);
  \draw[->] (1,3)--(3,2);
  \draw[->] (-3,2) --(1,1);
  \draw[->] (-1,2)--(3,1);
  \draw[->] (1,2)--(-3,1);
  \draw[->] (3,2) --(-1,1);
          
\end{tikzpicture}
}
\hspace*{10pt}
\subfloat[]{
\begin{tikzpicture}[>=latex,scale=0.7]

\footnotesize
\foreach \x in {-3,-1,1,3}
  \foreach \y in {3,2,1}
    \fill (\x,\y) circle (0.07cm);
    
 \fill (0,0) circle (0.07cm);
 \fill (0,-1) circle (0.07cm);
 \node at (0,-1.3) {$\omega_p$};
 \node at (-3,3.2) {$s_1$};
 \node at (-1,3.2) {$s_2$};
 \node at (1,3.2) {$s_3$};
 \node at (3,3.2) {$s_4$};
 
 \draw [->] (0,0) -- (0,-1);
 \foreach \x in {-3,-1,1,3} {
  \draw[->] (\x,1) -- (0,0);
   \foreach \y in {3,2} {
    \draw[->] (\x,\y) -- (\x,\y-1);
    }
  }   
\end{tikzpicture}
}

    \caption{(a) FFT structure for $4$ sources. (b) A spanning tree of graph shown in (a)}
    \label{fig:fft}
\end{figure}
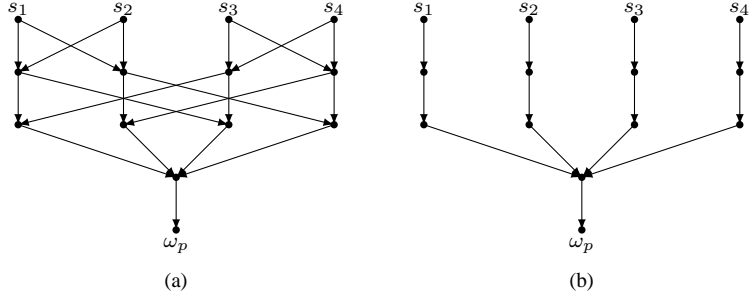

\subsection{QIP for \mincostold\ and its LP relaxation}
\label{sec:lp}

In this section we present a quadratic integer program to solve \mincostold\ problem and its linear programming relaxation. A similar quadratic integer program for \mincostold\ has been presented in \cite{Vyavahare14a}. Then we show how the algorithms of \cite{Calinescu01} for $0$-extension can be extended to get approximate algorithms for \mincostold\ which in turn gives an approximate algorithm for R-CALP. 

The quadratic integer program for \mincostold\ problem is shown below. It is easy to verify that the objective function is same as Equation~\eqref{eq:oldcost} where $d(u,v)$ is the shortest distance between vertices $u,v$ in the network graph. Recall that in an \rembedding\ a vertex of the computation graph is mapped to only one vertex in the network graph. Thus for each vertex $\alpha \in \compnodes, u \in \netnodes$ we define a binary variable $x_{\alpha u},$ which takes the value one if and only if $\alpha$ is mapped to $u$ in the embedding which minimizes the objective function. The embedding constraints ensure that each vertex $\alpha$ is mapped only to one of the vertices in $\netnodes.$ Likewise the source and sink constraints ensure that the sources and sink of computation graph are mapped to the corresponding sources and sink in the network graph.

\centerline{\rule{\columnwidth}{0.75pt}} \textbf{Quadratic Integer Program for \mincostold} \cite{Vyavahare14a}

\textbf{Objective:}$
  \min \sum\limits_{(\alpha,\beta) \in \compedges}  w(\alpha,\beta) \left(\sum\limits_{u,v \in \netnodes}  x_{\alpha u} d(u,v) x_{\beta v}\right) $ 
  \textbf{subject to}
\begin{enumerate}
\item Source constraints
 \begin{displaymath}
  x_{\alpha u} = 1 \mbox{ if } \alpha = \omega_i \mbox{ and } u= s_i
  \forall i \in [1,\kappa]
 \end{displaymath}
 \item Sink Constraint
 \begin{displaymath}
  x_{\alpha u} = 1 \mbox{ if } \alpha = \omega_p \mbox{ and } u= t
 \end{displaymath}
 \item \rembedding\ constraints
 \begin{displaymath}
   \sum\limits_{u \in \netnodes} x_{\alpha u} = 1 \mbox{ } \forall
   \alpha \in \compnodes
 \end{displaymath}
 \item Binary constraints
   \begin{displaymath}
     x_{\alpha u} \in \{0,1\} \mbox{ } \forall \alpha \in \compnodes,
     u \in \netnodes
 \end{displaymath}
\end{enumerate}
\centerline{\rule{\columnwidth}{0.75pt}}

Note that the objective function of the above QIP is a quadratic function of the binary variables $x_{\alpha u}.$ We relax this QIP into a linear program by using the concept of \textit{earthmover distance metric} which is very similar to the relaxation presented for $0$-extension problem in \cite{Archer04}. Recall that the shortest distance $d(u,v)$ forces a metric on the vertex set $\netnodes$ of the network graph and $|\netnodes| = n.$ Given a metric $(\netnodes,d)$ on a set $V$ the earthmover distance extends the metric to the probability distributions over $\netnodes.$ If any probability distribution $\bar{a} := \{a_1,\ldots,a_n\}$ over $\netnodes$ is seen as $a_i$ amount of dirt piled on $i \in \netnodes$ then the earthmover distance between $\bar{a}$ and a distribution $\bar{b}:= \{b_1,\ldots,b_n\}$ is the minimum cost of moving the dirt from configuration $\bar{a}$ to $\bar{b}.$ The earthmover distance, $d_{EM}(a,b),$ between two distributions can be found by the following flow problem.

\centerline{\rule{\columnwidth}{0.75pt}}
\textbf{Objective:} $d_{EM}(a,b)= \min \sum\limits_{u,v \in \netnodes} d(u,v) f_{uv}$ \textbf{subject to:}
\begin{enumerate}
 \item $\sum\limits_{v \in \netnodes} f_{uv} = a_u \mbox{ } \forall u \in \netnodes$
 \item $\sum\limits_{u \in \netnodes} f_{uv} = b_v \mbox{ } \forall v \in \netnodes$
 \item $f_{uv} \geq 0 \mbox{ } \forall u,v \in \netnodes$
\end{enumerate}
\centerline{\rule{\columnwidth}{0.75pt}}

In the flow problem above the variable $f_{uv}$ represents the amount of dirt to be moved from $u$ to $v$ while going from configuration $\bar{a}$ to $\bar{b}.$ 

To get the LP relaxation for the QIP we first replace the binary constraints by $0 \geq x_{\alpha u} \geq 1$ for each $\alpha \in \compnodes, u \in \netnodes$ except for the sources and sink. Then we replace the term $x_{\alpha u} x_{\beta v}$ in the objective function by a variable $y_{\alpha u \beta v}$ resulting in the following objective function.
\begin{displaymath}
\min \sum\limits_{(\alpha,\beta) \in \compedges}  w(\alpha,\beta) \left(\sum\limits_{u,v \in \netnodes}  y_{\alpha u \beta v} d(u,v) \right) 
\end{displaymath}

Multiplying the \rembedding\ constraint by $x_{\beta v}$ and $x_{\alpha u}$ appropriately on both sides we get the new constraints for the variables $y_{\alpha u \beta v}$ as---(1) $\sum\limits_{u \in \netnodes} y_{\alpha u \beta v} = x_{\beta v} \mbox{ } \forall
   \alpha \in \compnodes, v \in \netnodes \mbox{ and } (\alpha,\beta) \in \compedges,$ (2)~ $\sum\limits_{v \in \netnodes} y_{\alpha u \beta v} = x_{\alpha u} \mbox{ } \forall
   \beta \in \compnodes, u \in \netnodes \mbox{ and } (\alpha,\beta) \in \compedges.$
   
Let $x_{\alpha} := \{x_{\alpha 1},\ldots, x_{\alpha n}\}$ be an $n$-dimensional vector where an element $x_{\alpha i}$ corresponds to the variable $x_{\alpha i}$ for $i \in \netnodes.$ Along with the \rembedding\ constraints $x_{\alpha}$ for each $\alpha \in \compnodes$ can be seen as a probability distribution over the set of network vertices $\netnodes$ and the variable $y_{\alpha u \beta v}$ can be seen as the flow variables corresponding to flow problem to solve the earthmover distance between the configuration $x_{\alpha}$ and $x_{\beta}$ for each $(\alpha,\beta) \in \compedges.$ Thus, $\min \sum\limits_{u,v \in \netnodes}  y_{\alpha u \beta v} d(u,v) = d_{EM}(x_{\alpha},x_{\beta})$ and we can write the LP relaxation as follows:

\centerline{\rule{\columnwidth}{0.75pt}} \textbf{Earthmover based linear program for \mincostold}

\textbf{Objective:}$
  \min \sum\limits_{(\alpha,\beta) \in \compedges}  w(\alpha,\beta) d_{EM}(x_{\alpha},x_{\beta}) $ 
  \textbf{subject to}
\begin{enumerate}
\item Source constraints
 \begin{displaymath}
  x_{\alpha u} = 1 \mbox{ if } \alpha = \omega_i \mbox{ and } u= s_i
  \forall i \in [1,\kappa]
 \end{displaymath}
 \item Sink Constraint
 \begin{displaymath}
  x_{\alpha u} = 1 \mbox{ if } \alpha = \omega_p \mbox{ and } u= t
 \end{displaymath}
 \item \rembedding\ constraints
 \begin{align*}
   \sum\limits_{u \in \netnodes} x_{\alpha u} &= 1 \mbox{ } \forall
   \alpha \in \compnodes 
 \end{align*}
 \item Non negativity constraints
   \begin{align*}
     0 \leq x_{\alpha u} &\leq 1  \mbox{ } \forall \alpha \in \compnodes,
     u \in \netnodes 
   \end{align*}
\end{enumerate}
\centerline{\rule{\columnwidth}{0.75pt}}

Note that we are not writing the flow constraints $y_{\alpha u \beta v}$ corresponding to $x_{\alpha},x_{\beta}$ here but they are considered in computing $d_{EM}(x_{\alpha},x_{\beta})$ while solving this LP. 

 Let $\mathsf{opt}(LP)$ and $\mathsf{opt}(QIP)$ be the optimal objective function values of the LP relaxation and QIP for \mincostold\ respectively. Observe that any solution of the QIP for \mincostold\ is also a solution of this LP thus, $\mathsf{opt}(LP) \leq \mathsf{opt}(QIP).$ If we can find a polynomial time rounding procedure which rounds the solution corresponding to $\mathsf{opt}(LP)$ to a QIP solution $x$ such that objective function value $\mathsf{sol}(x)$ of $x$ is: $\mathsf{sol}(x) \leq \alpha \mathsf{opt}(QIP).$ Then we have an $\alpha$-approximation solution for the \mincostold\ problem. 

Authors in \cite{Calinescu01} gave two randomized rounding algorithms for $0$-extension problem where the LP relaxation is based on the \textit{semi-metric} concept. First rounding procedure of \cite{Calinescu01}
gives a $O(\log(|T|))$-approximation for an arbitrary graph $\compgraph=(\compnodes,\compedges)$ where $T \subseteq \compnodes$ on which the metric is given. Recall that the $0$-extension problem can be seen as a special case of \mincostold\ problem with the network graph $\net=(\netnodes,\netedges)$ as a complete graph on vertices of $T$ with edges following the given metric and the computation graph as $\compgraph.$ The semi-metric LP relaxation allows the mapping of vertices of $\compgraph$ on an arbitrary metric containing the given metric. The semi-metric LP relaxation cannot be directly extended to \mincostold\ problem but the rounding algorithms of \cite{Calinescu01} work for our earthmover based LP relaxation. Thus an instance of \mincostold\ problem in which number of vertices in $\net$ are equal to the number of sources and sink (in other words, there are no intermediate nodes in $\net$ and $|\netnodes| = |T|)$ the first rounding procedure of \cite{Calinescu01} will give an $O(\log(|\netnodes|))$-approximation. In general for any \mincostold\ instance $|\netnodes|> |T|.$ We applied the rounding procedure of \cite{Calinescu01} to a general instance of \mincostold\ and got an $O(\log(|\netnodes|))$-approximation for that as well. Recall that the optimal solution of earthmover LP gives a $|\netnodes| =n$ length vector $x_\alpha =\{x_{\alpha 1},\ldots,x_{\alpha n}\}$ for each vertex $\alpha \in \compnodes.$ The vector $x_{\alpha}$ is a probability distribution over $\netnodes,$ where an element $x_{\alpha u}$ represents the probability with which vertex $\alpha$ of $\compgraph$ can be mapped to $u$ of $\net.$ Thus each element of it may have fractional value except for the sources and sink vectors which have integral values due to the corresponding constraints. Let $x_u := \{0,\ldots,1,0,\ldots,0\}$ be the \emph{integral} probability distribution over $\netnodes$ in which the whole mass is concentrated on the vertex $u \in \netnodes.$ For finding an integral solution corresponding to fractional solution obtained by LP, the rounding procedure first finds a subset of $\netnodes$ which is closest to $x_{\alpha}$ by finding the earthmover distance $d_{EM}(x_{\alpha},x_u) \forall u \in \netnodes.$ Then parsing all the vertices of $\netnodes$ from a random permutation of $\netnodes$ it assigns a vertex $\alpha$ to a vertex $u$ of $\netnodes$ if it is \emph{close} \footnote{Here \emph{close} is defined by a random parameter $\delta \in [1,2)$ and $\alpha$ is assigned to $u$ if $u$ is the first vertex in the permutation which is within distance $\delta$ from the subset found earlier for $\alpha.$} to the subset found earlier for $\alpha.$ Carrying out the analysis along the lines of \cite{Calinescu01} we observe that this rounding procedure gives a solution $x$ of QIP such that $\mathsf{sol}(x) \leq O(\log(n)) \mathsf{opt}(QIP).$ Combining this with the results of Theorems~\ref{thm:old_new_relation},~\ref{thm:approx_separation} we get the following result.

\begin{corollary}
\label{cor:logn-approx}
  Given an arbitrary network graph $\net$ with non-negative capacities on its edges and a  computation graph $\compgraph$ in which the out-degree of any vertex is at most $D$ there is a polynomial time $O(D\log n )$-approximation algorithm to solve R-CALP, where $n$ is the number of vertices in $\net.$
\end{corollary}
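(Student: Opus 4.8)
The plan is to chain together three results already established: the earthmover-based LP relaxation of the QIP for \mincostold, the relation between the two cost models in Theorem~\ref{thm:old_new_relation}, and the equivalence between R-CALP and \mincostnew\ in Corollary~\ref{cor:approx_separation_rcalp}. Concretely, I would first build an $O(\log n)$-approximation algorithm for \mincostold\ by solving the earthmover LP and rounding its optimal fractional solution; then invoke Theorem~\ref{thm:old_new_relation} to turn this into an $O(D\log n)$-approximation for \mincostnew; and finally apply Corollary~\ref{cor:approx_separation_rcalp} to convert an $O(D\log n)$-approximation for \mincostnew\ into an $O(D\log n)$-approximation for R-CALP. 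The factor $D$ enters solely through Theorem~\ref{thm:old_new_relation}, since the LP and its rounding are phrased in terms of the old cost $\complement$, while R-CALP's separation oracle is the new cost $C$.

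The heart of the argument is the rounding step for the earthmover LP. First I would argue the LP can be solved in polynomial time: it has $O(|\compnodes|\, n)$ assignment variables, and each earthmover distance $d_{EM}(x_\alpha,x_\beta)$ is itself defined by a polynomial-size transportation LP in the flow variables $y_{\alpha u\beta v}$, so the whole program is a single polynomial-size LP. Given the optimal fractional distributions $\{x_\alpha\}_{\alpha\in\compnodes}$, I would apply the randomized rounding of \cite{Calinescu01}, adapted to our setting as sketched before the statement: draw a uniform random permutation of $\netnodes$ together with a random radius $\delta\in[1,2)$, and assign each computation vertex $\alpha$ to the first network vertex $u$ in the permutation whose point-mass $x_u$ lies within the scaled radius from the set of vertices nearest to $x_\alpha$. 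Because the source and sink vectors are integral by construction, this produces a valid \rembedding, and its expected cost equals $\sum_{(\alpha,\beta)\in\compedges} w(\alpha,\beta)\,\Pr[\alpha,\beta \text{ mapped to different network vertices}]$.

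The main obstacle --- and the step requiring the most care --- is showing that the expected separation probability of each edge is bounded by $O(\log n)\,d_{EM}(x_\alpha,x_\beta)$, which is where the $O(\log n)$ factor arises. The difficulty is that the original Calinescu--Karloff--Rabani analysis is stated for a semi-metric relaxation in which every vertex is a single point of the metric space, whereas here each $\alpha$ is a probability distribution $x_\alpha$ over $\netnodes$ and adjacent vertices are compared through the earthmover distance rather than a point-to-point distance. I would bridge this gap using the facts that $d_{EM}$ is a genuine metric (so the triangle inequality holds) and that the distance from $\alpha$ to a point-mass $x_u$ collapses to the linear quantity $\sum_{v\in\netnodes} x_{\alpha v}\, d(u,v)$; the ball-cutting argument of \cite{Calinescu01}, summed geometrically over the $O(\log n)$ distance scales induced by the random radius and the random permutation over the $n$ network vertices, then yields the required per-edge bound. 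Taking expectations over all edges and using linearity gives $\mathsf{sol}(x)\le O(\log n)\,\mathsf{opt}(LP)\le O(\log n)\,\mathsf{opt}(QIP)$, which is the desired $O(\log n)$-approximation for \mincostold; combining it with Theorem~\ref{thm:old_new_relation} and Corollary~\ref{cor:approx_separation_rcalp} as above gives the claimed $O(D\log n)$-approximation algorithm for R-CALP.
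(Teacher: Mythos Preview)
Your proof plan matches the paper's approach exactly: solve the earthmover LP for \mincostold, round via the Calinescu--Karloff--Rabani procedure to get an $O(\log n)$-approximation for \mincostold, then compose with Theorem~\ref{thm:old_new_relation} and Corollary~\ref{cor:approx_separation_rcalp} to obtain the $O(D\log n)$ bound for R-CALP. The paper gives essentially the same chain of reductions and, like you, defers the details of the rounding analysis to \cite{Calinescu01}.

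One technical slip worth correcting: the expected cost of the rounded \rembedding\ is not $\sum_{(\alpha,\beta)\in\compedges} w(\alpha,\beta)\,\Pr[\alpha,\beta \text{ mapped to different vertices}]$ but rather $\sum_{(\alpha,\beta)\in\compedges} w(\alpha,\beta)\,\mathbb{E}\bigl[d(\embedding(\alpha),\embedding(\beta))\bigr]$, since the metric $d$ on $\netnodes$ is in general not $0/1$. The quantity that the CKR analysis bounds by $O(\log n)\,d_{EM}(x_\alpha,x_\beta)$ is this expected distance, not merely the separation probability; with that correction your argument goes through unchanged.
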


In the second rounding procedure of \cite{Calinescu01} authors exploit the structural properties of the given graph $\compgraph$ and give an $O(1)$-approximation when $\compgraph$ is planar. A common example of a planar computation graph is of the \emph{correlation function.} A correlation function over $\kappa$ sources is defined as: $f=\sum_{i=1}^{\kappa-1}x_{i} x_{i+1}.$ Observe that it can be represented as a planar layered graph. The second rounding procedure of \cite{Calinescu01} can also be applied to our earthmover LP. The analysis for this rounding procedure only depends on the structure of the graph $\compgraph$ and not on the number of vertices of $\net$ thus the same analysis also works for our case also. This leads to the following result.

\begin{corollary}
\label{cor:planar-approx}
  Given an arbitrary network graph $\net$ with non-negative capacities on its edges and a planar computation graph $\compgraph$ in which the out-degree of any vertex is at most $D$   there is a polynomial time $O(D)$-approximation algorithm to solve R-CALP.
\end{corollary}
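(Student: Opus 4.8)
The plan is to follow exactly the pipeline already used for Corollary~\ref{cor:logn-approx}, but with the first rounding procedure of \cite{Calinescu01} replaced by its second one, which is tailored to planar instances. First I would solve the earthmover based linear program for \mincostold\ of Section~\ref{sec:lp}. Since each $d_{EM}(x_\alpha,x_\beta)$ is itself the optimum of a polynomially sized flow LP, the whole relaxation is a single polynomial sized linear program, so $\mathsf{opt}(LP)$ together with the fractional vectors $\{x_\alpha\}_{\alpha\in\compnodes}$ can be computed in polynomial time. Recall that each $x_\alpha$ is a probability distribution over $\netnodes$, so the family $\{x_\alpha\}$ embeds the vertices of $\compgraph$ into the earthmover metric over $(\netnodes,d)$.

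Next I would invoke the second (planar) rounding procedure of \cite{Calinescu01} on this fractional solution. The crucial point, and the step I expect to be the main obstacle, is verifying that this rounding and its analysis transfer from the semi-metric relaxation of \cite{Calinescu01} to our earthmover relaxation. The argument is that the Calinescu--Karloff--Rabani planar rounding is driven entirely by the planar structure of $\compgraph$: it decomposes $\compgraph$ using its planarity and charges the rounded cost of each edge $(\alpha,\beta)\in\compedges$ to its fractional length, producing a distortion bound that is an absolute constant, independent of $|\netnodes|=n$. Because $d_{EM}$ is a genuine metric on distributions extending $(\netnodes,d)$, the same per-edge charging and the same planar decomposition inequalities apply verbatim, with $d_{EM}(x_\alpha,x_\beta)$ playing the role that the semi-metric distance plays in \cite{Calinescu01}. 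This yields an integral \rembedding, i.e.\ a feasible QIP solution $x$, with $\mathsf{sol}(x)\le O(1)\cdot\mathsf{opt}(LP)\le O(1)\cdot\mathsf{opt}(QIP)$; since $\mathsf{opt}(QIP)$ is the optimal \mincostold\ cost, this is an $O(1)$-approximation for \mincostold.

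Finally I would chain the two reductions already established in the paper. By Theorem~\ref{thm:old_new_relation}, any $\alpha$-approximate \rembedding\ for \mincostold\ is an $\alpha D$-approximate solution for \mincostnew, where $D$ is the maximum out-degree in $\compgraph$; taking $\alpha=O(1)$ gives an $O(D)$-approximation for \mincostnew. Then Corollary~\ref{cor:approx_separation_rcalp} (the R-CALP analogue of Theorem~\ref{thm:approx_separation}) converts this $O(D)$-approximation for the separation oracle \mincostnew\ into an $O(D)$-approximation algorithm for R-CALP itself, via the ellipsoid-plus-binary-search procedure described in the forward direction of Theorem~\ref{thm:approx_separation}. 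Composing the running times, namely the single LP solve, the randomized rounding, and the polynomially many separation-oracle calls inside the ellipsoid method, keeps the overall algorithm polynomial, which completes the proof.
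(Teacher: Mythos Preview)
Your proposal is correct and follows essentially the same route as the paper: apply the second (planar) rounding of \cite{Calinescu01} to the earthmover LP relaxation to get an $O(1)$-approximation for \mincostold, then chain Theorem~\ref{thm:old_new_relation} and Corollary~\ref{cor:approx_separation_rcalp} to obtain the $O(D)$-approximation for R-CALP. Your write-up is in fact more explicit than the paper's about why the planar rounding analysis transfers to the earthmover relaxation (namely that $d_{EM}$ is a genuine metric extending $(\netnodes,d)$ and the per-edge charging depends only on the planar structure of $\compgraph$), but the argument is the same.
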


The approximation algorithms described in this section are summarized in Table~\ref{tb:approximate_results}.
\begin{table}[ht]
\centering
 \resizebox{\linewidth}{!}{
\begin{tabular}{|c|c|c|}\hline
\textbf{Computation Graph ($\compgraph)$} & \textbf{Approximation Factor} & \textbf{Result} \\ \hline
Layered graph with constant width ($W = O(1)$) & $O(W)$ & Corollary~\ref{cor:boundedwidth_layer} \\ \hline
Graph with a spanning tree in which every edge is a part of $O(F)$ fundamental cycles & $O(FD)$ & Corollary~\ref{cor:unboundedwidth_layer} \\ \hline
Arbitrary graph with $D$ degree of any vertex & $O(D\log n)$ & Corollary~\ref{cor:logn-approx} \\ \hline
Planar graph with $D$ degree of any vertex & $O(D)$ & Corollary~\ref{cor:planar-approx} \\ \hline
\end{tabular}
}
\caption{Approximation Algorithms of R-CALP for a specific computation graph ($\compgraph)$ and arbitrary network graph ($\net)$ with $n$ vertices}
\label{tb:approximate_results}
\end{table}

\section{Discussion}
\label{sec:discussion}

In this work we studied the problem of finding maximum rate schedule to compute a function $f$ on a capacitated network $\net$ when the computation schema for $f$ is given by a DAG, $\compgraph.$ We proved that solving this problem is MAX SNP-hard in general and presented some polynomial time approximate algorithms for a restricted class of schedules. Algorithmic lower bounds have been obtained for many known NP-hard problems under the exponential running time assumption for algorithms for  satisfiability (SAT) problem \cite{Lokshtanov11}. These assumptions are called \emph{Exponential Time Hypothesis} (ETH) and \emph{Strong Exponential Time Hypothesis} (SETH). SETH and ETH have led to tight lower bounds for several graph problems on bounded treewidth graphs (with running time being exponential in treewidth). It will be interesting to investigate the maximum rate problem under ETH and SETH.  We provided some polynomial time approximate algorithms for minimum cost embedding problem here, but we did not investigate the \emph{parameterized complexity} \cite{Downey99} of the problem. Possible parameters for the minimum cost embedding problem could be the treewidth of $\compgraph,$ or the number of sources in $\compgraph.$ Finding algorithms which are exponential only in the size of the fixed parameter but polynomial in the size of input can enhance the understanding of the minimum cost embedding problem and help us design better algorithms for a general class of $\compgraph.$

\section{Acknowledgment}
\label{sec:ack}
The authors would like to thank Sundar Vishwanathan for the idea of Theorem~\ref{thm:unboundedwidth_layer}.

\bibliographystyle{plain}
\bibliography{function-computation.bib}

\appendices

\section{Properties of an embedding}
\label{app:embedding_property}

Recall that an embedding maps an edge $\gamma$ to a set of paths such that the function carried by it, say $\theta,$ is computed by start node of the path and is used by the end node of the path to generate the successor function. Thus any edge in $\compgraph$ which starts from a source vertex $\omega_i$ should be mapped to a path in $\net$ which starts from $s_i$ (item~1 of Definition~\ref{def:embedding}). Similarly, any incoming edge of sink vertex $\omega_p \in \compnodes$ should be mapped to paths which end at the sink $t \in \netnodes$ (item~2 of Definition~\ref{def:embedding}). According to a computation event in $\net$ any vertex $u \in \netnodes$ can compute a symbol of a function $\theta$ at time $\tau$ if the corresponding symbols of all its predecessor functions are available at $u.$ Thus, for every edge $\gamma$ of $\compgraph,$ the end points of one of the paths to which its predecessor edges are mapped should be the same as the start point of a path to which $\gamma$ is mapped and vice versa (item~3 of Definition~\ref{def:embedding}).

Fig.~\ref{fig:emb_type_valid} shows some valid path structures to embed an edge $\gamma \in \compedges$ in $\net.$ In the structures shown in  Figs.~\ref{fig:emb_type_valid}b and c, the function $\theta$ is computed only once (by node $a$) but used at two different nodes to compute the same successor function. Such an embedding is shown in Fig.~\ref{fig:example}d of Example~\ref{ex:embedding}. Similarly, in embedding structure of Fig.~\ref{fig:emb_type_valid}d function $\theta$ is computed at two nodes and used by two different nodes in $\net.$ 

In any valid embedding same symbol of any function $\theta$ should not be carried by an edge in $\net$ multiple times or received by a node multiple times (item~4,5 of Definition~\ref{def:embedding}). Figs.~\ref{fig:emb_type_invalid}b,c,d correspond to the structures in which the function $\theta$ is carried multiple times by an edge (edge $(c,d)$ in Figs.~\ref{fig:emb_type_invalid}b,c) or received multiple times by a node (node $c$ in Fig.~\ref{fig:emb_type_invalid}d). These structures will not occur in any valid embedding.

 \begin{figure}[h]
 \begin{center}
    \resizebox{0.5\textwidth}{!}{
        \subfloat[]{
\begin{tikzpicture}[>=latex,scale=0.5]
\scriptsize
 \tikzstyle{every node} = [circle,draw=black]
 \tiny
 \fill (0,1) circle (0.07cm);
 \fill (0,-1) circle (0.07cm);
 \draw[->] (0,1) -- (0,-1) node [draw=none,midway,right] {$\gamma$};
 \node [draw=none] at (0.5,-0.5) {};
 \end{tikzpicture}
}
\hspace{10pt}
  \subfloat[]{
\begin{tikzpicture}[>=latex,scale=0.5]
\tiny
 \tikzstyle{every node} = [circle,draw=black]
 \fill (0,1) circle (0.07cm);
 \fill (-0.5,-1) circle (0.07cm);
 \fill(0.5,-1) circle (0.07cm);
 \fill (0,0) circle (0.07cm);
 \draw[->] (0,1) --(0,0) node [draw=none,pos=0,right] {$a$};
 \draw[->] (0,0) --(-0.5,-1) node [draw=none,pos=0,right] {$b$} node [draw=none,pos=1,left] {$c$};
 \draw [->] (0,0) --(0.5,-1) node [draw=none,pos=1,right] {$d$};
 \end{tikzpicture}
}
\hspace{10pt}
  \subfloat[]{
\begin{tikzpicture}[>=latex,scale=0.5]
\tiny
 \tikzstyle{every node} = [circle,draw=black]
 \fill (0,1) circle (0.07cm);
 \fill (-0.5,-1) circle (0.07cm);
 \fill(0.5,-1) circle (0.07cm);
 \draw[->] (0,1) --(-0.5,-1) node [draw=none,pos=0,right] {$a$} node [draw=none,pos=1,left] {$b$};
 \draw [->] (0,1) --(0.5,-1) node [draw=none,pos=1,right] {$c$};
 \end{tikzpicture}
}
\hspace{10pt}
 \subfloat[]{
\begin{tikzpicture}[>=latex,scale=0.5]
\tiny
 \tikzstyle{every node} = [circle,draw=black]
 \fill (-0.5,1) circle (0.07cm);
 \fill (-0.5,-1) circle (0.07cm);
 \fill(0.5,1) circle (0.07cm);
 \fill (0.5,-1) circle (0.07cm);
 \draw[->] (-0.5,1) --(-0.5,-1)node [draw=none,pos=0,right] {$a$} node [draw=none,pos=1,left] {$b$};
 \draw[->] (0.5,1) --(0.5,-1)node [draw=none,pos=0,right] {$c$} node [draw=none,pos=1,left] {$d$};
 \end{tikzpicture}
}
}
   \end{center}
    \caption{An edge in $\compgraph$ and structures of its valid embedding (a) An edge $\gamma$ in $\compgraph$ (b) $\embedding(\gamma) = \{abc,abd\}$ (c) $\embedding(\gamma) = \{ab,ac\}$ (d) $\embedding(\gamma) = \{ab,cd\}$}
    \label{fig:emb_type_valid}
  \end{figure}
\begin{figure}[h]
 \begin{center}
    \resizebox{0.5\textwidth}{!}{
      \subfloat[]{
\begin{tikzpicture}[>=latex,scale=0.5]
\scriptsize
 \tikzstyle{every node} = [circle,draw=black]
 \tiny
 \fill (0,1) circle (0.07cm);
 \fill (0,-1) circle (0.07cm);
 \draw[->] (0,1) -- (0,-1) node [draw=none,midway,right] {$\gamma$};
 \end{tikzpicture}
}
\hspace{10pt}
\subfloat[]{
\begin{tikzpicture}[>=latex,scale=0.5]
\tiny
 \tikzstyle{every node} = [circle,draw=black]
 \fill (-0.5,1) circle (0.07cm);
 \fill (-0.5,-1) circle (0.07cm);
 \fill(0.5,1) circle (0.07cm);
 \fill (0.5,-1) circle (0.07cm);
 \fill (0,0.25) circle (0.07cm);
 \fill (0,-0.25) circle (0.07cm);
 \draw[->] (-0.5,1) --(0,0.25)node [draw=none,pos=0,left] {$a$} node [draw=none,pos=1,left] {$c$};
 \draw[->] (0.5,1) --(0,0.25)node [draw=none,pos=0,right] {$b$} ;
 \draw[->] (0,0.25) -- (0,-0.25) node [draw=none,pos=1,left] {$d$};
 \draw[->] (0,-0.25) -- (-0.5,-1) node [draw=none,pos=1,left] {$e$};
 \draw[->] (0,-0.25) --(0.5,-1) node [draw=none,pos=1,right] {$f$};
 \end{tikzpicture}
}
\hspace{10pt}
\subfloat[]{
\begin{tikzpicture}[>=latex,scale=0.5]
\scriptsize
 \tikzstyle{every node} = [circle,draw=black]
 \fill (-0.5,1) circle (0.07cm);
 \fill (0.5,1) circle (0.07cm);
 \fill(0,0) circle (0.07cm);
 \fill (0,-1) circle (0.07cm);
 \draw[->] (-0.5,1) --(0,0)node [draw=none,pos=0,left] {$a$} node [draw=none,pos=1,left] {$c$};
 \draw[->] (0.5,1) --(0,0) node [draw=none,pos=0,right] {$b$};
 \draw[->] (0,0) -- (0,-1)node [draw=none,pos=1,left] {$d$};
 \end{tikzpicture}
}
\hspace{10pt}
\subfloat[]{
\begin{tikzpicture}[>=latex,scale=0.5]
\scriptsize
 \tikzstyle{every node} = [circle,draw=black]
 \fill (-0.5,1) circle (0.07cm);
 \fill (0.5,1) circle (0.07cm);
 \fill (0,-1) circle (0.07cm);
 \draw[->] (-0.5,1) --(0,-1) node [draw=none,pos=0,left] {$a$} node [draw=none,pos=1,left] {$c$};
 \draw[->] (0.5,1) -- (0,-1)node [draw=none,pos=0,right] {$b$};
 \end{tikzpicture}
}
   }
   \end{center}
    \caption{An edge in $\compgraph$ and structures of its invalid embedding (a) An edge $\gamma$ in $\compgraph$  (b) $\embedding(\gamma) = \{acde,bcdf\}$ (c) $\embedding(\gamma) =\{acd,bcd\}$ (d) $\embedding(\gamma) = \{ac,bc\}$ }
    \label{fig:emb_type_invalid}
  \end{figure}

\section{Proof of lemmas of Section~\ref{sec:step2}}
\label{app:lemmas}

\subsection{Proof of Lemma~\ref{lm:valid_dag}}

 \begin{enumerate}
  \item Observe that each source vertex of type $S_{ixy}^{*}$ in Fig.~\ref{fig:diamond_gadget}(b) has exactly one outgoing edge of weight $4$ and $\omega_p$ has only incoming edges.
  \item This directly follows from Figs.~\ref{fig:diamond_gadget}(b) and ~\ref{fig:edge_gadget}(b).
  \item First observe that the graph shown in Fig.~\ref{fig:diamond_gadget}(b) has no directed cycles. Moreover the gadget of Fig.~\ref{fig:edge_gadget}(b) does not add any directed cycle as well. This shows that every gadget which replaces an edge $(x,y) \in E_{H}$ is a DAG. Observe that any vertex $x \in V_{H}$ is a part of exactly three such gadgets (each for one of its edges). Thus $x$ has incoming edges from $6$ sources and has outgoing edges to the intermediate vertices inside these gadgets. All the intermediate vertices of a gadget finally go to the sink $\omega_p.$ There are no edges across these gadget thus ensuring that the whole $\compgraph$ is also a DAG.
  \item Every source vertex has exactly one outgoing edge of weight $4$ and every intermediate vertex, i.e., $a_{xy},b_{xy},c_{xy},d_{xy},$ of the gadget has exactly $2$ outgoing edges. Every vertex $x \in V_{H}$ is a part of exactly three gadgets thus has exactly $6$ outgoing edges (two from each gadget). 
  \item All outgoing edges of any source have weight $4.$ Every vertex $x \in V_{H}$ in Fig.~\ref{fig:diamond_gadget}(b) has six outgoing edges of weight one thus after applying the gadget of Fig.~\ref{fig:edge_gadget}(b), it has six outgoing edges of weight $6 \times 1 + 1 = 7.$ Similarly the intermediate vertices have two outgoing edges of weight $2\times 4 +1 =9.$ Thus every edge has bounded weight and the maximum weight of any edge is $9.$
   \end{enumerate}

\subsection{Proof of Lemma~\ref{lm:edgewt}}

 Let $\embedding$ be the minimum cost embedding of $\compgraph$ on $\net$ of cost $C$ in which one (or more) edge of weight $z$ from the gadget Fig.~\ref{fig:edge_gadget}(b) is exposed. In other words, in embedding $\embedding$ some $u_i^{'}$ is mapped to a vertex in $\net$ to which $u$ is not mapped. We modify $\embedding$ by mapping $u_i^{'}$ to the vertex where $u$ is mapped. The modified embedding $\embedding'$ always has cost lesser than the cost of $\embedding$ which contradicts the fact that $\embedding$ is the minimum cost embedding. We explain one such case in detail below.
 \begin{enumerate}
  \item Consider the case when $\embedding(u) = \alpha, \embedding(u_1^{'}) = \embedding(u_2^{'}) = \beta, \embedding(u_1) = \gamma$ and $\embedding(u_2) = \delta.$ In other words, only one of the weight $z$ edge is exposed but both the edges of weight $l_1$ and $l_2$ are exposed. Let $y(\alpha,\beta) = y_1, y(\beta,\gamma) = y_2$ and $y(\beta,\delta) = y_3.$ Then the cost of embedding $\embedding$ coming from this structure is $C = y_1 z + y_2 l_1 + y_3 l_2.$ Now consider the embedding $\embedding'$ where $u_1^{'},u_2^{'}$ are mapped to $\alpha$ keeping all the other vertices at the same location as $\embedding.$ Note that $y(\alpha,\gamma) \leq y_1+y_2$ and $y(\alpha,\delta) \leq y_1+y_3.$ The cost of $\embedding'$ is $C'\leq (y_1+y_2)l_1 + (y_1+y_3)l_2 \leq 2y_1 \max(l_1,l_2) + y_2l_1 + y_3l_2 < y_1z + y_2l_1+ y_3l_2 = C.$ Thus we have an embedding $\embedding'$ where none of the weight $z$ edge is exposed and has cost strictly less than that of $\embedding.$
  \end{enumerate}
The embedding $\embedding'$ and its cost can be computed in the similar manner for other cases of the mappings of various vertices with $C' < C.$ 

\subsection{Proof of Lemma~\ref{lm:mincost_rembedding}}
 A $3$-way multiterminal cut of a graph is the problem of partitioning the vertices into three parts such that each part has exactly one terminal and the weight of the multiterminal cut (defined as the sum of the weights of edges across the parts) is minimized. 
 
 Recall that the network graph $\net$ created in Theorem~\ref{thm:flow-to-cost} is a complete graph on three vertices, namely $S_1,S_2,t,$ with unit edge weights. We create an embedding $\embedding$ of the gadget from a $3$-way cut with weight $W$ of Fig.~\ref{fig:diamond_gadget}(a) as follows: Map the vertices which are with $S_{1xy}$ in the cut to $S_1$ in the embedding. Similarly map a vertex to $S_2$ or $t$ if it is with $S_{2xy}$ or $\omega_p$ in the cut, respectively. Map the intermediate vertices $u_1^{'},\ldots,u_2^{'}$ of Fig.~\ref{fig:edge_gadget}(b) to wherever $u$ is being mapped by the earlier step. It is easy to observe that $\embedding$ is a valid embedding of the gadget.
 
Now we show that the cost of $\embedding$ is $W.$
 Recall that the cost of an embedding is defined by Equation~\eqref{eq:orig_cost} and an edge of the gadget is said to be exposed if its weight is counted in computing the cost of the embedding. In the following arguments we show that an edge of Fig.~\ref{fig:diamond_gadget}(b) is exposed in the embedding iff the corresponding edge of Fig.~\ref{fig:diamond_gadget}(a) is in the cut.

\begin{enumerate}
 \item Consider an edge $(S_{1xy},*)$ of Fig.~\ref{fig:diamond_gadget}(a). If it is in the cut then its end points, i.e., $S_{1xy}$ and $*,$ are in two separate partitions. This in turn implies that the vertex $*$ of Fig.~\ref{fig:diamond_gadget}(b) is not mapped to $S_1$ in embedding $\embedding$ and the edge $(S_{1xy}^{*},*)$ is exposed in $\embedding.$ Similarly, if an edge $(S_{2xy},*)$ of Fig.~\ref{fig:diamond_gadget}(a) is in the cut then the corresponding edge $(S_{2xy}^{*},*)$ of Fig.~\ref{fig:diamond_gadget}(b) is exposed in $\embedding.$ Note that weights of $(S_{ixy},*)$ (Fig.~\ref{fig:diamond_gadget}(a)) and $(S_{ixy}^{*},*)$ (Fig.~\ref{fig:diamond_gadget}(b)) for $i \in \{1,2\}$ are same thus contributing to the same weight in the cut as well as the cost of $\embedding.$
 
 \item Now consider the edges $(x,a_{xy})$ and $(x,c_{xy})$ of Fig.~\ref{fig:diamond_gadget}(a). If both the edges are in the cut then there are two possibilities: either $x,a_{xy},c_{xy}$ all are in separate partitions or $x$ is in one partition but $a_{xy},c_{xy}$ are together in different partition. Observe the corresponding edges in Fig.~\ref{fig:diamond_gadget}(b). They are replaced by the structure of Fig.~\ref{fig:edge_gadget}(b) with $a_{xy}^{'},c_{xy}^{'}$ as intermediate vertices between $x$ and $a_{xy},c_{xy}$ respectively. Note that under embedding $\embedding,$ vertices $a_{xy}^{'},c_{xy}^{'}$ are mapped wherever $x$ is mapped and $a_{xy},c_{xy}$ are mapped to either different or same vertices (depending on them being in different or same partitions in the cut). In either case the edges $(a_{xy}^{'},a_{xy})$ and $(c_{xy}^{'},c_{xy})$ are exposed in the embedding if $(x,a_{xy})$ and $(x,c_{xy})$ of Fig.~\ref{fig:diamond_gadget}(a) are in the cut thus contributing to the same weight in $\embedding$'s cost. Same argument holds for all the outgoing edges from vertices $x,y,a_{xy},b_{xy},c_{xy},d_{xy}$ of Fig.~\ref{fig:diamond_gadget}(b).
 
 \item Finally note that an edge of Fig.~\ref{fig:diamond_gadget}(b) is exposed only if its end points are mapped to different vertices in $\embedding$ which in turn implies that the corresponding edge of Fig.~\ref{fig:diamond_gadget}(a) is in cut. The weight $z$ edges of Fig.~\ref{fig:edge_gadget}(b) are never exposed in $\embedding$ as their endpoints are always mapped to same vertex in $\embedding.$ 
 
\end{enumerate}
 This proves that the cost of $\embedding$ is indeed $W$ which is same as the weight of the $3$-way cut.

\subsection{Proof of Lemma~\ref{lm:cost_cut}}

Recall that for every edge $(x,y) \in E_{H}$ there is a gadget of Fig.~\ref{fig:diamond_gadget}(b) (along with Fig.~\ref{fig:edge_gadget}(b)) in $\compgraph$ and the network graph $\net$ has only three vertices. Given an embedding $\embedding$ with multiple mappings for a vertex we construct the embedding $\embedding'$ with single mapping in the following steps. 

\begin{enumerate}
 \item If any intermediate vertex of Fig.~\ref{fig:edge_gadget}(b), i.e., $u_1^{'},\ldots,u_h^{'},$ is mapped to multiple vertices then in $\embedding'$ map all its copies to wherever $u$ is mapped in $\embedding$ keeping the rest of the vertices at the same place. This will only reduce the cost of the resulting embedding.
 \item Observe that the vertices $b_{xy},c_{xy}$ of Fig.~\ref{fig:diamond_gadget}(b) have only one outgoing edge which is going to $\omega_p.$ As the mapping of $\omega_p$ is fixed to $t \in \netnodes$  in any valid embedding, the outputs of $b_{xy},c_{xy}$ are required only at one vertex in the embedding. Thus, the operations performed at these nodes cannot be performed at multiple vertices in the network graph and $b_{xy},c_{xy}$ are not mapped to multiple vertices in any valid embedding. 
 \item Consider the vertex $a_{xy}$ \footnote{$a_{xy}$ has outgoing edges to $\omega_p,b_{xy}$ and both are mapped to only one vertex under a valid embedding.} and let it be mapped to two vertices in $\netnodes$ under embedding $\embedding.$ There are three possible mappings of $a_{xy}$ in this case and we show that in each case mapping it to only one of the vertices brings down the cost of the embedding. 
 \begin{enumerate}
  \item Let $a_{xy}$ be mapped to $S_2$ and $t$ under embedding $\embedding.$ Create an embedding $\embedding'$ where $a_{xy}$ is mapped to only $t$ keeping the mapping of all the vertices same as that of $\embedding.$ Then, $C(\embedding') < C(\embedding) - w(S_{1xy}^{a},a_{xy}) y(S_1,S_2) + w(a_{xy},b_{xy}) y(S_2,t) = C(\embedding) - 4+1 < C(\embedding).$
  \item Let $a_{xy}$ be mapped to $S_1$ and $t$ under $\embedding.$ Create the embedding $\embedding'$ where $a_{xy}$ is mapped only to $S_1$ keeping the mapping of all the vertices same as that of $\embedding.$ Then, $C(\embedding') = C(\embedding) - w(S_{1xy}^{a},a_{xy}) y(S_1,t) + w(a_{xy},\omega_p) y(S_1,t) = C(\embedding) -4+4.$
  \item Let $a_{xy}$ be mapped to $S_1$ and $S_2$ under $\embedding.$ Create the embedding $\embedding'$ where $a_{xy}$ is mapped only to $S_1$ keeping the mapping of all the vertices same as that of $\embedding.$ It is easy to observe that $C(\embedding') \leq C(\embedding) -3$ in this case. 
 \end{enumerate}
 The vertex $d_{xy}$ can also be mapped only to one vertex by similar arguments.
\item Now consider the vertex $x$ in the $(x,y)$ gadget. Since $x$ has two outgoing neighbors in this gadget (namely $a_{xy},c_{xy}$) and each of them can be mapped to only one vertex, $x$ in turn can be mapped to at most two vertices for this gadget. We create the embedding $\embedding'$ of reduced cost as follows.
 \begin{enumerate}
  \item Let $x$ be mapped to $S_1$ and $S_2$ under $\embedding$ for this gadget. Then create the embedding $\embedding'$ where $x$ is mapped only to $S_1$ keeping the mapping of all the vertices same as that of $\embedding.$ Then, $C(\embedding') = C(\embedding) -  w(S_{1xy}^{x},x) y(S_1,S_2) + w(x,a_{xy})y(S_1,S_2) = C(\embedding) -4+1 < C(\embedding).$
  \item Let $x$ be mapped to $S_1$ and $t$ under $\embedding.$ Create $\embedding'$ where $x$ is mapped to $S_1$ keeping the mapping of all the vertices same as that of $\embedding.$ It is easy to observe that $C(\embedding') \leq C(\embedding) -4 -4 +2 < C(\embedding).$ Similarly if $x$ is mapped to $S_2$ and $t$ then get new embedding by mapping it to $S_2.$
 \end{enumerate}
 In this way for any edge $(x,y)$ each vertex of the corresponding gadget can be mapped to only one vertex in $\embedding'$ and $C(\embedding') \leq C(\embedding).$
 \item Recall that every $x \in V_H$ has three edges in $H,$ thus $x$ is a part of three gadgets. Till now we have made sure that individually for each gadget $x$ is mapped to only one vertex of $\net$ but it is possible that it is mapped to more than one vertex across the gadgets. Let $(x,y)$ and $(x,z)$ be two edges for whose gadgets $x$ is mapped to separate vertices in $\embedding.$ Let $x$ be mapped to $S_1$ for $(x,y)$ gadget and to $S_2$ for $(x,z)$ gadget. Create the embedding $\embedding'$ where $x$ is mapped to $S_1$ for $(x,z)$ gadget keeping the mapping of all the other vertices same as that of $\embedding.$ Observe that in embedding $\embedding$ to compute $x$ at $S_1$ edges $(S_{2xz}^x,x),(S_{2xy}^x,x)$ and to compute it at $S_2$ edges $(S_{1xz}^x,x),(S_{1xy}^x,x)$ are exposed. While in $\embedding'$ as $x$ is computed only at $S_1$ the edges $(S_{1xz}^x,x),(S_{1xy}^x,x)$ will not be exposed thus reducing the cost of embedding by $8.$ At the same time, at most the outgoing edges of $x$ from $(x,z)$ gadget, i.e., $(x,a_{xz})(y,c_{xz}),$ might get exposed. Thus $C(\embedding') < C(\embedding) -8+2.$
\end{enumerate}

In this way we get an embedding $\embedding'$ in which each vertex of $\compgraph$ is mapped to only one vertex of $\net$ and has cost at most that of $\embedding.$

\end{document}